\documentclass[11pt]{amsart}
\usepackage{amsmath, amsthm, amssymb, amsfonts, enumerate}
\usepackage[colorlinks=true,linkcolor=blue,urlcolor=blue]{hyperref}

\usepackage[normalem]{ulem}
\usepackage{color}
\usepackage{geometry}
\usepackage{soul}
\usepackage{graphicx}
\usepackage{bm}
\usepackage{caption}
\usepackage{subcaption}
\usepackage[authoryear]{natbib}

\geometry{hmargin=2.5cm, vmargin=2.5cm}
\captionsetup[subfigure]{labelfont=rm}

\allowdisplaybreaks[1]

\newtheorem{theorem}{Theorem}[section]
\newtheorem{remark}[theorem]{Remark}
\newtheorem{assumption}[theorem]{Assumption}
\newtheorem{lemma}[theorem]{Lemma}
\newtheorem{proposition}[theorem]{Proposition}
\newtheorem{corollary}[theorem]{Corollary}

\DeclareMathOperator{\interior}{int}

\def \cC{{\mathcal C}}
\def \cD{{\mathcal D}}
\def \cE{{\mathcal E}}
\def \cF{{\mathcal F}}

\def \cI{{\mathcal I}}
\def \cK{{\mathcal K}}
\def \cL{{\mathcal L}}
\def \cO{{\mathcal O}}

\def \cR{{\mathcal R}}

\def \E{\mathsf{E}}
\def \P{\mathsf{P}}
\def \R{\mathbb{R}}

\def \N{\mathbb{N}}

\def \tc{\tilde{c}}
\def \eps{\varepsilon}

\def \ss{\mathring{\sigma}}
\def \d{{\mathrm d}}
\def \cDp{{\cK}}
\def \cCp{{\mathcal A}}

\DeclareMathOperator*{\esssup}{ess\,sup}
\def \mathds {{\bf}}
\newcommand{\rom}[1]{\uppercase\expandafter{\romannumeral #1\relax}}

\definecolor{brightmaroon}{rgb}{0.7, 0.23, 0.2}

\title[The American put stochastic interest rate]{The American put with finite-time maturity \\ and stochastic interest rate}
\author[C.~Cai]{Cheng Cai}
\author[T.~De Angelis]{Tiziano De Angelis}
\author[J.~Palczewski]{Jan Palczewski}
\keywords{American put option, stochastic interest rate, free boundary problems, integral equation}
\address{C.~Cai: School of Mathematics, University of Leeds, Woodhouse Lane, LS2 9JT Leeds, UK.}
\email{\href{mailto:mmcca@leeds.ac.uk}{mmcca@leeds.ac.uk}}
\address{T.~De Angelis: School of Management and Economics, Dept.\ ESOMAS, University of Turin, C.so Unione Sovietica 218bis, 10134, Turin, ITALY. Collegio Carlo Alberto, P.za Arbarello 8, 10122, Turin, ITALY.}
\email{\href{mailto:tiziano.deangelis@unito.it}{tiziano.deangelis@unito.it}}
\address{J.~Palczewski: School of Mathematics, University of Leeds, Woodhouse Lane, LS2 9JT Leeds, UK.}
\email{\href{mailto:j.palczewski@leeds.ac.uk}{j.palczewski@leeds.ac.uk}}
\date{\today}

\numberwithin{equation}{section}

\begin{document}

\begin{abstract}
In this paper we study pricing of American put options on a non dividend-paying stock in the Black and Scholes market with a stochastic interest rate and finite-time maturity. We prove that the option value is a $C^1$ function of the initial time, interest rate and stock price. By means of It\^o calculus we rigorously derive the option value's early exercise premium formula and the associated hedging portfolio. We prove the existence of an optimal exercise boundary splitting the state space into {\em continuation} and {\em stopping} region. The boundary has a parametrisation as a jointly continuous function of time and stock price, and it is the unique solution to an integral equation which we compute numerically. Our results hold for a large class of interest rate models including CIR and Vasicek models. We show a numerical study of the option price and the optimal exercise boundary for Vasicek model.
\end{abstract}

\maketitle

\section{Introduction}
Pricing of American options is a classical problem in mathematical finance which has attracted continuous attention since the initial work of \citet{mckean1965free}. Its study has also become a benchmark for methodological developments of optimal stopping theory and the associated free boundary problems. In this paper we contribute to this strand of research by studying the American put option on a Black and Scholes market with a stochastic interest rate and finite-time maturity. The stock price and the interest rate are driven by (possibly) correlated Brownian motions and we make minimal assumptions about the dynamics of the interest rate under the pricing measure: the coefficients are time independent and Lipschitz continuous. CIR model, which does not satisfy these conditions, is also included in our analysis.

It is well known \citep{bensoussan1984theory, karatzas1988pricing} that the American put option price is given by the \emph{value function} of a related optimal stopping problem. In our model, this optimal stopping problem is 3-dimensional with 2-dimensional diffusive dynamics (stock price and interest rate) and time. The stopping set, i.e., the set of points $(t,r,x)$ in which it is optimal to exercise the option, is separated from the continuation set, where it is optimal to hold (or sell) the option, by a single surface called the \emph{stopping boundary}. The value function is a classical solution to a PDE in the interior of the continuation set, i.e., it is twice continuously differentiable in $(r,x)$ and once continuously differentiable in $t$, whereas it coincides with the put payoff in the stopping set. 

One of our technical contributions is to establish by means of probabilistic methods that the value function is globally once continuously differentiable in all variables. Then, the continuity of the gradient of the value function permits the application of a generalisation of It\^{o}'s formula (due to \cite{cai2021b}) and a rigorous derivation of a hedging portfolio. The hedging portfolio invests in three instruments: the money market (savings) account, the zero-coupon bond with maturity equal to the maturity of the option and the stock. We show that the usual Delta hedging strategy is optimal: the positions in the bond and the stock are given by relevant partial derivatives of the value function. As a further consequence of the generalised It\^o's formula we also derive the decomposition of the American option price as the sum of the price of a European put option with the same maturity and the same exercise price, and an \emph{early exercise premium}. This is known in the literature as the {\em early exercise premium formula}, which corresponds to Doob's decomposition of supermartingales into a martingale and a non-increasing process (applied here to the Snell envelope of the optimal stopping problem).

Our second contribution concerns the continuity properties of the stopping boundary in our model, which have not been established in the literature. We are able to demonstrate that the stopping boundary, when parametrised as a function of $(t,x)$, is continuous. Apart from being of interest in its own right, this enables a characterisation of the stopping boundary as the unique continuous solution of an integral equation arising from the early exercise premium decomposition. 
When a stopping boundary is known, efficient numerical methods are at disposal for computation of the option price. One can use Monte Carlo methods based on the early exercise decomposition or classical PDE methods for Cauchy problems (in contrast to the original problem with a free boundary). 

American option pricing with stochastic interest rates has already attracted a lot of attention in the literature, mainly focusing on approximations and numerical methods. Lattice (tree) based methods are employed by \cite{appolloni2015robust} to price options in Black and Scholes model with CIR interest rate dynamics and by \cite{battauz2019american} in a model with Vasicek interest rates. \cite{geske1984american}'s approximation of discretely exercised American options prices is adapted by \cite{ho1997valuation} and \cite{Chung2000} to a class of stochastic interest rate models that lead to log-normally distributed bond prices. An alternative approximation is provided by \cite{menkveld2000pricing}. A framework for option pricing with Heath, Jarrow, Morton's bond market model \citep{heath1992bond} is developed by \cite{amin1992pricing} with a binomial-tree-based implementation of pricing of foreign exchange options performed in \cite{amin1995discrete}. 

\cite{detemple2002valuation} study the pricing of American options in a general diffusive model with a $d$-dimensional Brownian motion. They formulate assumptions under which there is a single exercise surface but without proving its continuity. In a Black and Scholes market model with Vasicek interest rates they show that this exercise boundary solves an integral equation of the same form as in this paper. The uniqueness of solutions to this integral equation is not discussed and their numerical method for computing the solution is different to ours.

Hedging underlies the success of mathematical finance in derivatives markets. A rigorous theory that links hedging of American options with solutions of optimal stopping problems was initiated by \cite{bensoussan1984theory} using PDE methods and extended by \cite{karatzas1988pricing} to more general models and payoffs thanks to the martingale theory of optimal stopping. A hedging strategy for an American option consists of an investment portfolio and a non-decreasing cumulative consumption process which increases only when the state-time process is in the stopping set. In the Black and Scholes model with constant interest rate the classical Delta hedge is known to replicate the option \citep[Thm.\ 7.9, Ch.\ 2]{karatzas1998methods}. This paper seems to be the first to rigorously derive the hedging strategy for American put options on a market with a stochastic interest rate. This is accomplished thanks to the $C^1$-regularity of the value function that we are able to prove and which did not appear in previous works. 

A characterisation of an optimal stopping boundary as solution to a (system of) integral equations has been known since the earliest works (see \cite{van1976optimal}). In more recent works \cite{kim1990analytic, jacka1991optimal, carr1992alternative, myneni1992pricing}, the stopping boundary for the classical Black and Scholes market with constant interest rate is shown to be the unique solution to an uncountable {\em system} of integral equations arising from the early exercise premium decomposition of the option price. A break-through came with the work of \cite{peskir2005american} where he shows that the stopping boundary is the unique continuous solution of a single integral equation. His key observation is that the integral equation only needs to be satisfied for stock prices at the boundary while earlier results required that it does so for all stock prices at and below the boundary. \cite{peskir2005american}'s integral equation opens doors to side-stepping the computation of the value function in the process of determining the optimal exercise strategy; see numerical methods designed in \cite{Little2000, Kim2013}. Our paper extends \cite{peskir2005american}'s results to the market with a stochastic interest rate and the optimal boundary being a two-dimensional surface. It is also the continuity of the boundary that allows us to establish the uniqueness of solutions to the integral equation. A closely related paper that furthermore motivated our numerical approach is \cite{Kitarbayev2018} where the authors solve an integral equation for Black and Scholes market with stochastic volatility.

The regularity of the value function in one-dimensional optimal stopping problems is often phrased as {\em smooth-fit} and plays a major role in determining explicit solutions. In a Black and Scholes model with constant interest rate, smooth-fit for American options with finite-time maturity is understood as continuous differentiability of the value function with respect to the stock price, for each fixed value of the time variable (see \cite{jacka1991optimal} and subsequent works). That is a `directional' derivative and continuity is only considered with respect to one variable. Sobolev space regularity is studied in \cite{jaillet1990variational} for American options on multiple assets and {\em deterministic}, time-dependent discount rate under the assumption of uniform ellipticity of the associated second order differential operator. By Sobolev embedding it is possible to determine continuous differentiability of the value function with respect to the initial values of all the assets but not with respect to time. Continuous differentiability with respect to time and stock price for the value of the American put with finite-time maturity and constant interest rate is obtained in \cite{de2018global} along with other complementary findings about continuous differentiability of the value function for a large class of optimal stopping problems. In this paper, we refine the arguments from \cite{de2018global} and remove global integrability conditions that may not hold in our set-up.

The early exercise premium formula for American options was studied in great generality, in non-Markovian problems beyond the setting of the American put option by \cite{rutkowski1994early} with methods from martingale theory. The nature of the methods employed in \cite{rutkowski1994early} to derive his main results is such that the emphasis is removed from the optimal boundary, which in fact only appears in specific examples (see Sec.\ 3 of that paper) as a time-dependent function. Here instead we derive the early exercise premium formula starting from the analysis of the optimal boundary (and its regularity) as a function of time and one stochastic factor from our two-factor model.

Some of the ideas in this paper find wider applicability in optimal stopping theory. The generalisation of It\^o's formula that we use to find the hedging portfolio has natural applications to other optimal stopping problems as discussed extensively in Section 3 of \cite{cai2021b}. Localisation of the arguments from \cite{de2018global} to prove continuous differentiability of the value function does not rely much on the specific structure of our problem and suggests a general recipe to address the issue. Finally, our ideas for the continuity of the optimal stopping surface have been expanded upon to cover more general settings in \cite{cai2021a}.

The paper is structured as follows. Section \ref{sec:formulation} introduces the market model, main assumptions and notation. The main contributions are discussed in Section \ref{sec:main} while their proofs are delayed until after Section \ref{sec:numerics}. A numerical study with interest rates following Vasicek model is presented in Section \ref{sec:numerics} along with a sensitivity analysis.  Monotonicity and Lipschitz continuity of the value function is proved in Section \ref{sec:lipschitz}. Existence of the stopping surface and its regularity (in the sense of diffusions) are shown in Section \ref{sec:properties_boundary}. In Section \ref{sec:cont} we prove that the value function is continuously differentiable on the whole domain. Auxiliary estimates needed for admissibility of the hedging strategy are provided in Section \ref{sec:hedge}. Three appendices contain further details.

\section{Problem formulation}\label{sec:formulation}

Let $(\Omega, \cF, \P)$ be a complete probability space carrying two correlated Brownian motions $(B_t)_{t\ge0}$ and $(W_t)_{t\ge0}$ with $\E(W_tB_t)=\rho\, t$ for all $t\ge0$ and a fixed $\rho\in(-1,1)$  (here $\E(\,\cdot\,)$ is the expectation under $\P$). We denote by $(\cF_t)_{t\ge0}$ the filtration generated by $(B,W)$ augmented with the $\P$-null sets.  On this probability space we consider a financial market with one risky asset $(X_t)_{t \ge 0}$ and a bond. The asset and the risk-free (short) rate $(r_t)_{t \ge 0}$ take values, respectively, in intervals $\R_+ := (0, \infty)$ and $\cI\subseteq \R$, and follow the dynamics
\begin{align}
\label{eq:X} &d X_t=r_t X_t dt+\sigma X_tdB_t,\qquad X_0=x,\\[+4pt]
\label{eq:r} &d r_t=\alpha(r_t)dt+\beta(r_t)dW_t,\qquad r_0=r,
\end{align}
with $\alpha,\beta:\cI\to\R$ specified below. The probability measure $\P$ is a risk neutral measure for this market. We denote by $T > 0$ a fixed finite trading horizon.

Throughout the paper we assume $\sigma>0$ and $\cI=(\underline r,\overline r)$ (with $\cI$ possibly unbounded). The right boundary $\overline r$ is unattainable in a finite time (it is a natural or entrance-not-exit boundary). The left boundary $\underline r$ is either unattainable or reflecting. It will become clear later that the exact behaviour of the interest rate process at this boundary is irrelevant for the majority of results and their proofs. For the dynamics of the interest rate our benchmark example is the CIR model, but, with a relatively small additional effort, our results cover other stochastic interest rate models, e.g., Vasicek model. Therefore, we make the following standing assumption:
\begin{assumption}\label{ass:coef}
The coefficients $\alpha$ and $\beta$ in \eqref{eq:r} meet one of the conditions below:
\begin{itemize}
\item[(i)] \emph{(CIR model)} For $\kappa,\theta,\gamma>0$ we have $\alpha(r)=\kappa(\theta-r)$ and $\beta(r)=\gamma\sqrt{r}$.
\item[(ii)] $\alpha$ and $\beta$ are globally Lipschitz and continuously differentiable on bounded subsets of $\cI$ with $\beta(r)>0$ for all $r\in\cI$, and $\overline r > 0 \ge \underline r$. 
For any compact set $\cK\subset\cI$, and any $p\in[1,p']$ for some $p' > 2$ and $T>0$, there is $C_1>0$ (depending on $T$, $p$ and $\cK$) such that
\begin{align}\label{eq:integr}
\sup_{r \in \cK} \E\left[\sup_{0\le s\le T }e^{-p\int_0^s r_udu}\Big| r_0 = r\right]\le C_1.
\end{align}
\end{itemize}
\end{assumption}
The assumption that $\overline r > 0$ cannot be relaxed without trivialising the pricing problem. A strictly positive lower boundary $\underline r$ could, however, be of interest. For the clarity of presentation, it is omitted but it can be studied with similar methods as those developed in this paper.

The above assumptions are sufficient to guarantee that \eqref{eq:r} admits a unique strong solution defined on $\cI$. In the case of CIR model, we also have $\kappa\theta>0$ which implies that the spot rate is non-negative (but not necessarily strictly positive), see e.g.~\cite[Sec.~6.3.1]{jeanblanc2009mathematical}, so the left boundary $\underline r = 0$ is reflecting (also non-attainable if $\kappa \theta > \sigma^2/2$). Hence, the bound \eqref{eq:integr} is satisfied with the constant $C_1=1$. The linear growth of $\alpha$ and $\beta$ in \eqref{eq:r} guarantees that for each $p\ge 2$ there is $C_2>0$ only depending on $T$ and $p$, such that \citep[Ch.~2, Sec.~5, Thm.~9]{krylov}
\begin{align}\label{eq:subg}
\E\big[\sup_{0\le s\le T}|r_s|^p\ \big|\, r_0 = r\big]\le C_2(1+|r|^p),\qquad\text{for $r\in\cI$}.
\end{align}

Under Assumption \ref{ass:coef}, the solution of \eqref{eq:X} may be expressed as
\begin{align}\label{eq:X2}
X_t=x\exp\left(\sigma B_t+\int_0^t\big(r_s-\tfrac{\sigma^2}{2}\big)ds\right), \qquad\text{for $t\ge0$,}
\end{align}
so that $X$ depends on both initial values $r$ and $x$. On the contrary, the dynamics of the interest rate does not depend on the initial asset value. The coupling between the processes $(r_t)_{t\ge0}$ and $(X_t)_{t\ge0}$ stems from formula \eqref{eq:X2} and the correlation between the Brownian motions. To keep track of the dependence of the processes on their initial values, in what follows we often use the notation $(r^r_t,X^{r,x}_t)_{t\ge 0}$ for the process started at $r^r_0=r$ and $X^{r,x}_0=x$. Also we may sometimes use the notation $\P_{t,r,x}(\,\cdot\,)=\P(\,\cdot\,|r_t=r,X_t=x)$, $\P_{r,x}=\P_{0,r,x}$, and $\P_r (\,\cdot\,)=\P(\,\cdot\,|r_0=r)$.

According to the classical theory \cite[Ch.\ 2, Thm.\ 5.8]{karatzas1998methods} the rational price of an American put option with maturity time $T$, strike price $K>0$, written on the asset $X$ and evaluated at time $t\in[0,T]$ is given by
\[
p_t=\esssup_{t\le \tau\le T}\E\left[e^{-\int_t^\tau r_sds}\Big(K-X_\tau\Big)^+\Big|\cF_t\right],\quad 0\le t\le T,
\]
where the essential supremum is over $(\cF_t)$-stopping times in $[t,T]$ and the function $(\,\cdot\,)^+$ denotes the positive part.
In our Markovian set-up, $p_t=v(t,r_t,X_t)$ for a Borel-measurable function $v:[0,T]\times[\bar r,\underline r]\times \R_+$ (see \cite[Ch.\ 3]{shiryaev2007optimal}). Using that the process $(r_t,X_t)_{t\ge 0}$ is time-homogeneous and strong Markov, we can express $v$ as
\begin{align}\label{eq:v}
v(t,r,x)=\sup_{0\le \tau\le T-t}\E_{r,x}\left[e^{-\int_0^\tau r_s ds}\Big(K-X_\tau\Big)^+\right],
\end{align}
where $r\in\cI$ and $x\in\R_+$ are, respectively, the values of the spot rate and of the asset at time $t$. The above is an optimal stopping problem with Markovian structure and a 3-dimensional state space. 

Since the process 
\begin{align}\label{eq:pproc}
t\mapsto e^{-\int_0^t r_sds}\Big(K-X_t\Big)^+
\end{align} 
is non-negative and continuous, and thanks to the integrability condition \eqref{eq:integr}, we can rely on standard optimal stopping theory (see, e.g., Appendix D in \cite{karatzas1998methods}) to conclude that the smallest optimal stopping time for \eqref{eq:v} is $\P_{r,x}$-a.s.~given by
\begin{align}\label{eq:OST}
\tau_*:=\inf\{s\ge 0\,:\,v(t+s,r_s,X_s)=(K-X_s)^+\},
\end{align}
where we note that $\tau_* \le T-t$ since $v(T, r, x) = (K-x)^+$. Clearly $\tau_*=\tau_*(t,r,x)$ depends on the initial value $(t,r,x)$ of the 3-dimensional state process $(t+s,r_s,X_s)_{s\ge 0}$. 

The form \eqref{eq:OST} of $\tau^*$ gives rise to the so-called continuation set $\cC$ and its complement, the stopping set $\cD$, that is 
\begin{align}
\label{C}&\cC:=\{(t,r,x)\in[0,T]\times\cI\times\R_+\,:\,v(t,r,x)>(K-x)^+\},\\[+4pt]
\label{D}&\cD:=\{(t,r,x)\in[0,T]\times\cI\times\R_+\,:\,v(t,r,x)=(K-x)^+\}.
\end{align}
Upon observing the spot rate and the asset value, at each time the option holder must decide whether to hold the option or to exercise it. She should \emph{wait} (possibly trading the option on the market) if $(t,r_t,X_t)\in\cC$ since the option value is strictly larger than the payoff of immediate exercise. On the contrary, if $(t,r_t,X_t)\in\cD$ the option should be immediately \emph{exercised}. Notice that
\[
\{T\}\times\cI\times\R_+\subseteq\cD.
\]

\begin{remark}
Setting
\[
D_s:=\exp\big(-\int_0^sr_u du\big),\quad V_s:=v(t+s,r_s,X_s)\quad\text{and}\quad Y_s:=D_sV_s
\] 
(i.e., $Y$ is the discounted option value process), we have that \citep[Appendix D]{karatzas1998methods}
\begin{align}
\label{eq:supmg}&\text{$(Y_s)_{s\in[0,T-t]}$ is a right-continuous $\P_{r,x}$-supermartingale,}\\[+4pt]
\label{eq:mg}&\text{$(Y_{s\wedge\tau_*})_{s\in[0,T-t]}$ is a right-continuous $\P_{r,x}$-martingale}.
\end{align}
We will soon show (Proposition \ref{prop:vlip}) that $v$ is a continuous function, so that $Y$ is a continuous process.
\end{remark}

\noindent\textbf{Notation.}
We set 
\begin{align}\label{eq:O}
\cO:=[0,T)\times\cI\times\R_+,
\end{align} 
and denote by $\partial \cC$ the boundary of $\cC$ in $\cO$, i.e., $\partial \cC := (\overline{\cC} \cap \cO) \setminus \cC$.

For future frequent use we denote by $\cL$ the infinitesimal generator of $(r_t,X_t)_{t\ge0}$, which, for any $f \in C^2(\cI \times \R)$ reads
\begin{align}\label{eq:LrX}
\cL f:=\frac{\sigma^2x^2}{2}f_{xx}+\frac{\beta^2(r)}{2}f_{rr}+\rho\sigma x\beta(r)f_{r x}+rx f_x+\alpha(r)f_{r},
\end{align}
where $f_r$, $f_x$ and $f_{rr}$, $f_{rx}$, $f_{xx}$ denote, respectively, the first and second order partial derivatives of $f$.

\section{Main results}\label{sec:main}

In this section we provide the main results of the paper. In Sections \ref{sec:mr1} and \ref{sec:mr2}, under the sole Assumption \ref{ass:coef}, we establish continuous differentiability of the value function $v(t,r,x)$ (jointly in all variables), along with its monotonicity in $(t,r,x)$ and convexity in $x$. We also prove the existence and monotonicity of an optimal exercise boundary and present two possible parametrisations of it. Then, in Sections \ref{sec:mr3}---\ref{sec:mr6}, under a mild additional assumption on $\alpha$ and $\beta$ (Assumption \ref{ass:coef_H}) we derive continuity of the optimal exercise boundary (as a function of two variables) and an integral equation that uniquely determines it (also under Assumption \ref{ass:rbar}). Finally, we obtain the early exercise premium formula for the option price and the hedging portfolio that replicates the option's payoff at all times. 

\subsection{Optimal stopping boundary}\label{sec:mr1}
In the classical Black-Scholes model with constant interest rate, the stopping set is determined by a boundary: it is optimal to exercise the option the first time when the stock price drops below this boundary. A similar characterisation of the stopping region $\cD$ can be derived in our model with the difference that the stopping boundary is a surface. To this end, we research monotonicity properties of the value function.

\begin{proposition}\label{prop:vmonot}
The value function $v$ is finite for all $(t,r,x)\in\cO$ and it satisfies the following conditions:
\begin{itemize}
\item[(i)] $t\mapsto v(t,r,x)$ is non increasing for all $(r,x)\in\cI\times\R_+$,
\item[(ii)] $r\mapsto v(t,r,x)$ is non increasing for all $(t,x)\in[0,T]\times\R_+$,
\item[(iii)] $x\mapsto v(t,r,x)$ is convex and non increasing for all $(t,r)\in[0,T]\times\cI$.
\end{itemize} 
\end{proposition}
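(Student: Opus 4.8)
The plan is to exploit the explicit representation $X^{r,x}_t = x \exp\big(\sigma B_t + \int_0^t (r^r_s - \tfrac{\sigma^2}{2})\,ds\big)$ from \eqref{eq:X2} together with monotonicity of the flow of the interest rate SDE \eqref{eq:r}. First, for finiteness: since $(K - X_\tau)^+ \le K$ and the discount factor is bounded by $1$ (as $r$ is bounded below, at least on $\cI$ with $\underline r$ fixed; in the general case \eqref{eq:integr} with $p=1$ does the job directly), we get $v(t,r,x) \le K \cdot \sup_\tau \E_{r,x}[e^{-\int_0^\tau r_u du}] \le K C_1 < \infty$ for $r$ in any compact set, and in fact $v \le K$ when $\underline r \ge 0$ (CIR case) — in any event \eqref{eq:integr} yields local boundedness, which is all that "finite for all $(t,r,x)$" requires.

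For (i), monotonicity in $t$: the set of admissible stopping times for horizon $T - t$ shrinks as $t$ increases (any stopping time bounded by $T - t'$ with $t' > t$ is also bounded by $T - t$), and the integrand $e^{-\int_0^\tau r_u du}(K - X_\tau)^+$ does not depend on $t$ through $\E_{r,x}$. Hence the supremum over the larger set (smaller $t$) dominates, giving $v(t,r,x) \ge v(t',r,x)$ for $t \le t'$. This is the easy part.

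For (iii), monotonicity and convexity in $x$: from \eqref{eq:X2}, for fixed $\omega$ the map $x \mapsto X^{r,x}_\tau(\omega)$ is linear and increasing, so $x \mapsto (K - X^{r,x}_\tau(\omega))^+$ is non-increasing and convex (composition of the convex decreasing function $y \mapsto (K-y)^+$ with an increasing linear map). Multiplying by the nonnegative, $x$-independent random variable $e^{-\int_0^\tau r_u du}$ preserves both properties; taking expectations preserves them; and taking the supremum over $\tau$ of a family of non-increasing convex functions of $x$ again yields a non-increasing convex function. One should note that for fixed $\tau$ we must use the \emph{same} $\tau$ across the convexity inequality $\lambda x_1 + (1-\lambda)x_2$ — this is legitimate since we first prove the pointwise-in-$\tau$ inequality and then pass to the sup, using $\sup_\tau (f_\tau(x)) \le \sup_\tau(\cdots)$ via $\E_{r,x}[\cdots] \le \lambda v(t,r,x_1) + (1-\lambda) v(t,r,x_2)$ for each $\tau$.

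For (ii), monotonicity in $r$: this is the main obstacle, because $r$ enters both through the discount factor $e^{-\int_0^\tau r_u du}$ (where larger $r$ should decrease value) \emph{and} through the drift of $X$ in \eqref{eq:X2} via $\int_0^t r_s\,ds$ (where larger $r$ increases $X$, hence decreases the put payoff) — so both channels push the same way, but one must make the comparison rigorous. The plan is a pathwise coupling: by a standard comparison theorem for one-dimensional SDEs (applicable here since \eqref{eq:r} is autonomous with the stated Lipschitz/CIR coefficients — for CIR one uses the Yamada–Watanabe-type comparison result), if $r_1 \le r_2$ then $r^{r_1}_s \le r^{r_2}_s$ for all $s$ almost surely, on the same probability space driving both with the same $W$. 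Consequently $\int_0^s r^{r_1}_u du \le \int_0^s r^{r_2}_u du$, which simultaneously gives $e^{-\int_0^\tau r^{r_1}_u du} \ge e^{-\int_0^\tau r^{r_2}_u du}$ and $X^{r_1,x}_\tau \le X^{r_2,x}_\tau$, hence $(K - X^{r_1,x}_\tau)^+ \ge (K - X^{r_2,x}_\tau)^+$. Multiplying the two nonnegative inequalities gives
\[
e^{-\int_0^\tau r^{r_1}_u du}\big(K - X^{r_1,x}_\tau\big)^+ \ge e^{-\int_0^\tau r^{r_2}_u du}\big(K - X^{r_2,x}_\tau\big)^+ \qquad \text{a.s.}
\]
Taking $\E$ and then $\sup_\tau$ over the common family of stopping times yields $v(t,r_1,x) \ge v(t,r_2,x)$. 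A subtlety to address: the stopping times in \eqref{eq:v} are $(\cF_t)$-adapted, and the same $\tau$ must be admissible for both $r_1$ and $r_2$ — this is fine since $\cF$ is generated by $(B,W)$ and both coupled processes live on this same filtered space. The only real care needed is to invoke a comparison theorem valid for the CIR coefficient $\beta(r) = \gamma\sqrt{r}$, which is only $\tfrac12$-Hölder; here one cites the classical result (e.g. Ikeda–Watanabe / Revuz–Yor) that pathwise comparison holds for one-dimensional SDEs whenever the diffusion coefficient is Hölder-$\tfrac12$ and one drift dominates the other, which covers both cases of Assumption \ref{ass:coef}.
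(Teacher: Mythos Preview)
Your proposal is correct and follows essentially the same approach as the paper: finiteness from the payoff bound and \eqref{eq:integr}, (i) from inclusion of admissible stopping times, (iii) from linearity of $x\mapsto X^{r,x}_\tau$ and convexity of the put payoff, and (ii) from pathwise comparison $r^{r_1}_s\le r^{r_2}_s$ (the paper phrases this as ``$r\mapsto r^r_t$ increasing by uniqueness of trajectories''). The only cosmetic difference is that for (ii) the paper pulls the discount factor inside the positive part, writing $e^{-\int_0^\tau r_s ds}(K-X_\tau)^+=(Ke^{-\int_0^\tau r_s ds}-xe^{\sigma B_\tau-\sigma^2\tau/2})^+$, so that monotonicity in $r$ follows in one line rather than by multiplying two nonnegative inequalities as you do.
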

\begin{proof}
See Section \ref{sec:lipschitz}.
\end{proof}

The monotonicity in $t$ and $x$ and the convexity in $x$ is the same as in the classical Black-Scholes model and the proof is very similar. The dependence on $r$ has financial explanation: larger interest rate implies stronger discounting of future cashflows and, hence, lower present value.

\begin{remark}\label{rem:inf-hor}
In the case $T=+\infty$ (perpetual option) the discounted payoff process \eqref{eq:pproc} is still uniformly integrable and continuous. This implies that, letting $v_\infty$ denote the value of the perpetual option, the stopping time 
\[
\tau_\infty=\inf\{t\ge0\,:\,v_\infty(r_t,X_t)=(K-X_t)^+\}
\] 
is optimal by standard theory and \eqref{eq:supmg}--\eqref{eq:mg} continue to hold in this setting (see, e.g., \cite[Ch.~3, Thm.~3]{shiryaev2007optimal}). 
In particular, it can be shown that $r\mapsto v_\infty(r,x)$ is non-increasing and $x\mapsto v_\infty(r,x)$ is convex and non-increasing.
\end{remark}

From the general optimal stopping theory we expect that the value function $v$ be continuous. Indeed, this fact is proved from first principles in our Proposition \ref{prop:vlip} in Section \ref{sec:lipschitz} (without relying on the form of the stopping set). The continuity of $v$ means that the continuation set $\cC$ is open and the stopping set $\cD$ is closed. In view of the monotonicity properties established in Proposition \ref{prop:vmonot}, we can show that there is a surface splitting $\cC$ and $\cD$. 

In models with constant interest rate, an optimal boundary is often defined as function of time which provides a threshold for the process $(X_t)$. A parametrisation of the stopping surface as a function $b(t,r)$ of time and interest rate is also available in our setting. For the sake of mathematical tractability we prefer to work with the parametrisation $c(t,x)$ in terms of time and stock price. Due to technical reasons that will become clearer in Section \ref{sec:properties_boundary}, we are able to prove the continuity of $(t,x)\mapsto c(t,x)$ jointly in both variables $(t,x)$, but not the joint continuity of $b$ in $(t,r)$. However, $b$ is more convenient for numerical computations in Section \ref{sec:numerics} as it admits values in a bounded interval $[0, K]$. The connection between $b$ and $c$ is established in Proposition \ref{prop:D}.

\begin{proposition}
\label{prop:boundary-c}
There exists a function $c(t,x)$ on $[0,T]\times \R_+$ such that
\begin{align}
\cD&=\{(t,r,x)\in\cO\,:\,r\ge c(t,x)\}\cup \big(\{T\}\times \cI \times \R_+\big),\label{eq:D_c}\\
\cC&=\{(t,r,x)\in\cO\,:\,r < c(t,x)\}.\label{eq:C_c}
\end{align}
The function $c(t,x)$ has following properties:
\begin{itemize}
\item[(i)]
For any $(t_0, x_0) \in [0, T) \times \R_+$,  the mapping $t \mapsto c(t, x_0)$ is right-continuous and non-increasing and the mapping $x \mapsto c(t_0, x)$ is left-continuous and non-decreasing.
\item[(ii)]
$c(t,x)=\overline{r}$ for $(t,x)\in [0,T)\times [K,\infty)$.
\item[(iii)]
$c(t,x)\ge 0$ for $(t,x)\in [0,T)\times \R_+$; when the risk-free rate $(r_t)_{t\ge 0}$ is non-negative (i.e., $\underbar{r} = 0$), we have $\lim_{x\downarrow 0}c(t,x)=0$ for $t\in [0,T)$.
\end{itemize}
\end{proposition}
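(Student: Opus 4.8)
The plan is to construct the boundary from the monotonicity in $r$. Since $(K-x)^+$ is independent of $r$ and $r\mapsto v(t,r,x)$ is non-increasing (Proposition \ref{prop:vmonot}(ii)), every $(t,x)$-section of $\cD$ is an up-set in $r$: if $v(t,r,x)=(K-x)^+$ and $r'\ge r$ then $(K-x)^+\le v(t,r',x)\le v(t,r,x)=(K-x)^+$. As $v$ is continuous (Proposition \ref{prop:vlip}), $\cD$ is closed and this section is a closed up-set of $\cI$; hence, putting
\[
c(t,x):=\inf\{r\in\cI:(t,r,x)\in\cD\}\qquad(\text{with }\inf\emptyset:=\overline r),
\]
the section equals $\{r\in\cI:r\ge c(t,x)\}$, which is exactly \eqref{eq:D_c}--\eqref{eq:C_c}. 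I would then extend $c$ to $t=T$ (where $c\equiv\underline r$) and to $x\in\{0,\infty\}$ by the monotone limits obtained below.

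For the monotonicity in (i): $t\mapsto c(t,x)$ is non-increasing because, by Proposition \ref{prop:vmonot}(i), each $(r,x)$-section of $\cD$ is forward-closed in $t$ (the same two-line argument). And $x\mapsto c(t,x)$ is non-decreasing because, for fixed $(t,r)$, the map $h(x):=v(t,r,x)-(K-x)$ is convex and non-negative on $(0,K)$ with $h(0^+)=0$ (combining convexity from Proposition \ref{prop:vmonot}(iii) with the deep-in-the-money behaviour $v(t,r,0^+)=K$); a non-negative convex function vanishing both at $0^+$ and at an interior point $x$ vanishes on all of $(0,x]$, so the $(t,r)$-section of $\cD$ is a down-set in $x$. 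The one-sided continuity of $c$ follows from a standard monotonicity-and-continuity argument: if, say, $c(t_0^+,x_0)<c(t_0,x_0)$, choose $r\in\big(c(t_0^+,x_0),c(t_0,x_0)\big)\cap\cI$; then $(t_0,r,x_0)\in\cC$ but $(t,r,x_0)\in\cD$ for all $t>t_0$, and passing to the limit $t\downarrow t_0$ in $v(t,r,x_0)=(K-x_0)^+$ contradicts the continuity of $v$; left-continuity in $x$ is symmetric.

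Property (ii) rests on $v>0$ throughout $\cO$: for $t<T$ the choice $\tau=T-t$ and the fact that $X_{T-t}$ has full support in $\R_+$ give $v(t,r,x)\ge\E_{r,x}\!\big[e^{-\int_0^{T-t}r_u\,du}(K-X_{T-t})^+\big]>0$, so for $x\ge K$ we get $v(t,r,x)>0=(K-x)^+$, i.e.\ $(t,r,x)\in\cC$ for every $r\in\cI$ and $c(t,x)=\overline r$. For the bound $c\ge0$ in (iii), I would apply It\^o's formula to $s\mapsto e^{-\int_0^s r_u\,du}(K-X_s)$ (no local time is needed here) to obtain, for a suitably small stopping time $\tau$,
\[
\E_{r,x}\!\Big[e^{-\int_0^\tau r_u\,du}(K-X_\tau)\Big]=(K-x)-K\,\E_{r,x}\!\Big[\int_0^\tau r_s\,e^{-\int_0^s r_u\,du}\,ds\Big].
\]
If $r<0$ then path-continuity forces $r_s<0$ on $[0,\tau]$, the last expectation is strictly negative, and since $(K-X_\tau)^+\ge K-X_\tau$ we conclude $v(t,r,x)>K-x\ge(K-x)^+$, hence $(t,r,x)\in\cC$; therefore $\cD\cap\cO\subseteq\{r\ge0\}$, i.e.\ $c\ge0$.

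The remaining claim $\lim_{x\downarrow0}c(t,x)=0$ is the crux. By the monotonicity in $x$ just proved the limit equals $\inf_{x>0}c(t,x)$, so together with $c\ge0$ it suffices to exhibit, for each $t<T$ and each $r_0>0$, some $x_0>0$ with $(t,r_0,x_0)\in\cD$; i.e.\ immediate exercise must be optimal deep in the money. My route is the It\^o--Tanaka decomposition, valid for $0<x<K$ and any stopping time $\tau\le T-t$,
\begin{align*}
\E_{r_0,x}\!\Big[e^{-\int_0^\tau r_u\,du}(K-X_\tau)^+\Big]
&=(K-x)-K\,\E_{r_0,x}\!\Big[\int_0^\tau r_s\,\mathbf{1}_{\{X_s<K\}}\,e^{-\int_0^s r_u\,du}\,ds\Big]\\
&\quad+\tfrac12\,\E_{r_0,x}\!\Big[\int_0^\tau e^{-\int_0^s r_u\,du}\,dL^K_s\Big],
\end{align*}
where $L^K$ is the local time of $X$ at level $K$: for small $x$ the local-time term stays null until $X$ first hits $K$, an event which requires a long time, over which the non-negative interest-rate term has already built up mass bounded below uniformly in $\tau$; hence the supremum over $\tau$ is attained at $\tau=0$ and $v(t,r_0,x)=K-x$. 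Turning this competition between the two terms into a quantitative bound, uniform in $\tau$, is the delicate point --- alternatively one may compare with the perpetual put of Remark \ref{rem:inf-hor}, whose stopping region reaches $x=0^+$ for every $r_0>0$ by the time-homogeneous version of the same estimate. I expect this last sub-step to be the main obstacle; the other parts are comparatively routine.
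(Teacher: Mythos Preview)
The construction of $c$, the monotonicity in $t$, the one-sided continuity, property~(ii), and the bound $c\ge 0$ are all handled essentially as in the paper. The substantive differences are in the monotonicity of $c$ in $x$ and the limit $c(t,0^+)=0$.

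For monotonicity in $x$, your convexity argument needs the anchor $h(0^+)=0$, i.e.\ $v(t,r,0^+)=K$. You invoke this as ``deep-in-the-money behaviour'' but it is not established anywhere prior to this point, and it is essentially the hard claim you are trying to prove. Indeed, one checks that $v(t,r,0^+)=K\sup_{\tau}\E_r\big[e^{-\int_0^\tau r_s\,ds}\big]$, so $h(0^+)=0$ amounts to showing that $\tau=0$ is optimal for this auxiliary stopping problem; when the rate can go negative this is not obvious and is precisely the content of $c(t,0^+)\le r$. Without the anchor, convexity and non-negativity of $h$ only say that $\{h=0\}$ is an interval, not that it starts at $0$. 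The paper bypasses this entirely: taking $\tau_1=\tau_*(t,r,x_1)$ optimal for the smaller $x_1$ and using the martingale property of $s\mapsto e^{-\int_0^s r_u\,du}X_s$, it obtains directly
\[
v(t,r,x_2)-v(t,r,x_1)\ \ge\ \E\Big[e^{-\int_0^{\tau_1}r_s\,ds}\big(X^{r,x_1}_{\tau_1}-X^{r,x_2}_{\tau_1}\big)\Big]=x_1-x_2,
\]
so $x\mapsto v(t,r,x)-(K-x)$ is non-decreasing on $(0,K)$, which immediately yields that the $(t,r)$-section of $\cD$ is a down-set in $x$.

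For $c(t,0^+)=0$, your Tanaka route tries to show, for each $r_0>0$, that some small $x_0$ lies in $\cD$ by trading off the drift term $-K\E\int_0^\tau r_s\mathbf 1_{\{X_s<K\}}e^{-\int_0^s r_u du}ds$ against the local-time contribution; you correctly flag the uniform-in-$\tau$ control as the obstacle and leave it open. The paper's argument is shorter and avoids this competition: assuming $c(t,0^+)\ge\delta>0$, the monotonicity of $c$ forces a strip $[0,\hat t)\times(r_1,r_2)\times(0,\infty)\subset\cC$ with $0<r_1<r_2<\delta$, and the martingale property of the discounted value up to the exit time $\tau_2$ from that strip gives
\[
K-x_0<v(t_0,r_0,x_0)=\E_{r_0,x_0}\Big[e^{-\int_0^{\tau_2}r_s\,ds}\,v\big(t_0+\tau_2,r_{\tau_2},X_{\tau_2}\big)\Big]\le K\,\E_{r_0}\big[e^{-r_1\tau_2}\big];
\]
letting $x_0\downarrow 0$ contradicts $\E_{r_0}[e^{-r_1\tau_2}]<1$. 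Note that this last bound uses $v\le K$, which is immediate in CIR but is a point one must watch under Assumption~\ref{ass:coef}(ii); your Tanaka route would face the same caveat, since the sign of the drift term likewise hinges on non-negativity of the running rate.
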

\begin{proof}
See Section \ref{sec:properties_boundary}.
\end{proof}

Notice that (ii) and (iii) above imply that it is never optimal to exercise the option out of the money or if the interest rate is negative. This is in line with classical financial wisdom.

The following proposition whose simple proof is omitted gives details of the reparametrisation of the stopping boundary as a function $b(t,r)$ of time and interest rate.
\begin{proposition}\label{prop:D}
Define
\[
b(t,r):=\inf\{x\in\R_+: c(t,x)>r\},\qquad (t, r) \in [0, T) \times \cI.
\]
The mappings $t \mapsto b(t, r_0)$ and $r \mapsto b(t_0,r)$ are right-continuous and non-decreasing for any $(t_0, r_0) \in [0, T) \times \cI$. For any $t \in [0, T)$ we have $K>b(t,r)>0$ when $r>0$, and $b(t,r)=0$ when $r<0$. Furthermore,
\begin{align*}
\cD&=\{(t,r,x)\in\cO\,:\,x\le b(t,r)\}\cup\big(\{T\}\times \cI \times \R_+\big),\\
\cC&=\{(t,r,x)\in\cO\,:\,x> b(t,r)\}.
\end{align*}
\end{proposition}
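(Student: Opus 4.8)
The plan is to read off every assertion from the single structural fact that $b(t,\cdot)$ is the right-continuous generalised inverse of the non-decreasing, left-continuous map $x\mapsto c(t,x)$ furnished by Proposition \ref{prop:boundary-c}(i); no new probabilistic input is needed beyond that proposition. The pivotal step is to establish the equivalence $c(t,x)\le r\iff x\le b(t,r)$, valid for all $t\in[0,T)$, $r\in\cI$ and $x\in\R_+$. Granting it, the representations of $\cD$ and $\cC$ in terms of $b$ follow immediately from \eqref{eq:D_c}--\eqref{eq:C_c}, since $r\ge c(t,x)$ is exactly the negation of $r<c(t,x)$.

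To prove the equivalence I would argue the two implications separately. Writing $A_{t,r}:=\{x'\in\R_+:c(t,x')>r\}$, so that $b(t,r)=\inf A_{t,r}$: if $c(t,x)\le r$, then monotonicity of $c(t,\cdot)$ gives $c(t,x')\le r$ for all $x'\le x$, hence $A_{t,r}\subseteq(x,\infty)$ and $b(t,r)\ge x$; conversely, if $x\le b(t,r)$ but $c(t,x)>r$, then \emph{left}-continuity of $c(t,\cdot)$ at $x>0$ yields some $x'<x$ with $c(t,x')>r$, so that $b(t,r)\le x'<x$, a contradiction. Monotonicity of $b$ is then automatic: $A_{t,r}$ decreases as $r$ grows (monotonicity in $r$), and $c(t_1,\cdot)\ge c(t_2,\cdot)$ for $t_1\le t_2$ forces $A_{t_1,r}\supseteq A_{t_2,r}$ (monotonicity in $t$). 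For right-continuity in either variable I would use the standard ``room on the right'' argument: if $r_n\downarrow r$, then $\lim_n b(t,r_n)\ge b(t,r)$ by monotonicity, while for any $x>b(t,r)$ there is $x''\in A_{t,r}$ with $x''<x$; since $r_n\downarrow r$ and $c(t,x'')>r$ strictly, we get $c(t,x'')>r_n$ for all large $n$, hence $b(t,r_n)\le x''<x$, and letting $x\downarrow b(t,r)$ closes the gap. The same argument run along $t_n\downarrow t$, now invoking right-continuity of $t\mapsto c(t,x'')$ from Proposition \ref{prop:boundary-c}(i), handles right-continuity in $t$.

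The pointwise bounds drop out of parts (ii)--(iii) of Proposition \ref{prop:boundary-c}. If $r<0$, then $c(t,x')\ge 0>r$ for every $x'\in\R_+$, so $A_{t,r}=\R_+$ and $b(t,r)=0$. If $r>0$, then $\lim_{x'\downarrow 0}c(t,x')=0<r$ forces $c(t,x')<r$ on some interval $(0,\delta)$, whence $b(t,r)\ge\delta>0$; and since $c(t,\cdot)\equiv\overline r$ on $[K,\infty)$ is left-continuous at $K$, one has $c(t,x')>r$ for $x'$ slightly below $K$ (here $r<\overline r$ is used), hence $b(t,r)\le x'<K$. I do not expect a genuine obstacle here, as the content is entirely the calculus of generalised inverses of monotone functions; the only places that require a moment's care are the implication ``$x\le b(t,r)\Rightarrow c(t,x)\le r$'' and the bound $b(t,r)<K$, both of which truly rely on \emph{left}-continuity of $c(t,\cdot)$ rather than mere monotonicity — which is precisely why that property was recorded in Proposition \ref{prop:boundary-c}(i).
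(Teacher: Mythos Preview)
Your argument is correct and is precisely the kind of routine verification the authors had in mind: the paper explicitly omits the proof, noting only that it is ``simple'' and amounts to a reparametrisation of the boundary from Proposition~\ref{prop:boundary-c}. Your derivation via the generalised right-continuous inverse of the left-continuous map $x\mapsto c(t,x)$, and in particular your observation that left-continuity of $c(t,\cdot)$ is exactly what is needed for the implication $x\le b(t,r)\Rightarrow c(t,x)\le r$ and for the strict bound $b(t,r)<K$, is the intended content.
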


\subsection{Smoothness of the value function}\label{sec:mr2}
It is well-known that $v$ satisfies (in the classical sense)
\begin{equation}
\label{eqn:pde}
\begin{aligned}
&v_t(t,r,x)+(\cL-r) v(t,r,x)=0,\qquad  (t,r,x) \in \cC,\\
& v(t,r,x) = (K-x)^+,\qquad (t,r,x) \in \cD,
\end{aligned} 
\end{equation}
where $\cL$ is the generator of $(r,X)$ defined in \eqref{eq:LrX}. Hence, standard arguments assert that $v$ is $C^{1,2}$ in $\cC \cap \interior(\cD)$. Classical optimal stopping theory identifies the boundary of the set $\cC$ by imposing the so-called smooth-fit condition. In the American put problem with constant interest rate this corresponds to proving that $x\mapsto v^\circ_x(t,x)$ is continuous for each $t\in[0,T)$ fixed, with $v^\circ$ denoting the value function associated to the option price. In our setting we prove a stronger result and show continuous differentiability of $v$ across the stopping boundary $\partial \cC$, i.e., the global continuity of the gradient of $v$ (as a function of all variables) in $\cO$. We use ideas similar to those in \cite{de2018global} but we must refine arguments therein and use estimates with `local' nature since we are not able to directly check their assumptions. In particular, global differentiability of the flow $r\mapsto(r^r_s,X^{r,x}_s)$ and related integrability conditions (see Eqs.\ (4.4)--(4.7) and Theorem 10 in \cite{de2018global}) are not easily verifiable when, for example, the interest rate follows the CIR dynamics.

\begin{theorem}
\label{prop:v-diff}
We have $v\in C^1(\cO)$.
\end{theorem}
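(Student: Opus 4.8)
The plan is to establish $C^1$-regularity of $v$ on $\cO$ by combining interior PDE regularity (already available from \eqref{eqn:pde}) with a careful analysis of the gradient of $v$ across the stopping boundary $\partial\cC$. Away from $\partial\cC$ there is nothing to prove: on $\cC$ we have $v\in C^{1,2,2}$ by parabolic theory, and on $\interior(\cD)$ we have $v=(K-x)^+$ which is smooth provided we stay away from $\{x=K\}$; Proposition \ref{prop:boundary-c}(ii) tells us $\{x=K\}$ is in the interior of the continuation set for $t<T$ (since $c(t,x)=\overline r$ there) so the kink of the payoff never interferes with $\cD$. Hence the entire difficulty is concentrated at points $(t_0,r_0,x_0)\in\partial\cC$, where we must show $v_t$, $v_r$ and $v_x$ exist and match the values coming from the payoff side, namely $v_t=v_r=0$ and $v_x=-\mathbf 1_{\{x<K\}}$, and that they are continuous there.

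First I would handle $v_x$, which is the classical smooth-fit direction. Using the probabilistic representation \eqref{eq:v} together with the optimality of $\tau_*$ from \eqref{eq:OST}, I would bound the incremental ratios $\big(v(t,r,x+h)-v(t,r,x)\big)/h$ from above and below by freezing the stopping time: the lower bound uses $\tau_*$ evaluated at $(t,r,x)$, the upper bound uses $\tau_*$ at $(t,r,x+h)$, and in both cases one differentiates the explicit exponential formula \eqref{eq:X2} for $X^{r,x}_\tau$ in $x$ under the expectation. The convexity in $x$ from Proposition \ref{prop:vmonot}(iii) guarantees one-sided derivatives exist; the squeeze then forces $v_x$ to be continuous up to and across $\partial\cC$, with $v_x(t_0,r_0,x_0)=-\mathbf 1_{\{x_0<K\}}$ at boundary points. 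The integrability condition \eqref{eq:integr}, applied on compact $r$-sets, is what makes the dominated-convergence steps legitimate; this is the place where the `local' refinement of \cite{de2018global} enters, since the global integrability used there may fail here.

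Next I would treat $v_r$ and $v_t$. For $v_r$ the monotonicity in $r$ (Proposition \ref{prop:vmonot}(ii)) gives one-signed increments, and since $r\mapsto v$ is non-increasing while $v\ge 0\equiv$ payoff on $\cD$, at a boundary point the right derivative in $r$ must be $\le 0$ and the value cannot drop below the payoff, forcing $v_r(t_0,r_0,x_0)=0$; continuity of $v_r$ across $\partial\cC$ then follows by comparing with the interior gradient, which tends to $0$ as one approaches the boundary from $\cC$ — this last point is the crux and I expect it to be the main obstacle. Concretely, one needs an estimate showing $v_r(t,r,x)\to 0$ as $(t,r,x)\to\partial\cC$ from inside $\cC$; the natural route is to write $v_r$ via a Feynman–Kac / dominated-convergence argument on the stopped process and exploit that $\tau_*\to 0$ near the boundary (which in turn uses the regularity of the boundary in the diffusion sense, established in Section \ref{sec:properties_boundary}), together with the $C_1$- and $C_2$-type bounds \eqref{eq:integr}, \eqref{eq:subg} to control the discount factor and the moments of $r$. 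The same scheme delivers $v_t$: monotonicity in $t$ (Proposition \ref{prop:vmonot}(i)) pins $v_t(t_0,r_0,x_0)=0$ at the boundary, and an analogous vanishing-at-the-boundary estimate for the interior time-derivative gives continuity. Finally, having shown each partial derivative exists everywhere and is continuous on all of $\cO$, the standard fact that existence and continuity of all first partials implies $C^1$ (total differentiability) concludes the proof. The key steps in order: (1) reduce to boundary points $\partial\cC$; (2) smooth-fit in $x$ via convexity and the squeeze on difference quotients; (3) identify $v_r=v_t=0$ on $\partial\cC$ from monotonicity; (4) prove the interior gradient vanishes as one approaches $\partial\cC$, using the diffusion-regularity of the boundary and the local integrability estimates; (5) assemble to get joint $C^1$.
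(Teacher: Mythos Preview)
Your plan is correct and matches the paper's approach: interior regularity away from $\partial\cC$, then show each partial derivative tends to the payoff value at boundary points by exploiting $\tau_*\to 0$ via the diffusion-regularity of $\partial\cC$ (Proposition \ref{prop:reg}), with dominated convergence justified by the local integrability bounds. The paper's execution of step (4) for $v_t$ and $v_r$ is slightly more specific than your sketch --- it localizes with an exit time $\tau_\cK$ from a compact set (so the relevant bounds carry a factor $\P(\tau_\cK\le\tau_*)$ that vanishes) and, for $v_r$, invokes Krylov's theorem on differentiability of the flow $r\mapsto r^r_\cdot$ rather than a Feynman--Kac formula --- but the architecture is exactly as you describe.
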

\begin{proof}
See Section \ref{sec:cont}.
\end{proof}

It is worth noticing that the proof of the above result combines a number of steps that may be of independent interest. In particular, we prove local Lipschitz continuity of $v$ (Proposition \ref{prop:vlip}) and the regularity of the stopping boundary in the sense of diffusions. The latter gives the continuity of optimal stopping times $\tau_*$ as functions of the initial state, which plays a crucial role in the proof of the theorem.

\subsection{Continuity of the stopping boundary and Dynkin's formula}\label{sec:mr3}
Preliminary right/left-continuity properties of the stopping boundary $(t,x) \mapsto c(t,x)$ illustrated above follow from its monotonicity and the closedness of the stopping set $\cD$ (see Proposition \ref{prop:boundary-c}). However, thanks to the $C^1$ regularity of the value function $v$, we can also prove joint continuity of the stopping boundary in both variables.
For this we require local H\"older continuity of the derivatives of the coefficients in the dynamics of the short rate $r$. 
\begin{assumption}\label{ass:coef_H}
The functions $\alpha$ and $\beta$ in \eqref{eq:r} have first and second order derivatives, respectively, H\"{o}lder continuous on any compact subset of $\cI$.
\end{assumption}
Note that this assumption is satisfied by CIR model. It strengthens Assumption \ref{ass:coef}(ii) by requiring that the derivatives are not only locally continuous but also locally H\"older continuous. This technical requirement is satisfied by many popular short rate models. The joint continuity of optimal stopping boundaries depending on multiple variables has not been proved with probabilistic techniques before, so the next result is of independent mathematical interest.

\begin{proposition}\label{prop:conti-c}
Under Assumption \ref{ass:coef_H}, the function $c:[0,T)\times \R_+ \to [0,\infty)$ is continuous.
\end{proposition}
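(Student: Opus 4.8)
The plan is to exploit the $C^1$-regularity of $v$ from Theorem \ref{prop:v-diff} together with the (known) one-sided continuity and monotonicity of $c$ from Proposition \ref{prop:boundary-c}(i), in order to rule out jumps. Recall from Proposition \ref{prop:boundary-c} that $t\mapsto c(t,x)$ is right-continuous and non-increasing, and $x\mapsto c(t,x)$ is left-continuous and non-decreasing; moreover $0\le c\le\overline r$ on the relevant region and $c(t,x)=\overline r$ for $x\ge K$. Hence the only possible discontinuities are a downward jump in $x$ (i.e.\ $c(t,x-):=\lim_{x'\uparrow x}c(t,x')<c(t,x+):=\lim_{x'\downarrow x}c(t,x')$ failing, but since $c(\cdot,x)$ is left-continuous in $x$ the issue is a jump \emph{down} as $x$ increases past some point is excluded by monotonicity; the real issue is a jump \emph{up} as $x$ increases, which monotonicity allows) and, symmetrically, a jump down in $t$ as $t$ increases. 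So I would argue by contradiction: suppose $c$ has a jump at some point $(t_0,x_0)\in[0,T)\times\R_+$; by the monotonicity in the two variables this produces an open rectangle $R\subset\cO$ of points $(t,r,x)$ that lies in $\interior(\cD)$ on one side of the jump and in $\cC$ on the other, with the jump segment lying on $\partial\cC$.

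The key step is a local PDE/regularity argument along the jump. On the jump segment $v$ equals the payoff $(K-x)^+$, and since $x_0<K$ (because $c<\overline r$ forces $x<K$, by Proposition \ref{prop:boundary-c}(ii)) we have $v=K-x$ there with $v_x=-1$, $v_{xx}=0$, $v_t=0$, $v_r=0$ in $\interior(\cD)$ near the jump. On the $\cC$ side, $v$ solves the PDE \eqref{eqn:pde}, $v_t+(\cL-r)v=0$, and is $C^{1,2,2}$ in the interior of $\cC$. By Theorem \ref{prop:v-diff}, $v_t,v_r,v_x$ extend continuously up to $\partial\cC$, so at a point of the jump segment approached from within $\cC$ we must also have $v_r=0$, $v_x=-1$, $v_t=0$ (matching the $\cD$ side). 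Feeding these boundary values into the PDE and using interior Schauder estimates (this is where Assumption \ref{ass:coef_H}, i.e.\ local H\"older continuity of the coefficients of $\cL$, is needed to control $v_{xx},v_{rr},v_{rx}$ up to the boundary), one evaluates $v_t+(\cL-r)v$ as the $\cC$-side limit towards the jump segment: the first-order terms give $rx\cdot(-1)+\alpha(r)\cdot 0 = -rx$, the $v_t$ term vanishes, and the second-order terms must be non-negative on the boundary (since $v-\text{payoff}\ge0$ in $\cC$ and vanishes with zero gradient on the segment, a standard barrier/touching argument forces $v_{xx}\ge0$, while the cross and $rr$ terms are controlled). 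This yields $0 = v_t+(\cL-r)v \ge \tfrac{\sigma^2 x^2}{2}v_{xx} + (\text{nonneg}) - rx + r(K-x) \ge -rx+r(K-x)=r(K-2x)$; but along the whole jump segment $r$ ranges over a nondegenerate interval while $x=x_0$ is fixed with $x_0<K$, and for the part of the segment where $r$ can be taken arbitrarily close to values making $r(K-2x_0)$ have the wrong sign (or, more robustly, by noting $-rx_0<0$ strictly since $x_0>0$ and $r>0$ on the segment by Proposition \ref{prop:boundary-c}(iii)) one reaches a contradiction with $0\ge$ (strictly negative quantity). The same argument, run with the roles of the variables exchanged, excludes a jump of $t\mapsto c(t,x)$.

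The main obstacle I expect is making the barrier argument for $v_{xx}\ge 0$ (and controlling the $v_{rr}$, $v_{rx}$ terms) rigorous \emph{up to the boundary}, since $v$ is only $C^1$ globally and $C^{1,2,2}$ merely in the open set $\cC$. The clean way around this is: (a) use the convexity in $x$ from Proposition \ref{prop:vmonot}(iii), which gives $v_{xx}\ge0$ in the distributional sense everywhere, hence $\liminf$ of $v_{xx}$ from inside $\cC$ at a boundary point is $\ge 0$; (b) for the mixed and $rr$ second derivatives, use interior Schauder estimates on small balls inside $\cC$ shrinking to the boundary, which under Assumption \ref{ass:coef_H} give locally uniform bounds, and then pass to the limit along the sequence of balls; (c) alternatively, and perhaps more transparently, work with an integrated version: apply the generalised It\^o formula (available thanks to $v\in C^1(\cO)$, as in the hedging argument of the paper) to $e^{-\int_0^s r_u\,du}v(t+s,r_s,X_s)$ along paths that spend a positive fraction of time in the hypothetical interior-$\cD$ rectangle $R$, and compare the resulting drift $-e^{-\int r}r_sK\,\mathbf{1}_{\{X_s<K\}}\mathbf{1}_{\{(t+s,r_s,X_s)\in\interior(\cD)\}}$ (negative) against the supermartingale property \eqref{eq:supmg}; starting the process from a point just inside $\cC$ and just below the jump, the process immediately enters $\interior(\cD)$ with positive probability before the value can recover, contradicting $Y_{s\wedge\tau_*}$ being a martingale by \eqref{eq:mg}. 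I would present the proof via this last probabilistic route to stay consistent with the methods of the paper, keeping the PDE computation above as the heuristic that identifies the sign that produces the contradiction.
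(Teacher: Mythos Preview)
Your overall strategy---assume a jump, obtain adjacent open rectangles in $\cC$ and $\interior(\cD)$ separated by a flat piece of $\partial\cC$, and derive a contradiction using that $v\in C^1$ while $v$ solves the PDE on the $\cC$-side---is exactly what the paper does. The difficulties are in the execution, and there your sketch has a computational error and, more importantly, a genuine gap.

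First the arithmetic. At a point of the jump segment with $x_0<K$, feeding $v_t=0$, $v_r=0$, $v_x=-1$, $v=K-x_0$ into $v_t+(\cL-r)v$ gives
\[
\tfrac{\sigma^2 x_0^2}{2}v_{xx}+\tfrac{\beta^2(r)}{2}v_{rr}+\rho\sigma x_0\beta(r)v_{rx}+r x_0(-1)+\alpha(r)\cdot 0-r(K-x_0)
=\text{(2nd order)}-rK,
\]
not $r(K-2x_0)$. This is actually better for you (the constant part $-rK$ is strictly negative on the segment), but it does not save the argument.

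The real gap is in handling the second-order terms. Convexity in $x$ does give $v_{xx}\ge 0$, but you have no sign information on $v_{rr}$, and the cross term $\rho\sigma x\beta(r)v_{rx}$ has no a priori sign either. Your proposed fix---interior Schauder estimates on balls shrinking to the boundary---does not work: interior estimates blow up as the ball approaches $\partial\cC$, precisely because the distance to the boundary appears in the denominator. Boundary Schauder estimates would require a priori regularity of $\partial\cC$, which is exactly what you are trying to prove. So you cannot conclude that $v_{rr},v_{rx}$ stay bounded (let alone vanish or have a sign) as you approach the jump segment from $\cC$.

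The paper resolves this by never evaluating second derivatives at the boundary. For the jump in $x$ at fixed $t_0$, it first upgrades the spatial regularity on the $\cC$-rectangle (solving an elliptic problem with right-hand side $-v_t(t_0,\cdot,\cdot)$ and using Assumption~\ref{ass:coef_H} to get a $C^3$ solution that coincides with $v(t_0,\cdot,\cdot)$), then differentiates the PDE once in $r$, pairs with a test function $\phi(r)$ compactly supported in the $r$-interval of the jump, and integrates by parts until only $v$ and its first derivatives remain. Passing to the limit $x\downarrow x_0$ is then legitimate by $v\in C^1(\cO)$, and the limit is computed to be $2K/(\sigma^2 x_0^2)>0$. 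A further double integration in $x$ and one more integration by parts turns this into $\int v_r(t_0,r,x_0+\eps)\phi(r)\,dr>0$, contradicting $v_r\le 0$. The left-continuity of $t\mapsto c(t,x_0)$ is handled by the same test-function trick applied directly to the parabolic PDE, yielding $\int(-rK)\phi_2(r)\,dr=0$, which is impossible. The point is that Assumption~\ref{ass:coef_H} is used to justify the extra differentiation of the PDE (existence of a $C^3$ solution on the $\cC$-side), not to obtain second-derivative bounds up to $\partial\cC$.

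As for your probabilistic alternative: the generalised It\^o formula underlying Proposition~\ref{prop:int-eq} indeed only needs monotonicity of the boundary, so in principle one could try to use it here. But your sketch does not produce a contradiction: if you start just inside $\cC$ near the jump, the optimal stopping time $\tau_*$ is the entry time to $\cD$, the stopped process $Y_{\cdot\wedge\tau_*}$ is a martingale regardless, and the process entering $\interior(\cD)$ \emph{after} $\tau_*$ says nothing. One would instead need to start \emph{inside} $\interior(\cD)$ near the jump and compare the supermartingale drift $-rK$ there with the value on the $\cC$-side, but making that rigorous again requires an It\^o formula that is already established, and then a careful choice of starting point and stopping time; the paper's PDE/test-function route is cleaner and is the one you should follow.
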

\begin{proof}
See Section \ref{sec:integral}.
\end{proof}

Summarising, we have $v\in C^1(\cO)\cap C^{1,2}(\cC)\cap C^{1,2}(\cD)$, and the optimal stopping boundary $c$ is continuous. This is not sufficient to apply the change of variable formula developed in \cite{peskir2007change} which is often used in optimal stopping literature to establish It\^{o}'s formula for the value function. Indeed, since \cite{peskir2007change} deals with functions that are not necessarily $C^1$, it requires that $t \mapsto c(t,X_t)$ be a semi-martingale, so that the local time on the stopping boundary is well-defined. While we were unable to prove it for our optimal boundary, we can instead take advantage of the continuous differentiability of our value function and use a generalisation of It\^{o}'s formula from \cite{cai2021b} which only requires the monotonicity of the boundary. Notice that, interestingly, we need not control the second order spatial derivatives near $\partial\cC$ in order to apply results from \cite{cai2021b}. We do however need to ensure that both boundary points of the set $\cI$ are non-attainable, because we have not proven that the derivatives $v_t(t,\underline r,x)$, $v_r(t,\underline r,x)$ and $v_x(t,\underline r,x)$, understood as the limit as $r \to \underline r$, are well-defined.
\begin{assumption}\label{ass:rbar}
The lower boundary point $\underline r$ is non-attainable by the process $(r_t)$. In particular, under Assumptions \ref{ass:coef}-(i) we require $k\theta>\sigma^2/2$.
\end{assumption}

\begin{proposition}
\label{prop:int-eq}
Under Assumption \ref{ass:rbar}, for any $(t,r,x)\in \cO$ and any stopping time $\tau\in[0, T-t]$, the value function satisfies the following Dynkin's formula:
\begin{equation}
\label{eq:int-1}
\begin{aligned}
v(t,r,x)
&=\E_{r,x}\left[\int_{0}^{\tau}e^{-\int_{0}^{u}r_v dv}Kr_u \mathds{1}_{\{r_u>c(t+u,X_u)\}}du+e^{-\int_{0}^{\tau}r_v dv}v(t+\tau,r_\tau,X_\tau)\right].
\end{aligned}
\end{equation}
\end{proposition}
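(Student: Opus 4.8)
The plan is to obtain \eqref{eq:int-1} by applying the generalised It\^o formula of \cite{cai2021b} to the discounted value process $Y_s:=D_s\,v(t+s,r_s,X_s)$ for $s\in[0,T-t)$, where $D_s:=e^{-\int_0^s r_v\,dv}$. The ingredients that make this formula applicable are: $v\in C^1(\cO)$ (Theorem \ref{prop:v-diff}); $v\in C^{1,2,2}(\cC)$ by \eqref{eqn:pde} and interior parabolic regularity; $v\in C^{1,2,2}$ on $\interior(\cD)$ because there $v(t,r,x)=K-x$ (recall that on $\interior(\cD)$ one has $0<x<K$, by Proposition \ref{prop:D}); and $\cC$ and $\interior(\cD)$ are separated by the graph of the function $b$ of Proposition \ref{prop:D}, which is non-decreasing in each of its two arguments. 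This coordinate-wise monotonicity of the free boundary together with the \emph{global} $C^1$-regularity of $v$ is precisely what \cite{cai2021b} requires; in particular, the resulting formula carries \emph{no} local-time term (this is where $v\in C^1(\cO)$ is used) and does not charge $\partial\cC$, i.e.\ $\int_0^\cdot\mathbf 1_{\{(t+s,r_s,X_s)\in\partial\cC\}}\,ds\equiv 0$, the non-degenerate diffusion $(r,X)$ not charging the monotone surface $\partial\cC$. Finally, Assumption \ref{ass:rbar} together with the standing non-attainability of $\overline r$ ensures that $(r_s)$ never exits the open interval $\cI$, so that the trajectory $(t+s,r_s,X_s)_{s<T-t}$ stays in $\cO$ and the regularity above is available along it.

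A standard computation (the product rule combined with the generalised It\^o formula applied to $v$) gives the drift of $Y$ at time $s$ as $D_s\,[v_t+(\cL-r)v](t+s,r_s,X_s)$. This vanishes on $\cC$ by \eqref{eqn:pde}, while on $\interior(\cD)$ the identity $v=K-x$ gives $v_t=0$, $v_x=-1$, $v_r=v_{rr}=v_{rx}=v_{xx}=0$, so that $v_t+(\cL-r)v=-rx-r(K-x)=-rK$ there. Hence
\begin{equation*}
Y_s=v(t,r,x)-\int_0^s D_u\,Kr_u\,\mathbf 1_{\{r_u>c(t+u,X_u)\}}\,du+M_s,\qquad s\in[0,T-t),
\end{equation*}
where $M_s:=\int_0^s D_u\,\sigma X_u\,v_x(t+u,r_u,X_u)\,dB_u+\int_0^s D_u\,\beta(r_u)\,v_r(t+u,r_u,X_u)\,dW_u$ is a continuous local martingale vanishing at $0$; here we used that the sets $\{r_u>c(t+u,X_u)\}$, $\{(t+u,r_u,X_u)\in\interior(\cD)\}$ and $\{(t+u,r_u,X_u)\in\cD\}$ coincide $du$-a.e.\ along the trajectory (they differ only on a set of zero occupation time).

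To take expectations we localise. Choose stopping times $\zeta_n\uparrow T-t$ such that on $[0,\zeta_n]$ the process $(t+s,r_s,X_s)$ stays in a fixed compact subset of $[0,T)\times\cI\times\R_+$; this is possible by Assumption \ref{ass:rbar}, non-attainability of $\overline r$, and since $X$ reaches neither $0$ nor $\infty$ in finite time. On $[0,\zeta_n]$ the integrand of $M$ is bounded, using the local Lipschitz continuity of $v$ (Proposition \ref{prop:vlip}) to bound $v_x,v_r$ on compacts together with the continuity of $\beta$, so $M^{\zeta_n}$ is a true martingale, $\E_{r,x}[M_{\tau\wedge\zeta_n}]=0$, and
\begin{equation*}
v(t,r,x)=\E_{r,x}\left[\int_0^{\tau\wedge\zeta_n} D_u\,Kr_u\,\mathbf 1_{\{r_u>c(t+u,X_u)\}}\,du+D_{\tau\wedge\zeta_n}\,v(t+\tau\wedge\zeta_n,r_{\tau\wedge\zeta_n},X_{\tau\wedge\zeta_n})\right].
\end{equation*}
Now let $n\to\infty$. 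On $\{r_u>c(t+u,X_u)\}$ one has $r_u>c(t+u,X_u)\ge 0$ by Proposition \ref{prop:boundary-c}(iii), so the first integrand is non-negative and monotone convergence applies. For the second term, the strong Markov property and \eqref{eq:v} give $D_{\tau\wedge\zeta_n}v(t+\tau\wedge\zeta_n,r_{\tau\wedge\zeta_n},X_{\tau\wedge\zeta_n})\le K\,\E_{r,x}[\Xi\,|\,\cF_{\tau\wedge\zeta_n}]$ with $\Xi:=\sup_{0\le s\le T}e^{-\int_0^s r_v\,dv}\in L^1(\P_{r,x})$ by \eqref{eq:integr} (cf.\ \eqref{eq:supmg}), so this family is uniformly integrable and converges $\P_{r,x}$-a.s., by the joint continuity of $v$ and of the sample paths, to $D_\tau\,v(t+\tau,r_\tau,X_\tau)$; hence its expectations converge. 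This yields \eqref{eq:int-1}.

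The main obstacle is the faithful invocation of the generalised It\^o formula of \cite{cai2021b}: one has to check that its hypotheses hold in our setting using only the global $C^1$-regularity of $v$ and the coordinate-wise monotonicity of the free boundary --- with \emph{no} control of the second-order derivatives of $v$ near $\partial\cC$ --- and that, consequently, the expansion of $Y$ neither produces a local-time term nor depends on the behaviour of $v$ on $\partial\cC$. The remaining points --- the algebraic identity $v_t+(\cL-r)v=-rK$ on the stopping set, and the localisation/limit argument, which rests on the boundedness of the normalised value via \eqref{eq:supmg} and on the exponential-discount integrability \eqref{eq:integr} --- are routine.
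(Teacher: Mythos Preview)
Your proposal is correct and follows essentially the same route as the paper: invoke the generalised It\^o formula of \cite{cai2021b} for $D_s v(t+s,r_s,X_s)$ using the global $C^1$-regularity of $v$ and the monotonicity of the boundary, localise to compacts, take expectations, and pass to the limit. The paper is somewhat more explicit in checking the specific hypotheses of \cite[Thm.~2.1]{cai2021b} (local Lipschitz continuity of the covariance coefficients, absolute continuity of the law of $(r_t,X_t)$ so that $\partial\cC$ is not charged, and monotonicity of $c$ via Proposition~\ref{prop:boundary-c}), whereas you bundle these into your final paragraph; and for the limit of the integral term the paper uses dominated convergence via H\"older's inequality and \eqref{eq:integr}--\eqref{eq:subg}, while your observation that $r_u>c(t+u,X_u)\ge 0$ makes the integrand non-negative and permits monotone convergence is a neat simplification.
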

\begin{proof}
See Section \ref{sec:integral}.
\end{proof}

In the proof of the above proposition, we show that the discounted value function satisfies for any stopping time $\tau \in [0, T-t]$
\begin{equation}
\label{eq:ito-v}
\begin{aligned}
&e^{-\int_{0}^{\tau}r_v dv}v(t+\tau,r_\tau,X_\tau)\\
&=v(t,r,x)- \int_0^\tau e^{-\int_{0}^{s}r_v dv}Kr_s \mathds{1}_{\{r_s>c(t+s,X_s)\}}ds
+ \int_0^\tau e^{-\int_{0}^{s}r_v dv}\sigma X_sv_x(t+s,r_s,X_s)dB_s\\
&\quad+ \int_0^\tau e^{-\int_{0}^{s}r_v dv}\beta(r_s)v_r(t+s,r_s,X_s)dW_s.
\end{aligned}
\end{equation}
This representation will play a fundamental role in deriving a hedging strategy for the American put option in Section \ref{sec:mr6}.

\begin{remark}\label{rem:stop_indicator}
By arguments in Appendix \ref{app:reg}, in particular Remark \ref{rem:lambda}, the distribution of $(r_u, X_u)$, for any $u > 0$, is absolutely continuous with respect to Lebesgue measure on any compact set. Hence, for any $(t,r,x)\in \cO$,
\begin{equation}\label{eqn:stop_indicator}
\E_{r,x}\left[\int_{0}^{T-t} \big|\mathds{1}_{\{ (t+u, r_u, X_u) \in \cD\}} - \mathds{1}_{\{r_u>c(t+u,X_u)\}} \big| du \right] = 0.
\end{equation}
When we apply this result to \eqref{eq:int-1}, we obtain
\[
v(t,r,x) =\E_{r,x}\left[\int_{0}^{\tau}e^{-\int_{0}^{u}r_v dv}Kr_u \mathds{1}_{\{ (t+u, r_u, X_u) \in \cD\}}du+e^{-\int_{0}^{\tau}r_v dv}v(t+\tau,r_\tau,X_\tau)\right].
\]
\end{remark}

\subsection{Early exercise premium}\label{sec:mr4}

Inserting $\tau = T-t$ in \eqref{eq:int-1}, we obtain a decomposition of the American option price into a sum of the European option price $v_e$ and  an \emph{early exercise premium} $v_p$ (see \cite{rutkowski1994early} for a derivation of this formula only using general martingale theory):
\begin{equation}\label{eqn:decomposition}
v(t,r,x) = v_p(t,r,x;T,b)+v_e(t,r,x;T),
\end{equation}
where
\begin{equation}\label{eqn:v_e}
\begin{aligned}
v_e(t,r,x;T) &= \E_{r, x}\left[e^{-\int_{0}^{T-t}r_v dv}(K - X_{T-t})^+\right],\\
v_p(t,r,x;T,b) 
&= \E_{r, x}\left[\int_{0}^{T-t}\!\!e^{-\int_{0}^{u}r_v dv}Kr_u \mathds{1}_{\{r_u>c(t+u,X_u)\}}du\right]\\
&= \E_{r, x}\left[\int_{0}^{T-t}\!\!e^{-\int_{0}^{u}r_v dv}Kr_u \mathds{1}_{\{X_u<b(t+u,r_u)\}}du\right].\\
\end{aligned}
\end{equation}
The last equality follows from $r > c(t,x) \Leftrightarrow x < b(t,r)$ by construction of $b$ as the generalised inverse of $c$.  By Remark \ref{rem:stop_indicator} the early exercise premium also reads
\[
v_p(t,r,x;T,b) 
= \E_{r, x}\left[\int_{0}^{T-t}\!\!e^{-\int_{0}^{u}r_v dv}Kr_u \mathds{1}_{\{(t+u,r_u,X_u)\in\cD\}}du\right].
\]

\subsection{Integral equation for the stopping boundary}\label{sec:mr5}

Proposition \ref{prop:int-eq} provides a characterisation of the optimal stopping boundary $c(t,x)$. Indeed, for any $(t,x) \in [0, T) \times \R_+$ such that $c(t,x) \in \cI$, inserting $\tau=T-t$ and $r=c(t,x)$ in \eqref{eq:int-1} yields an integral equation for $c$:
\begin{equation}
\label{eq:int-2}
\begin{aligned}
(K-x)^+
=\E_{c(t,x),x}\left[\int_{0}^{T-t}e^{-\int_{0}^{u}r_v dv}Kr_u \mathds{1}_{\{r_u>c(t+u,X_u)\}}du+e^{-\int_{0}^{T-t}r_v dv}(K - X_{T-t})^+\right].
\end{aligned}
\end{equation}
The condition that $c(t,x) \in \cI$ is necessary as $c$ can take values $\underline{r}$ and $\overline{r}$ which do not belong do the state space $\cI$, and the interest rate process $r$ may not be started from there. Notice also that $c(t,x) \notin \cI$ when $x \ge K$ so the left-hand side of \eqref{eq:int-2} can be replaced by $(K-x)$. In line with well-known results for American options with constant interest rate \citep{peskir2005american}, it also turns out that $c$ is the unique solution of the integral equation.

\begin{proposition}\label{prop:int-eq-1}
Under Assumptions \ref{ass:coef_H} and \ref{ass:rbar}, the function $c$ is the unique function $\phi:[0, T) \times \R_+ \to [0, \overline{r}]$ such that:
\begin{enumerate}
 \item is continuous, non-decreasing in $x$ and non-increasing in $t$, with $\phi(t,x)=\bar r$ for $x\ge K$,
 \item $\phi$ satisfies \eqref{eq:int-2} (with $c$ therein replaced by $\phi$) for all $(t,x) \in [0, T) \times \R_+$ for which $\phi(t,x) \in \cI$, and $\{(t,x) \in [0, T) \times \R_+:\ \phi(t,x)\in\cI\}\ne\varnothing$.
\end{enumerate}
\end{proposition}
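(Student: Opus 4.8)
The plan is to follow the now-classical strategy introduced by Peskir for the constant-rate American put and adapted to multi-factor settings (e.g.\ in \cite{Kitarbayev2018}). We already know from Propositions \ref{prop:boundary-c}, \ref{prop:conti-c} and \ref{prop:int-eq} that $c$ itself satisfies properties (1)--(2), so only uniqueness is at stake. Let $\phi$ be any function satisfying (1)--(2). The first step is to use $\phi$ to build a candidate value function $v^\phi$ by inserting $\phi$ into the early exercise premium representation: define
\begin{equation*}
v^\phi(t,r,x):=\E_{r,x}\!\left[\int_0^{T-t}\!\!e^{-\int_0^u r_v\,dv}Kr_u\mathds{1}_{\{r_u>\phi(t+u,X_u)\}}\,du+e^{-\int_0^{T-t}r_v\,dv}(K-X_{T-t})^+\right].
\end{equation*}
Using the Markov property one checks, exactly as in \cite{peskir2005american}, that for any stopping time $\tau\le T-t$,
\begin{equation*}
v^\phi(t,r,x)=\E_{r,x}\!\left[\int_0^{\tau}\!\!e^{-\int_0^u r_v\,dv}Kr_u\mathds{1}_{\{r_u>\phi(t+u,X_u)\}}\,du+e^{-\int_0^{\tau}r_v\,dv}v^\phi(t+\tau,r_\tau,X_\tau)\right],
\end{equation*}
and that $v^\phi$ is continuous and is $C^{1,2,2}$ in the (open) sets $\cC_\phi:=\{(t,r,x)\in\cO:r<\phi(t,x)\}$ and $\interior\cD_\phi$, where $\cD_\phi:=\{r\ge\phi(t,x)\}\cup(\{T\}\times\cI\times\R_+)$, solving $v^\phi_t+(\cL-r)v^\phi=0$ on $\cC_\phi$ and $v^\phi_t+(\cL-r)v^\phi=-Kr$ on $\interior\cD_\phi$ (this uses Assumption \ref{ass:coef_H} for the interior regularity, via standard parabolic Schauder theory, and Assumption \ref{ass:rbar} so that the relevant expectations and It\^o's formula behave well near $\underline r$).

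The next step is to show $v^\phi=(K-x)^+$ on $\cD_\phi$ and $v^\phi\ge(K-x)^+$ everywhere. The identity $v^\phi(t,r,x)=(K-x)$ for points with $r\ge\phi(t,x)$ (i.e.\ $x\le b_\phi(t,r)<K$, so $(K-x)^+=K-x$) follows by taking $\tau=\tau_{\cC_\phi}$, the first entry time of $(t+s,r_s,X_s)$ into $\cC_\phi$: starting inside $\cD_\phi$, before $\tau_{\cC_\phi}$ the integrand indicator equals $1$ and $r_uX_u\,du$-type terms combine with $(K-X_t)$ via It\^o's formula on the payoff to cancel, which is precisely the content of condition (2) pushed along the trajectory. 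For the global inequality $v^\phi\ge(K-x)^+$, one applies the above martingale identity with $\tau$ the first hitting time of $\cD_\phi$ and uses that on $\cD_\phi$ one has $v^\phi=K-x$ together with the supermartingale property of the discounted payoff under $r_u>0$; more precisely, the term $\int_0^\tau e^{-\int_0^u r_v\,dv}Kr_u\,du$ is nonnegative and accounts exactly for the drift of $e^{-\int_0^\cdot r_v\,dv}(K-X_\cdot)^+$ on $\{X<K\}$, as in the constant-rate argument. Once $v^\phi\ge(K-x)^+$ and $v^\phi$ solves the variational inequality in the appropriate weak/classical sense, a verification argument shows $v^\phi\ge v$.

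The reverse inequality $v^\phi\le v$, and hence $v^\phi=v$, is obtained by showing $\cD_\phi\subseteq\cD$, i.e.\ $\phi\le c$: this is the step where Peskir's insight that the equation need only hold \emph{on} the boundary is used, combined with the fact that $v=v^c$ by Proposition \ref{prop:int-eq} with $\tau=T-t$. Indeed, $v^\phi=v$ on $\cD_\phi$ (both equal $K-x$ there), and comparing the martingale representations of $v^\phi$ and $v=v^c$ along trajectories started in $\cC\cap\cC_\phi$ forces the indicator sets, hence the boundaries, to coincide outside a negligible set; right-continuity and monotonicity in (1) then upgrade this to $\phi\equiv c$. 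The standard way to make the comparison rigorous is: assume for contradiction that $\phi(t_0,x_0)<c(t_0,x_0)$ (or the reverse) at some point; consider the difference $D:=v-v^\phi$, which is continuous, vanishes on $\cD_\phi\cup\cD$, and satisfies a parabolic equation with a sign-definite source on the symmetric-difference region $\{r_u>\phi\}\triangle\{r_u>c\}$; a maximum-principle/first-exit-time argument applied to $e^{-\int_0^\cdot r_v\,dv}D(t+\cdot,r_\cdot,X_\cdot)$ yields $D\equiv0$ on that region and the contradiction.

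The main obstacle I anticipate is the interior regularity and the boundary behaviour of $v^\phi$: to run the It\^o/Dynkin identity for $v^\phi$ one needs $v^\phi\in C^{1,2,2}(\cC_\phi)\cap C^{1,2,2}(\interior\cD_\phi)$ together with enough integrability of $v^\phi_x$ and $v^\phi_r$ (and control near $\underline r$, $\overline r$) to justify that the stochastic integrals are true martingales — exactly the kind of estimates that for $v$ itself were delivered by Theorem \ref{prop:v-diff} and Section \ref{sec:hedge}, but which must now be re-derived for the \emph{a priori} less regular object $v^\phi$ built from a merely continuous, monotone $\phi$. This is handled by the continuity of $\phi$ (Assumption \ref{ass:coef_H} entering through Schauder estimates local to $\cC_\phi$) plus the integrability bounds \eqref{eq:integr} and \eqref{eq:subg}; one does \emph{not} need $C^1$-regularity of $v^\phi$ across $\partial\cC_\phi$ because the early exercise premium representation already encodes the correct pasting, so smooth-fit for $v^\phi$ is never invoked.
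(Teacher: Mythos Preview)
Your overall framework is Peskir's, which is also what the paper does, but two of your central steps are misdirected. First, the ``main obstacle'' you anticipate --- proving $v^\phi\in C^{1,2,2}$ on $\cC_\phi$ and $\interior\cD_\phi$ in order to run It\^o --- is a red herring. The Dynkin-type identity
\[
v^\phi(t,r,x)=\E_{r,x}\Big[\int_0^\tau e^{-\int_0^u r_v\,dv}Kr_u\mathds{1}_{\{r_u>\phi(t+u,X_u)\}}\,du+e^{-\int_0^\tau r_v\,dv}v^\phi(t+\tau,r_\tau,X_\tau)\Big]
\]
follows for \emph{any} stopping time $\tau\le T-t$ directly from the Markov property of $(r,X)$ and optional sampling, with no PDE regularity for $v^\phi$ required; this is exactly how the paper proceeds. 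Second, your inequality directions are reversed and partly circular. Taking $\tau$ to be the first entry of $(t+s,r_s,X_s)$ into $\{r\ge\phi\}$ and using $v^\phi=(K-X_\tau)^+$ there gives $v^\phi(t,r,x)=\E_{r,x}[e^{-\int_0^\tau r_v\,dv}(K-X_\tau)^+]\le v(t,r,x)$ by the definition of $v$ --- so the natural step yields $v^\phi\le v$, not $v^\phi\ge v$. Your route via ``$v^\phi\ge(K-x)^+$ and verification'' would require $v^\phi\ge(K-x)^+$ on $\cC_\phi$, which you have not established (your Tanaka sketch actually produces a term $-\E[\int_0^\tau e^{-\int_0^u r_v\,dv}Kr_u\mathds{1}_{\{X_u<K\}}\,du]$ of the wrong sign when $r_u>0$). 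Moreover, your statement ``$\cD_\phi\subseteq\cD$, i.e.\ $\phi\le c$'' is backwards: $\{r\ge\phi\}\subseteq\{r\ge c\}$ is equivalent to $\phi\ge c$, and in any case $\phi\le c$ would enlarge the indicator set and give $v^\phi\ge v$, not the reverse.

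The paper's route, which you should follow, is: (i) show $v^\phi=(K-x)^+$ on $\{r\ge\phi\}$ by comparing the Dynkin identity for $v^\phi$ with Tanaka's formula applied to $(K-X)^+$ up to the first exit from $\{r>\phi\}$ (the local time at $K$ vanishes there because $\phi(t,x)=\bar r$ for $x\ge K$ forces $X_u<K$); (ii) deduce $v^\phi\le v$ as above; (iii) show $\phi\le c$ by contradiction --- start from $r>\phi(t,x)>c(t,x)$, run both Dynkin identities for $v$ and $v^\phi$ up to the first time $r_s\le c(t+s,X_s)$, use $v^\phi=v=(K-x)^+$ at the starting point and $v^\phi\le v$ at the terminal point to force $\E\big[\int_0^{\tau}\mathds{1}_{\{c<r_u\le\phi\}}\,du\big]=0$, contradicting continuity of $\phi,c$ and full support of $(r_u,X_u)$; (iv) show $\phi\ge c$ by a symmetric contradiction from a point with $\phi(t,x)<r<c(t,x)$. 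No smooth fit or interior PDE theory for $v^\phi$ is ever invoked.
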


The integral equation \eqref{eq:int-2} has an analogue for the function $b(t,r)$ from Proposition \ref{prop:D}. Indeed, for $b(t,r) > 0$, taking $x=b(t,r)$ and $\tau=T-t$ in Proposition \ref{prop:int-eq} and using $v(t,r,b(t,r))=K-b(t,r)$ we see that $b$ solves the integral equation:
\begin{equation}
\label{eq:numer-1}
\begin{aligned}
K-b(t,r)
&=
\E_{r, b(t,r)}\left[\int_{0}^{T-t}\!\!e^{-\int_{0}^{u}r_v dv}Kr_u \mathds{1}_{\{X_u<b(t+u,r_u)\}}du\right]\\
&\hspace{13pt}+\E_{r, b(t,r)}\left[e^{-\int_{0}^{T-t}r_v dv}(K - X_{T-t})^+\right],
\end{aligned}
\end{equation} 
where we use $\{X_u<b(t+u,r_u)\}=\{r_u > c(t+u,X_u)\}$ which follows from $x>b(t,r)\Leftrightarrow r<c(t,x)$ by construction of $b$ as the generalised inverse of $c$. This parametrisation of the integral equation extends the one obtained in the classical American put problem with constant interest rate to our two-factor set-up. Once again we can prove uniqueness of the solution to the integral equation but without requiring continuity of $b$, which is a non-standard result for this type of equations.

\begin{corollary}\label{cor:int-eq}
Under the assumptions of Proposition \ref{prop:int-eq-1},
the function $b$ is the unique function $\psi:[0, T) \times \cI \to [0,K)$ such that:
\begin{enumerate}
\item $t\mapsto \psi(t,r)$ and $r\mapsto \psi(t,r)$ are right-continuous and non-decreasing,
\item the generalised left-continuous inverse $\phi(t,x) := \inf\{r \in \cI: \psi(t,r) \ge x \}$ is continuous in $(t,x)$, non-decreasing in $x$ and non-increasing in $t$, 
\item $\psi$ satisfies \eqref{eq:numer-1} with ($b$ therein replaced by $\psi$) for all $(t,r) \in [0, T) \times \cI$ such that $\psi(t,r) > 0$ and $\{(t,x) \in [0, T) \times \cI:\ \psi(t,r) > 0 \} \ne \varnothing$.
\end{enumerate}
\end{corollary}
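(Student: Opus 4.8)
The plan is to transfer uniqueness for the boundary $c$ (Proposition \ref{prop:int-eq-1}) to uniqueness for $b$ by means of the generalised-inverse correspondence, being careful about the fact that $b$ itself is not assumed continuous. First I would check that $b$ indeed satisfies the three listed properties: (1) is Proposition \ref{prop:D}, (2) follows from Proposition \ref{prop:conti-c} together with the standard fact that the left-continuous generalised inverse of $b$ recovers $c$ (using that $c$ is left-continuous and non-decreasing in $x$, non-increasing in $t$, by Proposition \ref{prop:boundary-c}(i)), and (3) is the integral equation \eqref{eq:numer-1} derived in the text by plugging $x=b(t,r)$, $\tau=T-t$ into Proposition \ref{prop:int-eq} and using $v(t,r,b(t,r))=K-b(t,r)$, which holds because $(t,r,b(t,r))\in\cD$ whenever $b(t,r)>0$.

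For uniqueness, suppose $\psi$ satisfies (1)--(3), and let $\phi$ be its left-continuous generalised inverse as in (2). The key step is to show that $\phi$ satisfies the hypotheses of Proposition \ref{prop:int-eq-1}, from which $\phi=c$, and then to conclude $\psi=b$ by inverting back. That $\phi$ is continuous, non-decreasing in $x$ and non-increasing in $t$ is exactly assumption (2). The constraint $\phi(t,x)=\bar r$ for $x\ge K$ follows from $\psi(t,r)<K$ for all $r$ (assumption that $\psi$ maps into $[0,K)$): indeed $\{r:\psi(t,r)\ge x\}=\emptyset$ for $x\ge K$, so the infimum is $\overline r$ by convention. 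It remains to verify that $\phi$ satisfies the integral equation \eqref{eq:int-2}: fix $(t,x)$ with $\phi(t,x)=:r\in\cI$; one needs $\psi(t,r)>0$ so that \eqref{eq:numer-1} applies, and then the equivalence $\{X_u<\psi(t+u,r_u)\}=\{r_u>\phi(t+u,X_u)\}$ — which is the defining property of the generalised inverse pair $(\psi,\phi)$ — converts \eqref{eq:numer-1} at the point $(t,r)$ with $x=\psi(t,r)$ into \eqref{eq:int-2} at $(t,\psi(t,r))$ with boundary value $r=\phi(t,\psi(t,r))$. A small case analysis is needed to pass between the pair $(t,x)$ and $(t,\psi(t,r))$, handling the possibility that $\phi$ has a flat stretch (so $x$ need not equal $\psi(t,\phi(t,x))$); here one uses that \eqref{eq:int-2} holds for all $x$ in such a flat stretch once it holds for one of them, since its right-hand side depends on $x$ only through the starting point of $X$, while the left-hand side $(K-x)$ — note $\phi(t,x)\in\cI$ forces $x<K$ — varies, so one must instead argue that flat stretches of $\phi$ below $K$ correspond to $\psi$ not taking the value $\phi(t,x)$, and invoke \eqref{eq:numer-1} at the left endpoint $\psi(t,\phi(t,x))$ of the flat piece, where the value $(K-\psi(t,\phi(t,x)))$ matches after a short continuity/limiting argument along $x$. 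Once $\phi=c$ is established, the identity $\psi(t,r)=\inf\{x:\phi(t,x)>r\}=\inf\{x:c(t,x)>r\}=b(t,r)$ holds at every $(t,r)$ because both $\psi$ and $b$ are right-continuous and non-decreasing and share the same left-continuous inverse, so two right-continuous monotone functions with the same generalised inverse coincide.

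The main obstacle I anticipate is precisely the bookkeeping around flat stretches and jumps of $\phi$ and $\psi$: the integral equation for $\phi$ is only required to hold where $\phi(t,x)\in\cI$, and one must make sure that the translation via the generalised inverse does not lose the equation on a set of $x$'s where $\psi$ skips over the value $\phi(t,x)$ (a jump of $\psi$ corresponds to a flat piece of $\phi$, and vice versa). Handling this cleanly requires carefully pairing each point $(t,x)$ with an appropriate point in the graph of $\psi$ and exploiting left/right continuity on the correct side, together with the observation that the right-hand side of \eqref{eq:int-2} is continuous in $x$ (by dominated convergence, using the integrability bound \eqref{eq:integr} and continuity of $X^{r,x}$ in $x$) so that the equation extends by limits to the closure of the set where it is known. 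Everything else — verifying $b$ satisfies (1)--(3), and the final inversion argument — is routine given the results already established.
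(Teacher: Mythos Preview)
Your strategy---pass to the generalised inverse $\phi$, invoke Proposition~\ref{prop:int-eq-1}, then invert back---is close in spirit to the paper's, but the paper does \emph{not} apply Proposition~\ref{prop:int-eq-1} as a black box. Its proof is a single line: repeat the four steps of that proposition's proof, systematically using the equivalence $x>\psi(t,r)\Longleftrightarrow r<\phi(t,x)$ to rewrite every stopping set and every indicator. The difference is exactly where your ``main obstacle'' sits. To apply Proposition~\ref{prop:int-eq-1} as stated you would need \eqref{eq:int-2} for $\phi$ at \emph{every} $(t,x)$ with $\phi(t,x)\in\cI$, and on a flat stretch of $\phi$ (i.e.\ across a jump of $\psi$) assumption~(3) gives you only the single endpoint $x=\psi(t,\phi(t,x))$; the continuity argument you sketch cannot propagate the equation along the flat piece, since both sides of \eqref{eq:int-2} genuinely depend on $x$. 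Re-running the four steps avoids this: wherever the original proof appeals to the integral equation it does so at a point of the form $(t,\tilde c(t,x),x)$, and in the re-run one instead appeals to assumption~(3) at points of the form $(t,r,\psi(t,r))$---which is precisely what~(3) provides---while the continuity of $\phi$ (assumption~(2)) handles the stopping times and the topological parts of Steps~1--4. Your final inversion step (two right-continuous non-decreasing functions with the same left-continuous generalised inverse coincide) is correct and is needed in either route.
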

\noindent Notice that $\phi(t,x)=\bar r$ for $x\ge K$ follows immediately from $\psi(t,r) < K$.

Integral equations \eqref{eq:int-2} and \eqref{eq:numer-1} offer a method to compute the optimal stopping boundary without using the value function $v$. We will demonstrate it in Section \ref{sec:numerics} where we design a numerical method for solving such integral equations. Knowing the stopping boundary $b$, the decomposition \eqref{eqn:decomposition} can be employed to obtain an efficient numerical estimate of the option value. This offers an alternative to numerical solution of the variational inequality for the value function $v$, and, subsequently, extraction of the optimal exercise boundary.

\subsection{Hedging portfolio}\label{sec:mr6}
Thanks to the change of variable formula \eqref{eq:ito-v} we are also able to rigorously construct a hedging portfolio that (super)replicates the option payoff at all times. This is based on the classical delta-hedging ideas in the Black and Scholes model but its rigorous mathematical derivation requires smoothness of the option price function which was not previously established in the literature.

Consider a market comprising three instruments: the money market account $M_t:=e^{\int_{0}^{t}r_udu}$, the risky stock with the dynamics \eqref{eq:X}, and a zero-coupon bond with maturity $T$. We will construct a hedging portfolio for the American option on this market. We remark that the zero-coupon bond can be replaced by any other financial instrument whose dynamics depends on the Brownian motion $W$ driving the interest rate, see \cite{karatzas1988pricing}.

The risk-neutral price of the zero-coupon bond at time $t\in[0,T]$ is given by
\begin{equation}\label{eqn:bond}
P(t,r):=\E_{r}\left[e^{-\int_{0}^{T-t}r_udu}\right],\quad P(T,r)=1.
\end{equation} 
By standard arguments based on pathwise continuity of the flow $(t,r)\mapsto r^r_t(\omega)$, one can easily show that $P$ is continuous on $[0,T]\times\cI$. Then, under Assumption \ref{ass:coef}, the classical PDE theory \citep[Thm.\ 9, Ch.\ 4, Sec.\ 3]{Friedman} guarantees that $P$ is the unique classical solution of the boundary value problem
\begin{align*}
&(\partial_t+ \cL_{r}-r) u(t, r)=0,\qquad (t,r) \in [0, T]\times (a,b),\\
&u(t,r)=P(t,r),\qquad\qquad\quad\:\:\, t\in[0,T),\, r\in\{a,b\}\\
&u(T,r)=1, \qquad\qquad\qquad\quad\:\:\, r\in [a,b],
\end{align*}
where $\cL_{r} = \alpha(r) \partial_r + \beta(r)^2/2 \partial_{rr}$ and $(a,b)\subset \cI$ is an arbitrary bounded interval. In particular, by arbitrariness of $(a,b)$ we have $P\in C^{1,2}([0,T)\times\cI)$ and 
\[
(\partial_t+ \cL_{r}-r) P(t, r)=0,\qquad (t,r) \in [0, T]\times \cI.
\]  
Then, using It\^{o}'s formula, the discounted bond price dynamics reads
\begin{equation}
\label{eq:zero-coup}
de^{-\int_{0}^{s}r_udu}P(s,r_s)=P_r(s,r_s)\beta(r_s)dW_s.
\end{equation} 

Denote by $\phi^{(1)},\phi^{(2)}, \phi^{(3)}$ the holdings in the stock, the bond and the money account, respectively. Let $C$ be a non-decreasing continuous process starting from $0$ modelling consumption. The value of a self-financing portfolio starting at time $0$ from $v(0, r, x)$ is
\begin{equation}\label{eqn:h_2}
\Pi_{s} = v(0,r,x) + \int_0^s \phi^{(1)}_u d X_u + \int_0^s \phi^{(2)}_u dP(u,r_u) + \int_0^s \phi^{(3)}_u dM_u - C_s,\quad s \in [0, T].
\end{equation}
The portfolio is \emph{admissible} if all integrals above are semimartingales. Taking the money-market account as a numer\'{a}ire, we obtain from equations \eqref{eqn:h_2} and
\begin{equation}\label{eqn:h_1}
\Pi_{s}:=\phi^{(1)}_s X_s+ \phi^{(2)}_s P(s,R_s)+\phi^{(3)}_s M_s,\quad s \in [0, T],
\end{equation}
that the dynamics of the discounted portfolio value reads
\begin{align}
de^{-\int_{0}^{s}r_udu}\Pi_{s}&=\phi^{(1)}_s de^{-\int_{0}^{s}r_udu}X_s+ \phi^{(2)}_s de^{-\int_{0}^{s}r_udu}P(s,r_s) - e^{-\int_{0}^{s}r_udu} dC_s \nonumber\\
&=e^{-\int_{0}^{s}r_udu}\phi^{(1)}_s\sigma X_s dB_s+ e^{-\int_{0}^{s}r_udu}\phi^{(2)}_s\beta(r_s)P_r(s,r_s) dW_s - e^{-\int_{0}^{s}r_udu} dC_s.\label{eqn:disc_port}
\end{align}
This means that a self-financing portfolio is uniquely determined by the processes $\phi^{(1)}, \phi^{(2)}$ and $C$.

Comparing \eqref{eqn:disc_port} with \eqref{eq:ito-v}, a candidate for the hedging strategy is given by
\begin{equation}\label{eqn:hedge}
\phi^{(1)}_s=v_x(s,r_s,X_s),\quad \phi^{(2)}_s=\frac{v_r(s,r_s,X_s)}{P_r(s,r_s)}, 
\quad 
C_s = \int_0^s K r_u \mathds{1}_{\{r_u>c(u, X_u)\}} du.
\end{equation}
We can indeed prove that such portfolio strategy is admissible and replicates the option's payoff.
\begin{proposition}\label{prop:hedge}
Under Assumption \ref{ass:rbar} the portfolio $(\phi^{(1)}, \phi^{(2)}, C)$ is admissible and replicates the payoff of the American put option.
\end{proposition}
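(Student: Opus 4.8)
The plan is to identify the candidate portfolio value with the value function evaluated along the state process, $\Pi_s=v(s,r_s,X_s)$, by comparing the discounted dynamics \eqref{eqn:disc_port} of a self-financing portfolio with the It\^o decomposition \eqref{eq:ito-v}; replication then follows from $v(T,\cdot,\cdot)=(K-\cdot)^+$ and $v\ge(K-\cdot)^+$. First I would settle \textbf{admissibility}. The stock holding $\phi^{(1)}_s=v_x(s,r_s,X_s)$ is bounded, since $-1\le v_x\le 0$ by convexity and monotonicity of $v$ in $x$ (Proposition \ref{prop:vmonot}(iii)) together with the Lipschitz bound on $v$ (Proposition \ref{prop:vlip}); being moreover continuous and adapted it is integrable against the continuous semimartingale $X$, so $\int_0^\cdot\phi^{(1)}_u\,dX_u$ is a semimartingale. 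For the bond holding $\phi^{(2)}_s=v_r(s,r_s,X_s)/P_r(s,r_s)$, writing $dP(s,r_s)=r_sP(s,r_s)\,ds+\beta(r_s)P_r(s,r_s)\,dW_s$, I must check that $\int_0^{T-t}|v_r\beta|^2\,ds<\infty$ and $\int_0^{T-t}(|v_r|/|P_r|)\,r_sP\,ds<\infty$ almost surely; this is exactly what the auxiliary estimates of Section \ref{sec:hedge} are designed to provide, controlling the ratio $v_r/P_r$ (in particular near maturity, where both numerator and denominator vanish) together with enough integrability in $(r,x)$. The money-account holding $\phi^{(3)}$ is then defined through \eqref{eqn:h_1}, and $\int_0^\cdot\phi^{(3)}_u\,dM_u=\int_0^\cdot\phi^{(3)}_uM_u\,r_u\,du$ is finite-variation and a.s.\ finite by the same estimates together with $\int_0^Tr_u\,du<\infty$ a.s.\ (from \eqref{eq:subg}). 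Assumption \ref{ass:rbar} is used here so that $(r_s)$ never reaches $\underline r$ and the derivatives $v_x,v_r$ are well-defined along the whole trajectory. Finally $C_s=\int_0^sKr_u\mathds{1}_{\{r_u>c(u,X_u)\}}\,du$ is continuous, non-decreasing, null at $0$ and a.s.\ finite, and it increases only while $(u,r_u,X_u)\in\cD$, which is the defining property of a hedging strategy for an American option.

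Next, \textbf{replication}. By the reduction \eqref{eqn:h_2}--\eqref{eqn:h_1} a self-financing portfolio is determined by $(\phi^{(1)},\phi^{(2)},C)$ through \eqref{eqn:disc_port}; with the choices \eqref{eqn:hedge} this becomes
\begin{align*}
d\big(e^{-\int_0^sr_u du}\Pi_s\big)=e^{-\int_0^sr_u du}\Big(\sigma X_s v_x(s,r_s,X_s)\,dB_s+\beta(r_s)v_r(s,r_s,X_s)\,dW_s-Kr_s\mathds{1}_{\{r_s>c(s,X_s)\}}\,ds\Big),
\end{align*}
which is precisely the dynamics of $e^{-\int_0^sr_u du}v(s,r_s,X_s)$ obtained by differentiating \eqref{eq:ito-v} (with $t=0$). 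Since $\Pi_0=v(0,r,x)=v(0,r_0,X_0)$, uniqueness for this linear equation gives $\Pi_s=v(s,r_s,X_s)$ for all $s\in[0,T]$, $\P_{r,x}$-a.s. In particular $\Pi_T=v(T,r_T,X_T)=(K-X_T)^+$, so the portfolio replicates the option payoff, and $\Pi_\tau=v(\tau,r_\tau,X_\tau)\ge(K-X_\tau)^+$ for every stopping time $\tau\le T$, so it super-replicates the exercise value at all intermediate times, with equality exactly on $\cD$; this completes the verification that $(\phi^{(1)},\phi^{(2)},C)$ is a hedging portfolio.

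The \textbf{main obstacle} is the admissibility of the bond position, that is, controlling $v_r/P_r$ up to maturity: one needs estimates sharp enough to show that $|v_r(s,r,x)|$ decays at least as fast as $|P_r(s,r)|$ as $s\uparrow T$ (both of order $T-s$ in affine models), uniformly enough in the remaining variables to survive integration against $r_sP(s,r_s)\,ds$ and to make $\phi^{(3)}$ integrable against $dM$. This is the content of Section \ref{sec:hedge}; once those bounds are in hand, the rest is the bookkeeping described above, all of whose It\^o-calculus steps are licensed by Proposition \ref{prop:int-eq} and by $P\in C^{1,2}([0,T)\times\cI)$.
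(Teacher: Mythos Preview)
Your replication argument is essentially the paper's: match \eqref{eqn:disc_port} against \eqref{eq:ito-v}, use $\Pi_0=v(0,r,x)$, and conclude $\Pi_s=v(s,r_s,X_s)$; this is exactly what the paper means by ``follows from the construction and \eqref{eqn:hedge}''.

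The gap is in your admissibility argument for the bond position. You work with the \emph{undiscounted} dynamics $dP(s,r_s)=r_sP(s,r_s)\,ds+\beta(r_s)P_r(s,r_s)\,dW_s$ and correctly note that the martingale part collapses to the condition $\int_0^{T}|v_r\beta|^2\,ds<\infty$ (the $P_r$ cancels). But you then add a drift condition $\int_0^{T}(|v_r|/|P_r|)\,|r_s|P\,ds<\infty$ and assert that ``this is exactly what the auxiliary estimates of Section~\ref{sec:hedge} are designed to provide, controlling the ratio $v_r/P_r$''. This is a misreading: Lemma~\ref{lem:integr} proves only \eqref{eqn:i1}--\eqref{eqn:i2}, i.e.\ bounds on $e^{-2\int r}v_r^2\beta^2$ and $e^{-p\int r}|v_x|^pX^p$; nowhere in Section~\ref{sec:hedge} (or anywhere else in the paper) is the ratio $v_r/P_r$ estimated. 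Your ``main obstacle'' paragraph therefore identifies and then appeals to a result that the paper does not contain and that you do not supply.

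The paper avoids this ratio altogether by passing to the money-market num\'eraire. The discounted bond is drift-free, $d\big(e^{-\int_0^s r_u du}P(s,r_s)\big)=\beta(r_s)P_r(s,r_s)\,dW_s$ (see \eqref{eq:zero-coup}), so in the discounted portfolio dynamics \eqref{eqn:disc_port} the \emph{only} condition involving $\phi^{(2)}$ is the square-integrability of $\phi^{(2)}_s\beta(r_s)P_r(s,r_s)=v_r(s,r_s,X_s)\beta(r_s)$, and this is exactly \eqref{eqn:i2} (in $L^1$, hence a.s.). This cancellation of $P_r$ is the key simplification; once you adopt the discounted formulation there is no drift term for the bond and the ratio $v_r/P_r$ never appears. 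Your approach of bounding that ratio near maturity (matching an $O(T-s)$ upper bound on $|v_r|$ implicit in \eqref{eqn:v_r1}--\eqref{eqn:v_r2} against a lower bound on $|P_r|$) is plausible and would give an alternative proof, but it requires genuine additional estimates that you have not provided.
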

\begin{proof}
See Section \ref{sec:hedge}.
\end{proof}
From \eqref{eqn:disc_port} and \eqref{eqn:hedge} one can immediately see that if the option holder exercises the option according to the optimal rule \eqref{eq:OST}, no consumption is available to the seller.

%%%%%%%%%%%%%%%%%%%%%%%%%%%%%

\section{Numerical analysis}\label{sec:numerics}
In the numerical analysis, we assume that the interest rate $r$ follows Vasicek model. In particular, this means that $\cI=\R$ and
\begin{equation}
\label{eq:numer-R}
dr_t=\kappa(\theta-r_t)dt+\beta dW_t,
\end{equation} 
whose explicit solution is given by
\begin{equation}
\label{eq:numer-R-ex}
r_s=r_te^{-(s-t)\kappa}+\theta(1-e^{-(s-t)\kappa})+\beta e^{-s\kappa}\int_{t}^{s}e^{\kappa u}dW_u,\quad s\ge t \ge 0.
\end{equation}

We first derive a numerical method for computing the optimal stopping boundary using the integral equation from \eqref{eq:numer-1}. Once the boundary is obtained, we use it to also compute the value function via \eqref{eqn:decomposition}. Section \ref{subsec:num} contains an analysis of the effect of parameters on the stopping boundary and the value function.

\subsection{Computational approach}
With an abuse of notation, we denote by $P(t, T) = P(t, r, T)$ the time-$t$ price of a zero-coupon bond with maturity $T$ (c.f. \eqref{eqn:bond}); the dependence on the initial state $r$ is indicated in the subscript of the expectation operator. By Proposition \ref{prop:D} we have $b(t,r) = 0$ for $r < 0$, i.e., it is never optimal to stop for negative values of the interest rate. To compute $b(t,r)$ for $r \ge 0$, recall the integral equation \eqref{eq:numer-1}: for $(t,r) \in [0, T) \times \cI$ such that $b(t,r) > 0$, we have
\begin{equation}\label{eqn:b_fixed}
K-b(t,r) = v_p(t,r,b(t,r);T, b)+v_e(t,r,b(t,r);T),
\end{equation}
where $v_e$ and $v_p$ are stated in \eqref{eqn:v_e}. With the last parameter $b$ of $v_p$, we emphasise the dependence on the function $b$:
\[
v_p(t,r,x;T, b) = \E_{r, x}\left[\int_{0}^{T-t}\!\!e^{-\int_{0}^{u}r_v dv}Kr_u \mathds{1}_{\{X_u<b(t+u,r_u)\}}du\right].
\]
In the numerical scheme below, we evaluate $v_p$ for consecutive approximations of $b$.

In Appendix \ref{app:numerics}, we derive the following formulas for $v_e$ and $v_p$ using well-known properties of the joint law of $(r_t, X_t)$:
\begin{align}
\label{eq:numer-2}
v_e(t,r,x;T)&=P(t,T)K \mathcal{N}(d_1)-x \mathcal{N}(d_2),\\
\label{eq:numer-3}
v_p(t,r,x;T, b)&=\int_{t}^{T} KP(t,u)\bigg[\int_{-\infty}^{\infty}\frac{1}{\sqrt{2\pi}}e^{-\frac{y^2}{2}}\Big(q(t,u) + y\sqrt{\gamma_2(t,u)}\Big) \mathcal{N}\big(\phi(t,u,y; b)\big)dy\bigg]\, du,
\end{align}
where $\mathcal{N}(\cdot)$ is the cumulative distribution function of the standard normal distribution. An explicit formula for $P(t, T)$ is given by \eqref{eq:numercom-7} and the other auxiliary quantities used above are stated in \eqref{eqn:explicit_pricing}.

Equation \eqref{eqn:b_fixed} defines the boundary $b$ as a fixed point of a non-linear mapping. To compute it, we follow an iterative scheme motivated by \cite{Kitarbayev2018}. We fix $-\infty < r_{min} < r_{max} < \infty$ and discretise the variables $(t, r)$ as follows:
\[
\{(t_i,r_j) \in [t,T]\times[r_{min},r_{max}]\}, \quad i=1,...,M, j=1,...,N.
\]
We specify an initial approximation $b^{(0)}$ of the boundary:
\[
b^{(0)}(t_i,r_j)=K,\quad \forall\, i,j.
\]
For each $n\ge 1$, we compute the boundary $b^{(n)}$ at points $(t_i,r_j)_{i,j}$ by solving the algebraic equation:
\begin{equation}
\label{eq:numer-4}
K-b^{(n)}(t_i,r_j)-v_e\left(t_i,r_j,b^{(n)}(t_i,r_j);T\right)=v_p\left(t_i,r_j,b^{(n-1)}(t_i,r_j);T,b^{(n-1)}\right).
\end{equation}
The right-hand side, which is difficult to compute, is independent of $b^{(n)}$, while the left-hand side is known in an explicit form. We stop iterations when, for a pre-determined $\eps > 0$,
\[
\max_{i,j}|b^{(n-1)}(t_i,r_j)-b^{(n)}(t_i,r_j)|<\eps.
\]

The numerical evaluation of $v_p\left(t_i,r_j,b^{(n-1)}(t_i,r_j);T,b^{(n-1)}\right)$ requires that the boundary $b^{(n-1)}$ be known for all points $(t, r)$ in the state space while we compute it only on the grid $(t_i, r_j)$. We, therefore, use Matlab interpolation function with the Modified Akima cubic Hermite polynomials (`makima') interpolation method. Integrals are computed using Matlab functions employing standard quadrature methods.

It should be remarked that the stopping boundary $b$ may have a singularity (jump) at $r=0$, which corresponds to a horizontal part of the parametrisation $c$ of the stopping surface: a jump occurs when $c^{-1}(\{0\}) \ne \varnothing$. Furthermore, $b(T-, r) := \lim_{t \uparrow T} b(t,r)$ satisfies $b(T-, r) = 0$ for $r < 0$ and $b(T-, r) \ge b(0, r) > 0$ for $r > 0$, see Proposition \ref{prop:D}. This hints at a potential numerical difficulty around $r=0$, particularly for times $t$ close to maturity.

\subsection{Sensitivity analysis}\label{subsec:num}

\begin{figure}[tp]
\centering
\includegraphics[width=11cm]{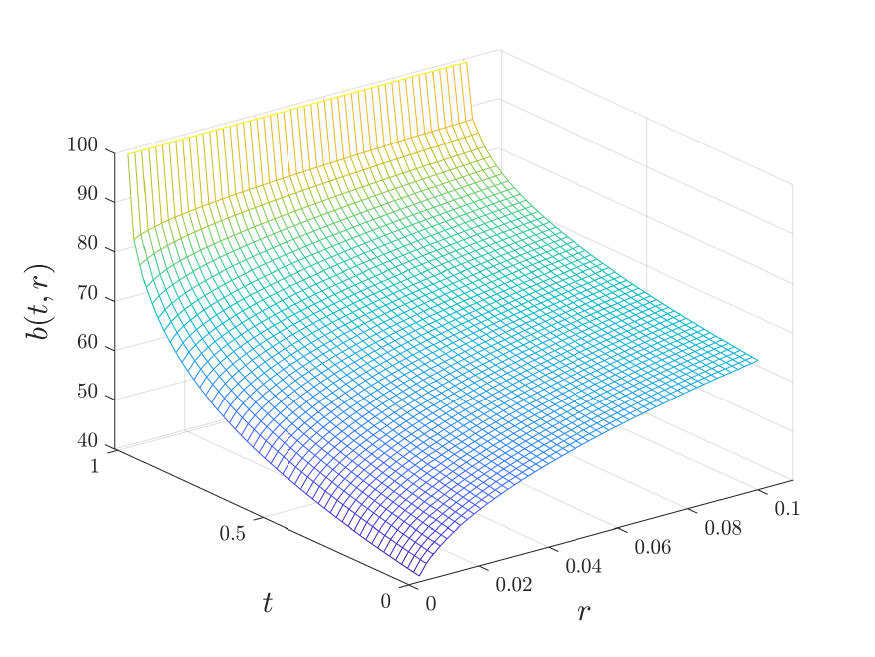}
\caption{Stopping boundary surface $b(t,r)$.}
\label{p0.1}
\end{figure} 
\begin{figure}[tbp]
\centering
\begin{subfigure}[b]{0.47\textwidth}
 \centering
 \includegraphics[width=0.9\textwidth]{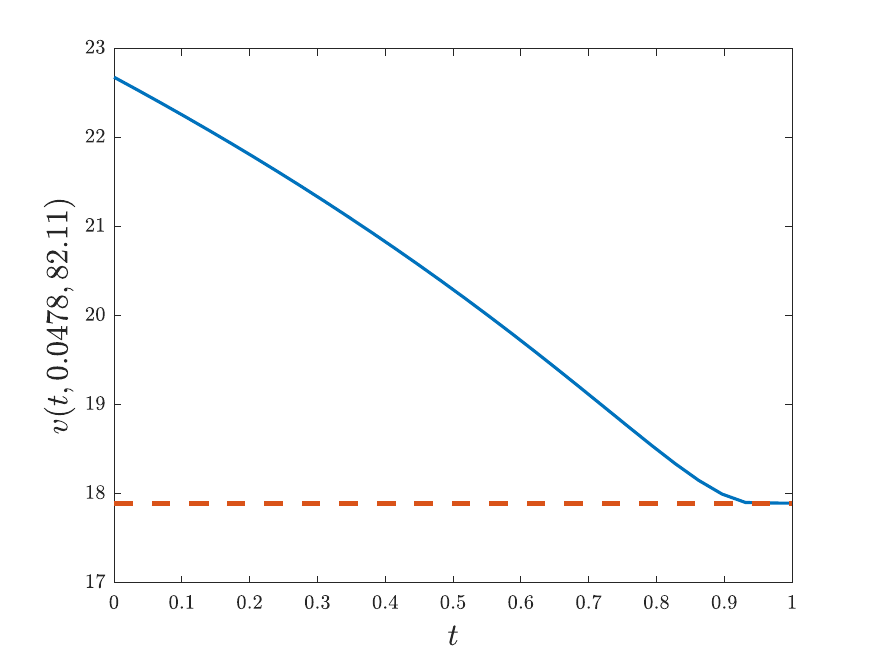}
 \subcaption{$t \mapsto v(t, 0.0478, 82.11)$}\label{p0.2a}
\end{subfigure}
\begin{subfigure}[b]{0.48\textwidth}
 \centering
 \includegraphics[width=0.9\textwidth]{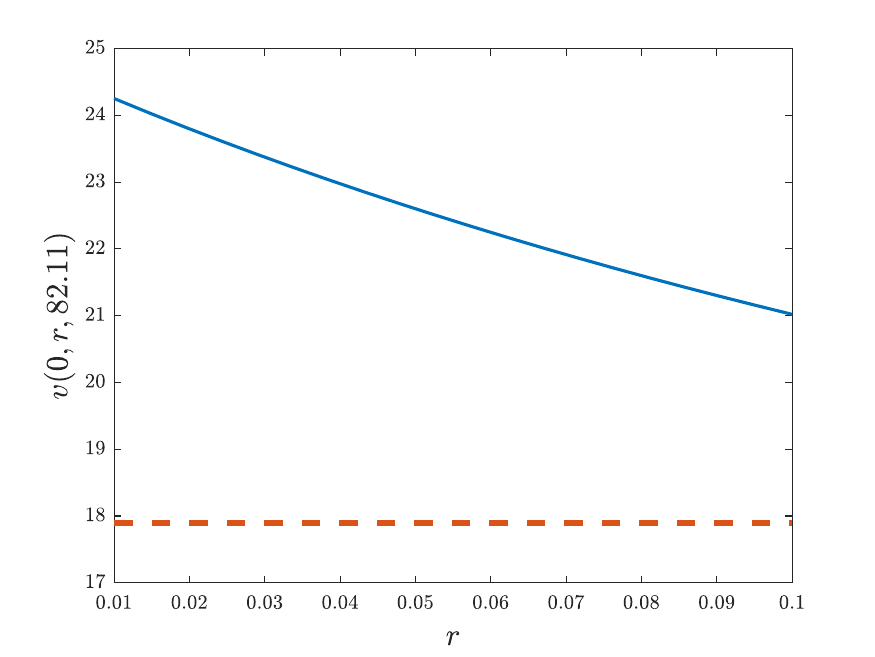}
 \subcaption{$r \mapsto v(0, r, 82.11)$}\label{p0.2b}
\end{subfigure}
\begin{subfigure}[b]{0.48\textwidth}
 \centering
 \includegraphics[width=0.9\textwidth]{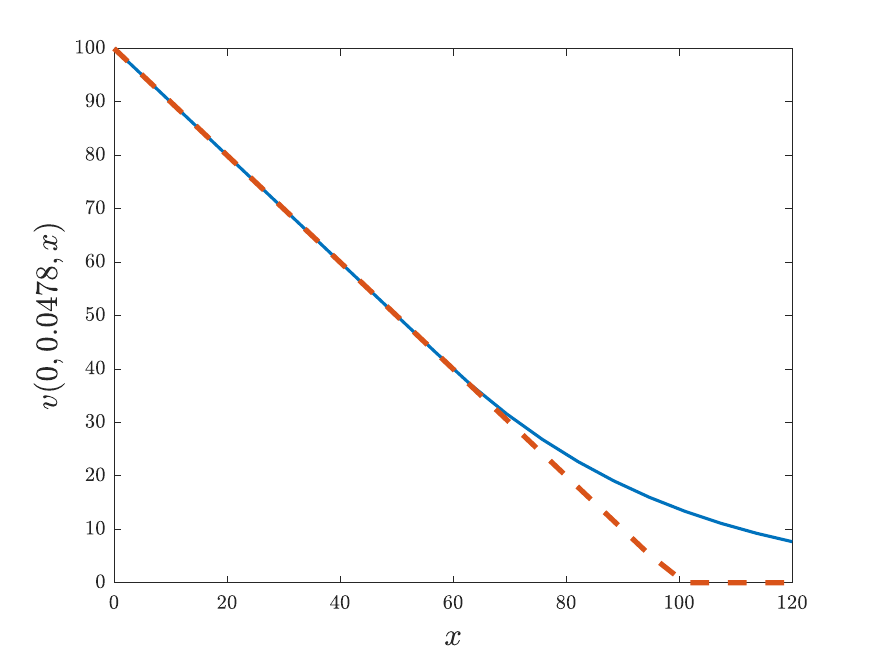}
 \subcaption{$x \mapsto v(0, 0.0478, x)$}\label{p0.2c}
\end{subfigure}
\caption{Sections of the value function $v(t,r,x)$ through the point $(0, 0.0478, 82.11)$. The dashed line displays the payoff $(K-x)^+$.}
\label{p0.2}
\end{figure} 

Unless stated otherwise, numerical results are presented for the parameter values
\begin{equation}
\label{eq:para-1}
T=1, K=100, \sigma=0.4, \kappa=0.3, \theta=0.05, \beta=0.01, \rho=0.5,
\end{equation}
and the convergence criterion with $\eps=0.01$. The magnitude of $\kappa, \theta$ and $\beta$ is based on empirical findings reported in the literature, c.f. \cite[Chapter 31]{hull2009options} and \cite{fergusson2015application}. Although main currencies have recently enjoyed much lower interest rates, our choice of $\theta$ means that the effects of random interest rate and its parameters on the market dynamics and optimal stopping boundary are more pronounced and graphs more transparent.

Figure \ref{p0.1} plots the stopping boundary $b(t,r)$ using parameters \eqref{eq:para-1}. The optimal stopping boundary increases as $t$ tends to the maturity $T$ and as the interest rate $r$ grows (c.f. Proposition \ref{prop:D}). This behaviour is consistent with the one of the optimal exercise boundary for the American put option in the Black-Scholes model with a constant interest rate \citep{peskir2005american}. Figure \ref{p0.2} illustrates the value function $v(t,r,x)$ via sections in directions of $t$, $r$ and $x$ rooted at the point $(0, 0.0478, 82.11)$, which illustrates the findings of Proposition \ref{prop:vmonot}. In Panel (\subref{p0.2a}), the value decreases to the value of the immediate exercise as the option is purchased deep in the money. \\

\emph{Effects of the interest rate}. The option price is significantly affected by the initial interest rate (Panel (\subref{p0.2b})) because the maturity of the option is long (1 year). The effect depends on the mean-reversion coefficient $\kappa$ and it increases when the mean reversion parameter decreases. Indeed, this tendency is clearly visible in Figure \ref{p1.7}. A large mean-reversion speed ($\kappa = 1$) means that the interest rate is quickly pulled towards $\theta = 0.05$, diminishing the effect of the initial value. Taking expectation on both sides of \eqref{eq:numer-R-ex} gives that the expected interest rate at the maturity $T=1$ is
\[
\E_r[r_1] = r e^{-\kappa} + \theta (1 - e^{-\kappa}),
\]
which, for $\kappa = 1$, means $\E_r[r_1] \approx 0.36\, r + 0.74\, \theta$. On the contrary, we obtain $\E_r[r_1] \approx 0.90\, r + 0.10\, \theta$
for $\kappa = 0.1$ and so the effect of the initial interest rate on the stopping boundary (Figure \ref{p1.7a}) and the value function (Figure \ref{p1.7b}) is more pronounced. The optimal strategy for $\kappa = 0.1$ prescribes to be more patient compared to larger values of $\kappa$ when the interest rate is near $0$ and act faster when the interest rate is close to $1$. Indeed, with a slow mean-reversion the interest rate stays close to the current value for longer, so the observed behaviour of the stopping boundary and the of value function is akin to that observed by a model with a constant interest rate \citep{broadie1996, peskir2005american}.

\begin{figure}[t]
\centering
\begin{subfigure}[b]{0.47\textwidth}
 \centering
 \includegraphics[width=0.9\textwidth]{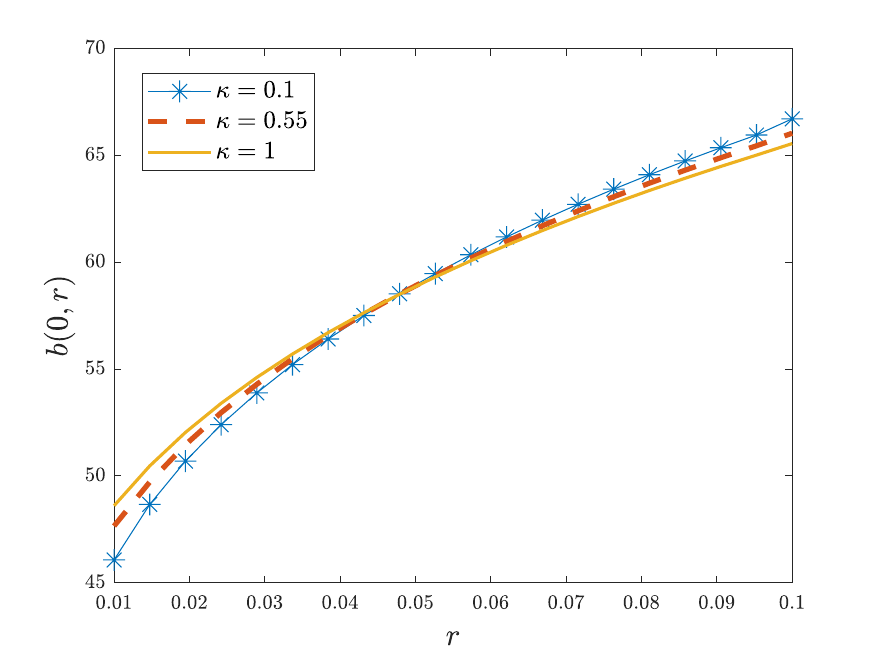}
 \subcaption{}\label{p1.7a}
\end{subfigure}
\begin{subfigure}[b]{0.47\textwidth}
 \centering
 \includegraphics[width=0.9\textwidth]{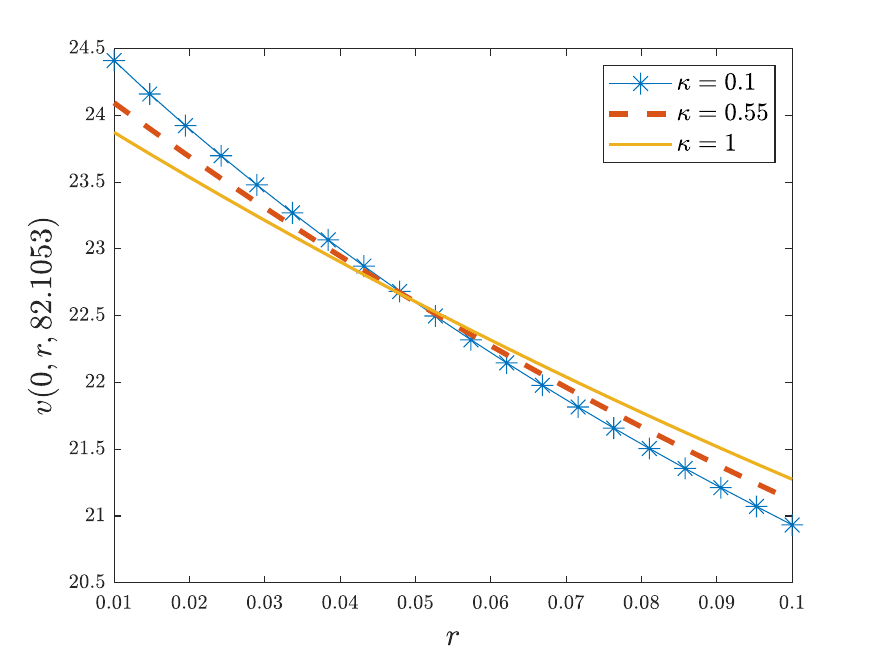}
 \subcaption{}\label{p1.7b}
\end{subfigure}
\caption{The $r$-sections of the stopping boundary (left panel) and the value function (right panel) for the mean-reversion parameter $\kappa \in \{0.1, 0.55, 1\}$.} 
\label{p1.7}
\end{figure} 

\begin{figure}[t]
\centering
\begin{subfigure}[b]{0.47\textwidth}
 \centering
 \includegraphics[width=0.9\textwidth]{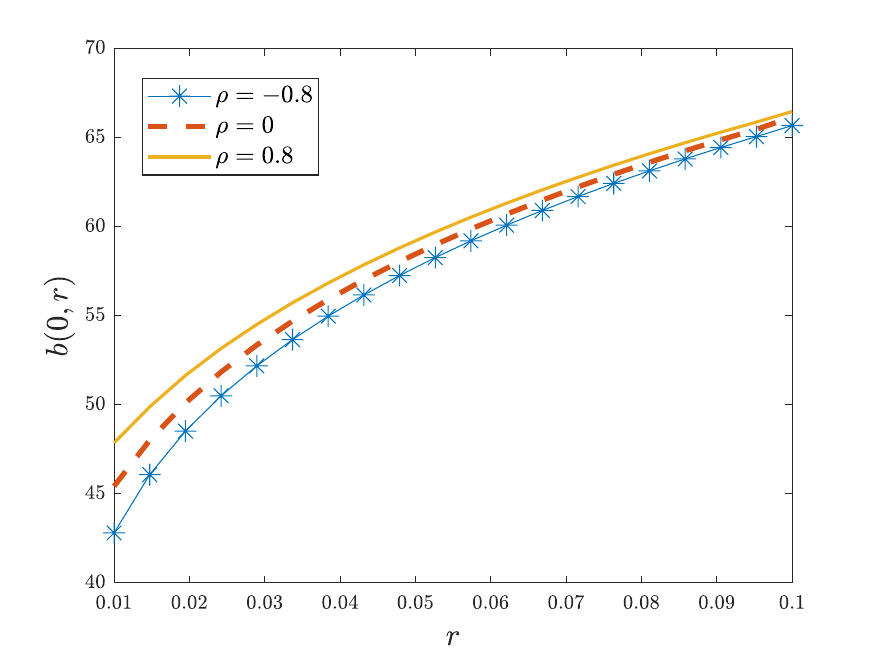}
 \subcaption{}\label{p1.5a}
\end{subfigure}
\begin{subfigure}[b]{0.47\textwidth}
 \centering
 \includegraphics[width=0.9\textwidth]{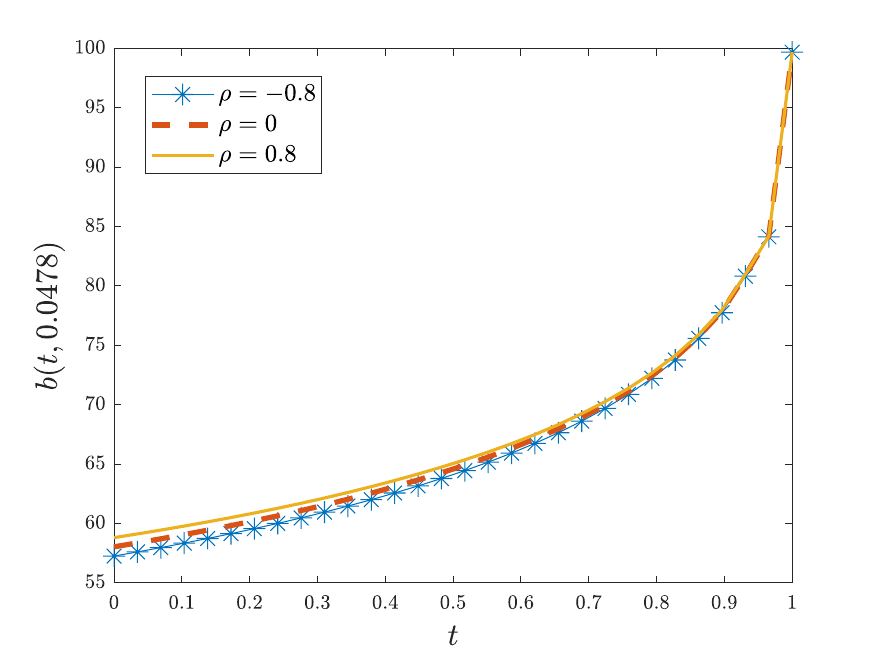}
 \subcaption{}\label{p1.5b}
\end{subfigure}
\caption{The $r$ and $t$-sections of the stopping boundary for the correlation coefficient $\rho \in \{-0.8, 0, 0.8\}$.} 
\label{p1.5}
\end{figure} 

{\em Effects of the correlation coefficient}. The sensitivity of the stopping boundary with respect to the correlation coefficient $\rho$ between Brownian motions driving the stock price and the interest rate is displayed in Figure \ref{p1.5}; the value function behaves accordingly and it is not displayed. High positive correlation $\rho=0.8$ implies that the interest rate and the stock price tend to move together. The increase in the interest rate pushes the stock price up and vice versa, resulting in a more unstable environment and an earlier optimal stopping. On the contrary, a strong negative correlation sees the stock price and the interest rate dampening the effect of each other's moves: an increase in the stock price brings a drop in the interest rate, therefore, making longer waiting (lower stopping boundary) more desirable due to effect on the drift of the stock price as well as on the discount factor. Naturally, this effect diminishes the closer one gets to the maturity of the option, see Figure \ref{p1.5b}.\\

{\em Effects of the volatility of stock and interest rate}. The effect of the diffusion coefficient of the spot rate $\beta$ on the stopping boundary and on the value function is negligible. We compared results for $\beta \in \{0.005, 0.01, 0.015\}$, the range of values observed in empirical literature mentioned above. We noticed variations in the value function of less than $0.1\%$ and in the stopping boundary of less than $1\%$. 

\begin{figure}[t]
\centering
\begin{subfigure}[b]{0.47\textwidth}
 \centering
 \includegraphics[width=0.9\textwidth]{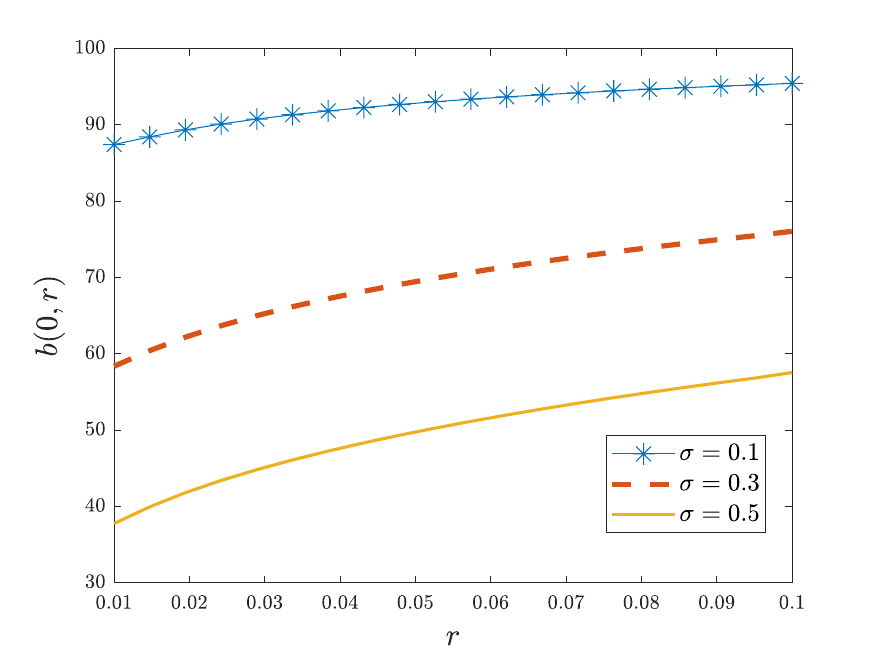}
 \subcaption{}\label{p1.3a}
\end{subfigure}
\begin{subfigure}[b]{0.47\textwidth}
 \centering
 \includegraphics[width=0.9\textwidth]{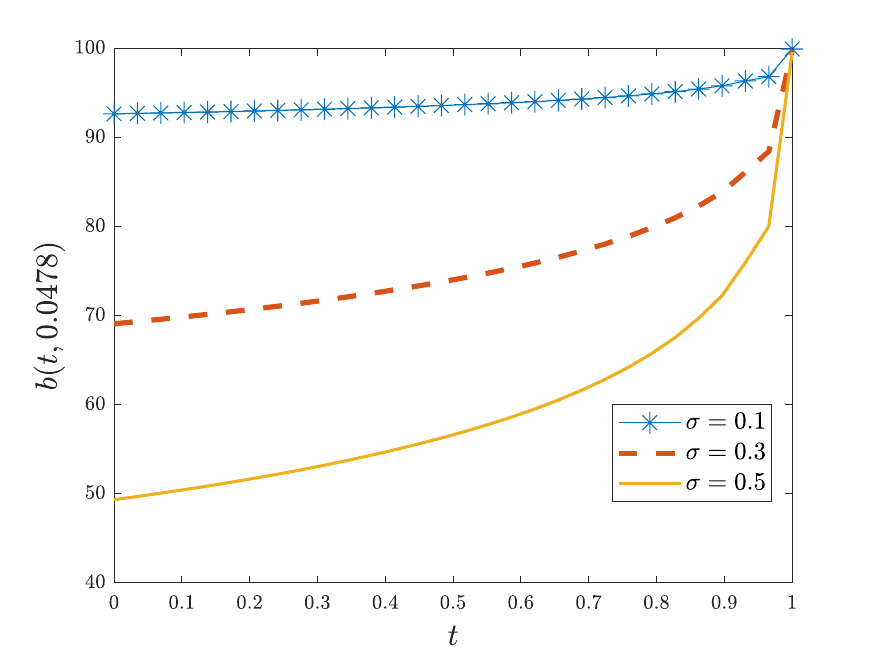}
 \subcaption{}\label{p1.3b}
\end{subfigure}
\begin{subfigure}[b]{0.47\textwidth}
 \centering
 \includegraphics[width=0.9\textwidth]{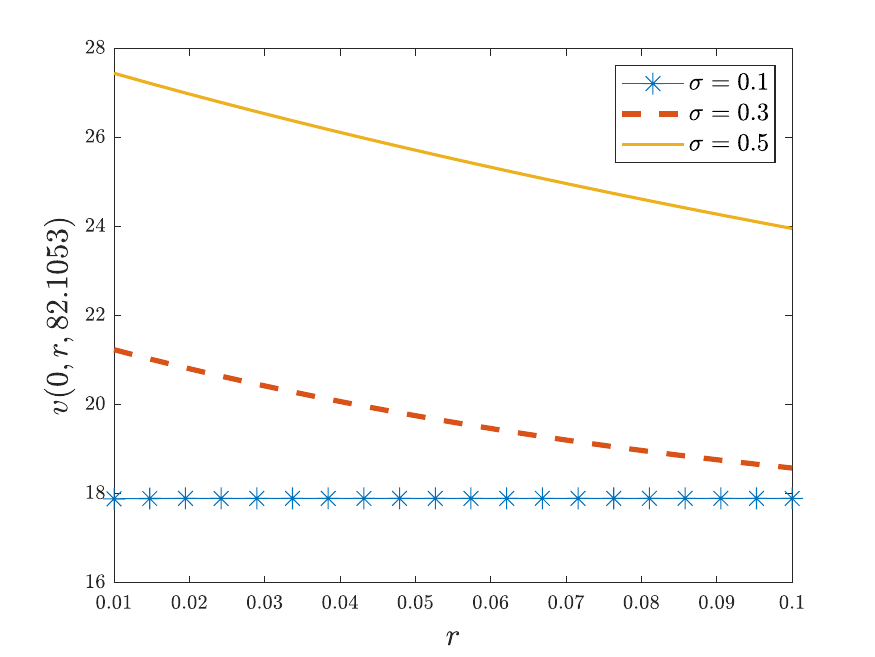}
 \subcaption{}\label{p1.3c}
\end{subfigure}
\begin{subfigure}[b]{0.47\textwidth}
 \centering
 \includegraphics[width=0.9\textwidth]{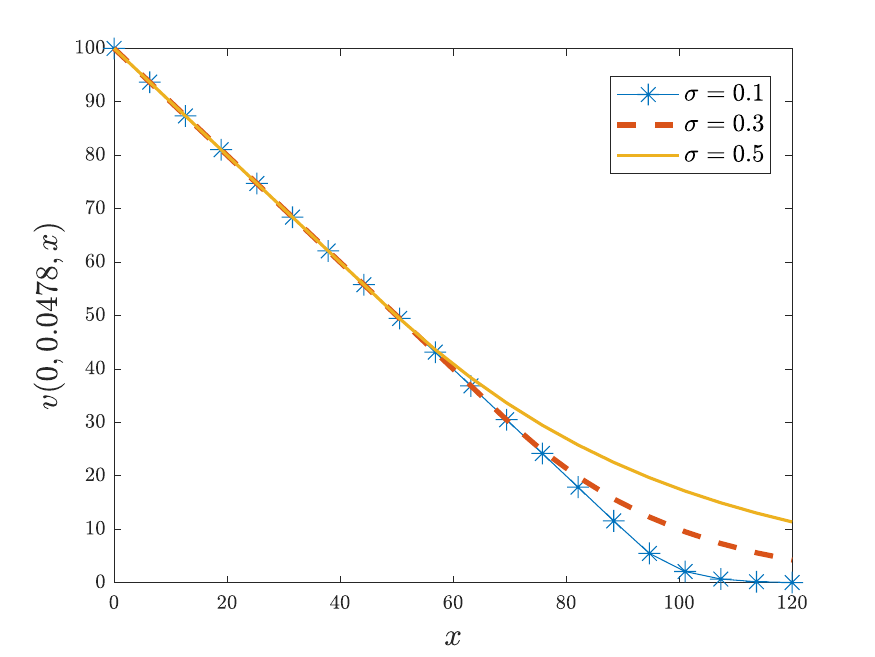}
 \subcaption{}\label{p1.3d}
\end{subfigure}
\caption{Effect of the volatility of the stock price $\sigma$. Panels (a) and (b) display the $r$ and $t$-sections of the stopping boundary $b(t,r)$ and Panels (c) and (d) show the $r$ and $x$-sections of the value function $v$ for $\sigma \in \{0.1, 0.3, 0.5\}$.} 
\label{p1.3}
\end{figure} 

In line with the financial intuition, the value of American Put option is increasing in $\sigma$, see Figure \ref{p1.3c} and \ref{p1.3d}. When $\sigma = 0.1$, the optimal stopping boundary is close to the exercise price $K$ (Figure \ref{p1.3a}), so the option is immediately exercised for the initial stock price $x = 82.1053$ presented on Panel (c), hence the flat graph. For other values of $\sigma$, the exercise boundary is below the initial stock price and the effect of the interest rate is clearly visible. The structure of results in Figure \ref{p1.3} is, as expected, in line with the findings for the American Put option in the Black-Scholes model with constant interest rate \citep{broadie1996, peskir2005american}. 

The remaining sections of the paper contain technical details and proofs.

%%%%%%%%%%%%%%%%%%%%%%%%%%%%%
\section{Monotonicity and Lipschitz continuity of the option value}\label{sec:lipschitz}

In this section we establish some initial regularity properties of the option value. We start with key monotonicity results and then prove Lipschitz continuity of the value function.

\begin{proof}[Proof of Proposition \ref{prop:vmonot}]
Finiteness of $v$ follows by \eqref{eq:integr} and boundedness of the put payoff. Monotonicity in $(i)$ is also a trivial consequence of the fact that the discounted put payoff is independent of time. For $(ii)$ we argue as follows: since $r\mapsto r^r_{t}$ is increasing $\P$-a.s.~for all $t\in[0,T]$ (by uniqueness of the trajectories) we get, for any $\eps>0$
\begin{align*}
v(t,r+\eps,x)=&\,\sup_{0\le \tau\le T-t}\E\left[\Big(K e^{-\int_0^\tau r^{r+\eps}_tdt} -xe^{\sigma B_\tau-\tfrac{\sigma^2}{2}\tau}\Big)^+\right]\\
\le &\,\sup_{0\le \tau\le T-t}\E\left[\Big(K e^{-\int_0^\tau r^r_tdt} -xe^{\sigma B_\tau-\tfrac{\sigma^2}{2}\tau}\Big)^+\right]=v(t,r,x)
\end{align*}
where we took the discounting inside the positive part and used \eqref{eq:X2}.

Finally, monotonicity in $(iii)$ is a simple consequence of monotonicity of \eqref{eq:X2} with respect to $x$ and the fact that $x\mapsto (K-x)^+$ is decreasing. Convexity also follows by standard arguments: fix $\lambda\in(0,1)$, take $x$ and $y$ in $\R_+$ and denote $x_\lambda:=\lambda x+(1-\lambda)y$. By the convexity of the put payoff, using that $X^{r,x_\lambda}=\lambda X^{r,x}+(1-\lambda) X^{r,y}$ and that $\sup(f+g)\le \sup f+\sup g$, it is not hard to verify that $v(t,r,x_\lambda)\le \lambda v(t,r,x)+(1-\lambda)v(t,r,y)$.
\end{proof}

\begin{proposition}{{\bf(Lipschitz continuity).}}\label{prop:vlip}
For any compact $\cK\subset \cO$ there exists a constant $L_\cK>0$ such that
\begin{align}\label{eq:vlip}
|v(t_1,r_1,x_1)-v(t_2,r_2,x_2)|\le L_\cK\big(|t_1-t_2|+|r_1-r_2|+|x_1-x_2|\big)
\end{align}
for all $(t_1,r_1,x_1)$ and $(t_2,r_2,x_2)$ in $\cK$.
\end{proposition}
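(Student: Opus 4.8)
The plan is to establish Lipschitz continuity separately in each of the three variables and then combine the estimates through intermediate points, which lie in a slightly enlarged compact. Throughout I use the representation from the proof of Proposition~\ref{prop:vmonot},
\[
v(t,r,x)=\sup_{0\le\tau\le T-t}\E\Big[\big(K D^r_\tau-x Z_\tau\big)^+\Big],\qquad D^r_\tau:=e^{-\int_0^\tau r^r_s\,ds},\quad Z_\tau:=e^{\sigma B_\tau-\frac{\sigma^2}{2}\tau},
\]
where $Z$ is a true martingale on $[0,T]$ and $z\mapsto z^+$ is $1$-Lipschitz, together with $|\sup_\alpha f_\alpha-\sup_\alpha g_\alpha|\le\sup_\alpha|f_\alpha-g_\alpha|$ applied to the variational formula. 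For $x$ this gives at once $|v(t,r,x_1)-v(t,r,x_2)|\le|x_1-x_2|\sup_\tau\E[Z_\tau]=|x_1-x_2|$. For $r$ it gives $|v(t,r_1,x)-v(t,r_2,x)|\le K\,\E\big[\sup_{0\le\tau\le T-t}|D^{r_1}_\tau-D^{r_2}_\tau|\big]$; assuming $r_1\le r_2$, comparison of trajectories yields $r^{r_1}_s\le r^{r_2}_s$, so with $\Delta_s:=r^{r_2}_s-r^{r_1}_s\ge0$ and $1-e^{-y}\le y$ we get $|D^{r_1}_\tau-D^{r_2}_\tau|=D^{r_1}_\tau\big(1-e^{-\int_0^\tau\Delta_s\,ds}\big)\le D^{r_1}_\tau\int_0^T\Delta_s\,ds$. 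In the CIR case $D^{r_1}_\tau\le1$ and $\E[\Delta_s]=(r_2-r_1)e^{-\kappa s}$ (take expectations in the linear SDE solved by $\Delta$), giving the bound $KT|r_1-r_2|$; in case~(ii) I use Cauchy--Schwarz with \eqref{eq:integr} (for $\E[\sup_\tau(D^{r_1}_\tau)^2]$, taking $p=2$) and the standard $L^2$ flow estimate $\E[\sup_{s\le T}\Delta_s^2]\le C|r_1-r_2|^2$, with a constant depending only on the $r$-projection of $\cK$.

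The delicate part is Lipschitz continuity in $t$, and here the restriction to compacts of $\cO$ is essential. Fix $\cK\subseteq[0,T-\eta]\times\cK_r\times\cK_x$ with $\eta\in(0,T)$, and take $t_1<t_2\le T-\eta$, $\delta:=t_2-t_1$. By Proposition~\ref{prop:vmonot}(i), $v(t_1,r,x)\ge v(t_2,r,x)$. Let $\tau_*=\tau_*(t_1,r,x)$ be the optimal stopping time of \eqref{eq:OST} for $v(t_1,r,x)$. Optional sampling of the martingale $(Y_{s\wedge\tau_*})$ of \eqref{eq:mg} at the admissible time $\tau_*\wedge(T-t_2)$, combined with using $\tau_*\wedge(T-t_2)$ as a (suboptimal) stopping time for $v(t_2,r,x)$, yields, after the contributions on $\{\tau_*\le T-t_2\}$ cancel and using $t_1+(T-t_2)=T-\delta$ and $v|_{\cD}=(K-\cdot)^+$,
\[
v(t_1,r,x)-v(t_2,r,x)\le\E\Big[\mathds{1}_{\{\tau_*>T-t_2\}}\,e^{-\int_0^{T-t_2}r_s\,ds}\big(v(T-\delta,r_{T-t_2},X_{T-t_2})-(K-X_{T-t_2})^+\big)\Big].
\]
Since the integrand is non-negative, the indicator may be dropped, so it remains to bound $v(T-\delta,r',x')-(K-x')^+$ after averaging against the discounted law of $(r_{T-t_2},X_{T-t_2})$.

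For the pointwise estimate I apply the It\^{o}--Tanaka formula to $s\mapsto e^{-\int_0^s r_u\,du}(K-X_s)^+$ up to the optimal stopping time $\tau^\delta\le\delta$ of the horizon-$\delta$ problem and take expectations: the stochastic integral is a genuine martingale (its integrand is bounded by $\sigma K\sup_{s\le\delta}e^{-\int_0^s r_u\,du}\in L^2$ via \eqref{eq:integr}), and estimating the drift term in absolute value gives
\[
v(T-\delta,r',x')-(K-x')^+\le K\,\E_{r',x'}\!\Big[\int_0^\delta e^{-\int_0^w r_u\,du}|r_w|\,dw\Big]+\tfrac12\,\E_{r',x'}\!\Big[\int_0^\delta e^{-\int_0^w r_u\,du}\,dL^K_w\Big],
\]
with $L^K$ the local time of $X$ at level $K$. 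Substituting into the previous display and using the Markov property, the first term contributes at most $K\delta\,\E\big[\sup_{w\le T}e^{-\int_0^w r_u\,du}|r_w|\big]\le C_\cK\delta$ by \eqref{eq:subg} and \eqref{eq:integr}, while the second contributes $\tfrac12\E\big[\int_{T-t_2}^{T-t_2+\delta}e^{-\int_0^w r_u\,du}\,dL^K_w(X)\big]=\tfrac{\sigma^2K^2}{2}\int_{T-t_2}^{T-t_2+\delta}g_w(K)\,dw$ by the occupation-time formula, where $g_w$ is the density of $A\mapsto\E[e^{-\int_0^w r_u\,du}\mathds{1}_{\{X_w\in A\}}]$. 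Conditioning on the $\sigma$-algebra generated by $W$ exhibits $\log X_w$ as conditionally Gaussian with variance $\sigma^2(1-\rho^2)w\ge\sigma^2(1-\rho^2)\eta>0$ (this is where $\rho\in(-1,1)$ and the separation $w\ge\eta$ enter), so the conditional density of $X_w$ at $K$ is bounded by a constant $C_\eta$, whence $g_w(K)\le C_\eta\,\E[\sup_{w\le T}e^{-\int_0^w r_u\,du}]\le C_\eta C_1^{1/2}$ for $w\in[\eta,T]$ (in the CIR case simply $e^{-\int_0^w r_u du}\le1$). This bounds $v(t_1,r,x)-v(t_2,r,x)$ by $C_\cK\delta$; combining the three estimates via the triangle inequality proves the proposition.

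The main obstacle is exactly this last step. A crude bound $(K-X_\tau)^+-(K-x)^+\le|X_\tau-x|$ produces a Brownian fluctuation term of order $\sqrt\delta$, which would give only $\tfrac12$-H\"older continuity in $t$. The It\^{o}--Tanaka formula replaces $|X_\tau-x|$ by the local time of $X$ at the single level $K$, and the local time accumulated over a short interval starting at a time bounded away from $0$ is genuinely of order $\delta$, precisely because the law of $X_w$ has a density bounded uniformly near $K$ for $w\ge\eta$. This is also why the Lipschitz constant $L_\cK$ degenerates as $\cK$ approaches $\{T\}\times\cI\times\R_+$ and why the statement is confined to compact subsets of $\cO$.
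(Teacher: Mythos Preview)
Your arguments for Lipschitz continuity in $x$ and $r$ are essentially the same as the paper's, with only cosmetic differences (you use the inequality $|\sup f-\sup g|\le\sup|f-g|$ where the paper picks an optimal stopping time for one side, but the resulting estimates are identical). Your argument for Lipschitz continuity in $t$, however, is genuinely different and correct.

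The paper's route for the $t$-variable is a time-rescaling: it writes $v(t,r,x)$ as a stopping problem on the fixed horizon $[0,1]$ for the rescaled pair $(r^{T-t}_u,X^{T-t}_u)=(r_{u(T-t)},X_{u(T-t)})$, so that the admissible class of stopping times no longer depends on $t$. One then compares the two processes $(r^{T-t_1},X^{T-t_1})$ and $(r^{T-t_2},X^{T-t_2})$ (which solve SDEs with slightly perturbed coefficients) using the optimal $\theta_1\in[0,1]$ for the first as a competitor for the second. The bulk of the work is in controlling $\E|\hat X^{T-t_1}_{\theta_1}-\hat X^{T-t_2}_{\theta_1}|$ and $\E|e^{-R^1_{\theta_1}}-e^{-R^2_{\theta_1}}|$; the second of these requires separate, somewhat lengthy, arguments for the CIR and the Lipschitz case.

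Your route avoids the rescaling entirely. You first use the martingale property of the Snell envelope and optimality of $\tau_*$ to reduce the problem to bounding $\E\big[D_{T-t_2}\big(v(T-\delta,r_{T-t_2},X_{T-t_2})-(K-X_{T-t_2})^+\big)\big]$, then apply It\^o--Tanaka to the discounted payoff to turn this into a drift term plus a local-time term, and finally exploit that the (discounted) density of $X_w$ at the single level $K$ is bounded on $[\eta,T]$ (via conditioning on $\cF^W$ and the explicit conditional log-normal law with variance $\sigma^2(1-\rho^2)w$) to get the $O(\delta)$ bound for the local time. This is a clean probabilistic argument that makes transparent \emph{why} the constant blows up as the compact approaches $\{T\}\times\cI\times\R_+$: the relevant density bound fails near $w=0$. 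A minor point: when you invoke \eqref{eq:integr} to justify the martingale property of the stochastic integral in the It\^o--Tanaka step, note that \eqref{eq:integr} applied to the singleton $\cK=\{r'\}$ already gives $\E_{r'}[\sup_{s\le\delta}D_s^2]<\infty$ for each fixed $r'\in\cI$, which is all you need pointwise; the \emph{uniform} bound on $\cK$ is only used after the Markov property brings you back to the original starting point $r\in\cK_r$.
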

\begin{proof}[Proof of Proposition \ref{prop:vlip}]
We look separately at Lipschitz continuity in the three variables. Arguments for $r$ and $x$ are quite standard while the main argument for the Lipschitz continuity in $t$ goes back to \cite[Thm.~3.6]{jaillet1990variational}. However, in our framework the interest rate is random and the coefficients of the underlying process are state dependent, which results in some additional difficulties.
\medskip

\emph{Continuity in $x$.} Fix $(t,r)\in[0,T)\times\cI$ and take $x_1\le x_2$ in $\R_+$. Let $\tau_1:=\tau_*(t,r,x_1)$ and note that it is admissible for $v(t,r,x_2)$. Using Proposition \ref{prop:vmonot}(iii), the explicit expression for $X^{r,x}$ in \eqref{eq:X2} and the Lipschitz property of the put payoff, we get
\begin{align*}
0\le v(t,r,x_1)-v(t,r,x_2)
&\le \E\left[e^{-\int_0^{\tau_1}r^r_sds}\Big((K-X^{r,x_1})^+-(K-X^{r,x_2})^+\Big)\right]\\[+4pt]
&\le \E\Big[e^{\sigma B_{\tau_1}-\tfrac{\sigma^2}{2}\tau_1}\Big](x_2-x_1)=(x_2-x_1),
\end{align*}
where in the last equality we used Doob's optional sampling theorem.
\medskip

\emph{Continuity in $r$.} Fix $(t,x)\in[0,T)\times\R_+$ and take $r_1\le r_2$ in $\cI$ such that $(t, r_1, x) \in \cK$. Denote, for simplicity, $r^1:=r^{r_1}$ and $r^2:=r^{r_2}$ and notice that $r^2_t\ge r^1_t$ for all $t\ge0$ $\P$-a.s. Set $\tau_1:=\tau_*(t,r_1,x)$. From Proposition \ref{prop:vmonot}(ii) and simple estimates we obtain
\begin{equation}\label{eqn:v_r1}
\begin{aligned}
0
\!\le\! v(t,r_1,x)\!-\!v(t,r_2,x)
&\!\le\! K\E\left[e^{-\int_0^{\tau_1}r^1_sds}\!-\!e^{-\int_0^{\tau_1}r^2_sds}\right]\!=\! K\E\left[e^{-\int_0^{\tau_1}r^1_sds}\Big(1\!-\!e^{-\int_0^{\tau_1}(r^2_s-r^1_s)ds}\Big)\right]\\
&\le K\E\Big[e^{-\int_0^{\tau_1}r^1_sds}\int_0^{\tau_1}(r^2_s-r^1_s)ds\Big].
\end{aligned}
\end{equation}
To complete the proof we consider separately cases (i) and (ii) in Assumption \ref{ass:coef}.  Let us start with $(i)$: using that $r^1_t\ge0$ for $t\ge 0$, and the explicit form of the SDE in the CIR model, we get
\begin{align*}
\E\Big[e^{-\int_0^{\tau_1}r^1_sds}\int_0^{\tau_1}(r^2_s-r^1_s)ds\Big]
&\le
\int_0^{T-t}\E\left[r^2_s-r^1_s\right]ds\\
&=
\int_0^{T-t}\E\Big[(r_2-r_1)+\int_0^s\kappa(r^1_u-r^2_u)du \Big]ds\le (T-t)(r_2-r_1),
\end{align*}
where we have used the integral equation for $(r_t)$ and that $r^2_t\ge r^1_t$.

If Assumption \ref{ass:coef}(ii) holds instead, we apply H\"older inequality:
\begin{align}
\E\Big[e^{-\int_0^{\tau_1}r^1_sds}\int_0^{\tau_1}(r^2_s-r^1_s)ds\Big]
&\le
\Big( \E\Big[e^{-2\int_0^{\tau_1}r^1_sds}\Big]\Big)^{\frac{1}{2}}\Big(\E\Big[\Big(\int_0^{T-t}(r^2_s-r^1_s)ds\Big)^2\Big] \Big)^{\frac{1}{2}} \label{eqn:r_bnd}\\
&\le C_1^{1/2} \Big((T-t) \int_0^{T-t}\E\left[(r^2_s-r^1_s)^2\right]ds\Big)^{\frac{1}{2}},\nonumber
\end{align}
where $C_1>0$ is the constant from \eqref{eq:integr} which depends on $\cK$. To conclude it is sufficient to use moment estimates for SDEs \citep[Ch.\ 2, Sec.\ 5, Thm.\ 9]{krylov} which guarantee that 
\begin{equation}\label{eqn:v_r2}
\E\left[\sup_{0\le s\le T}(r^2_s-r^1_s)^2\right]\le c'(r_2-r_1)^2 
\end{equation}
for some $c'>0$ only depending on $T$ and the coefficients in \eqref{eq:r}.
\vspace{+5pt}

\emph{Continuity in $t$.} For $t \in [0, T)$, define $r^{T-t}_u:=r_{u(T-t)}$ and $X^{T-t}_u:=X_{u(T-t)}$ for $u\in[0,1]$. The couple $(r^{T-t}_u,X^{T-t}_u)_{u \in [0,1]}$ is a strong solution to (see, e.g., \cite[Ch.~1, Prop.~8.6]{bass1998diffusions})
\begin{align*}
&dX^{T-t}_u=(T-t)r^{T-t}_uX^{T-t}_u du + \sigma X^{T-t}_u d\widetilde B_u,\qquad X^{T-t}_0=x,\\
&dr^{T-t}_u=(T-t)\alpha(r^{T-t}_u)du+\beta(r^{T-t}_u)d\widetilde W_u,\qquad r^{T-t}_0=r,
\end{align*}
where $(\widetilde B_{u},\widetilde W_{u})_{u\in[0,1]}:=(B_{u(T-t)},W_{u(T-t)})_{u\in[0,1]}$. Using these processes, we can rewrite \eqref{eq:v} as
\begin{align}\label{eq:vT}
v(t,r,x)=\sup_{0\le \theta\le 1}\E_{r,x}\left[\exp\Big\{\!-(T-t)\!\int_0^\theta r^{T-t}_u du\Big\}\Big(K-X^{T-t}_\theta\Big)^+\right],
\end{align}
where for any $(\cF_s)_{s\ge0}$-stopping time $\tau$ in $[0,T-t]$ the random variable $\theta:=\tau/(T-t)$ is an $(\cF_{u(T-t)})_{u\in[0,1]}$-stopping time. Since the process $(B_{u(T-t)},W_{u(T-t)})_{u\in[0,1]}$ is identical in law to $(\sqrt{T-t}B_u,\sqrt{T-t}W_u)_{u\in[0,1]}$, with a slight abuse of notation we can identify $(r^{T-t}_u,X^{T-t}_u)_{u\in[0,1]}$ with the unique strong solution of 
\begin{align}
\label{eq:XT}&dX^{T-t}_u=(T-t)r^{T-t}_uX^{T-t}_u du + \sqrt{T-t}\sigma X^{T-t}_u d B_u,\qquad X^{T-t}_0=x,\\
\label{eq:rT}&dr^{T-t}_u=(T-t)\alpha(r^{T-t}_u)du+\sqrt{T-t}\beta(r^{T-t}_u)d W_u,\qquad r^{T-t}_0=r,
\end{align}
and take stopping times $\theta\in[0,1]$ in \eqref{eq:vT} with respect to the filtration $(\cF_t)$ generated by $(B,W)$.
In what follows we denote by $\theta_*=\theta_*(t,r,x)$ an optimal stopping time for \eqref{eq:vT}.

Fix now $0\le t_1<t_2<T$ and set $r^1:=r^{T-t_1}$, $r^2:=r^{T-t_2}$. Let $\theta_1:=\theta_*(t_1,r,x)$ and for $i=1,2$ denote also
\[
R^i_u = (T-t_i) \int_0^u r^i_s ds
\quad \text{and}\quad 
\hat X^{T-t}_u = \exp\Big(\sqrt{T-t}\,\sigma B_u-(T-t)\frac{\sigma^2}{2}u\Big),
\]
so that $X^{T-t_i}_u = x e^{-R^i_u} \hat X^{T-t_i}_u$.  We remark that $\theta_1$ is also admissible for the problem in \eqref{eq:vT} and the underlying dynamics \eqref{eq:XT}--\eqref{eq:rT} with $t=t_2$, because it is an $(\cF_s)_{s \ge 0}$-stopping time in $[0,1]$. Indeed the advantage of \eqref{eq:vT} with \eqref{eq:XT}--\eqref{eq:rT} is that the class of admissible stopping times no longer depends on the initial time $t$.

Recalling Proposition \ref{prop:vmonot}(i) and using Lipschitz continuity of $x \mapsto (x)^+$ we have
\begin{equation}\label{eq:dt0}
\begin{aligned}
0\ge v(t_2,r,x)-v(t_1,r,x)
&\ge - \E_{r} \Big[ \Big| \Big(K e^{-R^2_{\theta_1}} - x \hat X^{T-t_2}_{\theta_1}\Big)^+ - \Big(K e^{-R^1_{\theta_1}} - x\hat X^{T-t_1}_{\theta_1}\Big)^+\Big| \Big]\\
&\ge -K\E_r\left[\Big|e^{-R^2_{\theta_1}} - e^{-R^1_{\theta_1}}\Big|\right]
-x \E\big[\big|\hat X^{T-t_1}_{\theta_1} - \hat X^{T-t_2}_{\theta_1} \big|\big].
\end{aligned}
\end{equation}

Let us consider the second term on the right hand side of \eqref{eq:dt0}. By the fundamental theorem of calculus and the explicit formula for $\hat X^{T-t}$
\begin{align}
\E\big[\big|\hat X^{T-t_1}_{\theta_1} - \hat X^{T-t_2}_{\theta_1} \big|\big]
&=
\E \Big[ \Big|\int_{t_1}^{t_2} \hat X^{T-t}_{\theta_1} \Big(\frac{\sigma^2}{2} \theta_1 - \frac{1}{2\sqrt{T-t}} \sigma B_{\theta_1} \Big) dt \Big|\Big]\nonumber\\
&\le
\int_{t_1}^{t_2} \E \Big[ \Big|\hat X^{T-t}_{\theta_1} \Big(\frac{\sigma^2}{2} \theta_1 - \frac{1}{2\sqrt{T-t}} \sigma B_{\theta_1} \Big) \Big|\Big] dt. \label{eqn:X_est}
\end{align}
For $t \in (t_1, t_2)$, define a measure $\tilde \P$ by $\frac{d\tilde{\P}}{d \P}:=\hat X^{T-t}_1$. Then $\tilde{B}_s=B_s-\sigma s\sqrt{T-t}$ is a Brownian motion under $\tilde{\P}$ and
\[
\begin{aligned}
&\, \E \Big[ \Big|\hat X^{T-t}_{\theta_1} \Big(\frac{\sigma^2}{2} \theta_1 - \frac{1}{2\sqrt{T-t}} \sigma B_{\theta_1} \Big) \Big|\Big]
=
\tilde{\E}\left[\Big|\frac{\theta_1}{2}\sigma^2-\frac{\sigma}{2\sqrt{T-t}}(\tilde{B}_{\theta_1}+\sqrt{T-t}\sigma\theta_1)\Big|\right]\\
&=
\tilde{\E}\left[\Big|\frac{\sigma}{2\sqrt{T-t}}\tilde{B}_{\theta_1}\Big|\right]
\le \left(\tilde{\E}\bigg[ \frac{\sigma^2\tilde{B}^2_{\theta_1}}{4(T-t)}\bigg]\right)^{1/2}
\le
\frac{\sigma}{2\sqrt{T-t}}
\le \frac{\sigma}{2\sqrt{T-t_2}} =: c_1,
\end{aligned}
\]
where we applied H\"{o}lder inequality and used that $\theta_1 \le 1$. Inserting the above estimate into \eqref{eqn:X_est} gives
\begin{equation}\label{eq:dt1}
\E\big[\big|\hat X^{T-t_1}_{\theta_1} - \hat X^{T-t_2}_{\theta_1} \big|\big] \le c_1 (t_2 - t_1).
\end{equation}

Next we address the first term on the right hand side of \eqref{eq:dt0}. This is performed separately in cases (i) and (ii) of Assumption \ref{ass:coef}. We start by considering case (ii), i.e., $\alpha$ and $\beta$ in \eqref{eq:rT} are Lipschitz continuous. Fundamental theorem of calculus and H\"older inequality give
\begin{equation}\label{eq:dt2}
\begin{aligned}
&\E_r\left[\Big|e^{-R^1_{\theta_1}} - e^{-R^2_{\theta_1}}\Big|\right]\\
&\le \E_r\left[\max_{i=1,2}\left\{e^{-(T-t_i)\int_0^{\theta_1}r^i_udu}\right\}\Big|(T-t_1)\!\int_0^{\theta_1}r^1_udu-(T-t_2)\!\int_0^{\theta_1}r^2_udu\Big|\right]\\
&\le \E_r\bigg[\max_{i=1,2}\left\{e^{-(T-t_i)\int_0^{\theta_1}r^i_udu}\right\}\Big((t_2-t_1)\Big|\int_0^{\theta_1}r^1_udu\Big|+(T-t_2)\Big|\int_0^{\theta_1}(r^2_u-r^1_u)du\Big|\bigg]\\
&\le
2 c_2 \bigg[
(t_2-t_1)\Big(\E_r\!\big[\sup_{0\le t\le 1}\big(r^1_t\big)^2\big]\Big)^{\frac{1}{2}}
+
(T-t_2) \Big(\E_r\left[\int_0^{1}(r^2_u-r^1_u)^2du\right]\Big)^{\frac{1}{2}} \bigg],
\end{aligned}
\end{equation}
where, using \eqref{eq:integr},
\[
c_2 := \sup_{(t,r,x) \in \cK} \bigg(\E_r\!\Big[\sup_{0\le s\le 1}e^{-2(T-t)\int_0^{s}r^{T-t}_udu}\Big]\bigg)^{\frac{1}{2}} < \infty.
\]
Thanks to \eqref{eq:subg}, $c_3 := \sup_{(t,r,x) \in \cK}\Big(\E_r\!\big[\sup_{0\le s\le 1}\big(r^{T-t}_s\big)^2\big]\Big)^{\frac{1}{2}} < \infty$, so it remains to estimate the last term of \eqref{eq:dt2}. By \cite[Ch.~2, Sec.~5, Thm.~9]{krylov} there is a constant $c_4$ depending only on $\cK$ and the Lipschitz constant for $\alpha$ and $\beta$ in \eqref{eq:rT} such that
\begin{align*}
&\E_r\Big[\sup_{0\le t\le 1}\big(r^1_t-r^2_t\big)^2\Big]\\
&\le 
c_4\, \E_r \Big[\int_0^1 \Big(|(T-t_1) \alpha(r^1_u) - (T-t_2) \alpha(r^1_u) |^2 + |\sqrt{T-t_1}\beta(r^1_u) - \sqrt{T-t_2} \beta(r^1_u)|^2\Big) \ du \Big]\\
&\le
c_4 (t_2-t_1)^2\, \E_r \Big[ \int_0^1 |\alpha(r^1_u)|^2 du \Big] + c_4\, \frac{(t_2-t_1)^2}{4(T-t_2)}\, \E_r \Big[\int_0^1 |\beta(r^1_u)|^2 du \Big],
\end{align*}
where for the second inequality we used that $0\le \sqrt{T-t_1} - \sqrt{T-t_2} \le (t_2-t_1)/[2\sqrt{T-t_2}]$. Notice that by  \eqref{eq:subg} and the linear growth of $\alpha$ and $\beta$
\[
c_5 := \sup_{(r,t,x) \in \cK} \E_r \Big[\int_0^1 |\alpha(r^{T-t}_u)|^2 + |\beta(r^{T-t}_u)|^2 du \Big]< \infty.
\]
Inserting the above estimates into \eqref{eq:dt2} we conclude that there is a constant $c_6$ such that for any $(t_1, r, x), (t_2, r, x) \in \cK$
\[
\E_r\left[\Big|e^{-R^1_{\theta_1}} - e^{-R^2_{\theta_1}}\Big|\right]
\le
c_6 |t_2 - t_1|.
\]
This and \eqref{eq:dt1} feed into \eqref{eq:dt0} so that 
\begin{equation}\label{eqn:v_t_lipsch}
0\ge v(t_2,r,x)-v(t_1,r,x)\ge -c|t_2-t_1|
\end{equation}
for a suitable $c>0$ that depends on $\cK$.

Finally, we must estimate the first term on the right hand side of \eqref{eq:dt0} under the assumption that $(r_t)_{t\ge0}$ follows the CIR dynamics (Assumption \ref{ass:coef}(i)). Let $\hat r^i_u:=r^i_u/(T-t_i)$ for $u\in[0,1]$ and $i=1,2$. The dynamics for $\hat r^i$ reads
\begin{align}\label{eq:R}
d\hat{r}^i_u=\kappa\big(\alpha-(T-t_i)\hat{r}^i_u\big)du+\beta\sqrt{\hat{r}^i_u}dW_u, \quad u\in[0,1].
\end{align}
Since $\kappa(\alpha-(T-t_1) \hat{r}) < \kappa(\alpha-(T-t_2) \hat{r})$ for $\hat r \ge 0$, and $\hat{r}^1_0=r/(T-t_1)\le r/(T-t_2)=\hat{r}^2_0$, comparison results for SDEs \citep[Prop.~5.2.18]{karatzas1998brownian} imply
\begin{equation}\label{eq:R1R2a}
\hat{r}^1_u\le \hat{r}^2_u\quad\text{for all $u\in[0,1]$, $\P$-a.s.}
\end{equation}
Using the integral version of \eqref{eq:R} and the martingale property of the stochastic integral, we obtain
 \begin{align*}
 \E_r\left[\hat{r}^2_u-\hat{r}^1_u\right]
 &= r\Big(\frac{1}{T-t_2}-\frac{1}{T-t_1}\Big)+\E_r\left[\int_0^u\left((T-t_1)\hat{r}^1_s-(T-t_2)\hat{r}^2_s\right)ds\right]\\
 &\le r \frac{t_2-t_1}{(T-t_1)(T-t_2)} +(t_2-t_1)\int_0^1\E_r\left[\hat{r}^1_s\right]ds
 +(T-t_2)\int_0^u\E_r\left[\hat{r}^1_s-\hat{r}^2_s\right]ds.
 \end{align*}
 Due to \eqref{eq:R1R2a}, the last term is non-positive, so
\begin{equation}\label{eq:R1R2b}
0\le \E_r\left[\hat{r}^2_u-\hat{r}^1_u\right]\le (t_2-t_1) \Big(\frac{r}{(T-t_1)(T-t_2)}+q_1\Big)\quad \text{for all $u\in[0,1]$}
\end{equation}
where 
\[
q_1:= \sup_{(t,r,x) \in \cK} \frac{1}{T-t}\int_0^1\E_r\left[r^{T-t}_u\right]du < \infty.
\]

We use the inequalities \eqref{eq:R1R2a}--\eqref{eq:R1R2b} and the property that $\hat{r}^i_u\ge 0$, for $i=1,2$, to obtain the following estimates
\begin{equation}\label{eq:dt5}
\begin{aligned}
\E_r\left[\Big|e^{-R^1_{\theta_1}} - e^{-R^2_{\theta_1}}\Big|\right]
&=\E_r\left[\Big|e^{-(T-t_1)^2\int_0^{\theta_1}\hat{r}^1_udu}-e^{-(T-t_2)^2\int_0^{\theta_1}\hat{r}^2_udu}\Big|\right]\\
&\le \E_r\left[\Big|e^{-(T-t_1)^2\int_0^{\theta_1}\hat{r}^1_udu}-e^{-(T-t_2)^2\int_0^{\theta_1}\hat{r}^1_udu}\Big|\right]\\
&\hspace{11pt}+\E_r\left[\Big|e^{-(T-t_2)^2\int_0^{\theta_1}\hat{r}^1_udu}-e^{-(T-t_2)^2\int_0^{\theta_1}\hat{r}^2_udu}\Big|\right]\\
&\le q_1\left((T-t_1)^2-(T-t_2)^2\right)+(T-t_2)^2\int_0^1\E_r\left[\hat{r}^2_u-\hat{r}^1_u\right]du\\
&\le (t_2-t_1) \Big( 2T q_1 + r \frac{T-t_2}{T-t_1}+ q_1 (T-t_2)^2\Big) \le c_7 (t_2-t_1),
\end{aligned}
\end{equation}
where the constant $c_7>0$ depends only on $\cK$ but not on a specific choice of $t_1, t_2, r, x$. Hence, as in the case of Assumption \ref{ass:coef}(ii), we obtain \eqref{eqn:v_t_lipsch}.
\end{proof}

\section{Properties of the free boundary}\label{sec:properties_boundary}

This section is devoted to establishing the existence of an optimal stopping boundary (free boundary) and some of its main properties. In particular we show the so-called `regularity' of the stopping boundary in the sense of diffusion theory which, together with the monotonicity, is instrumental in our proof of global $C^1$ regularity of the value function $v$.

\begin{proof}[Proof of Proposition \ref{prop:boundary-c}]
The payoff does not depend on $(r_t)$ and $v$ is non-increasing in $r$ by Proposition \ref{prop:vmonot}. Therefore, if $(t,r_1,x)\in \cD$ then $(t,r_2,x)\in\cD$ for for any $r_2>r_1$. This allows us to represent the stopping region $\cD$ via \eqref{eq:D_c} with
\begin{equation}
\label{eq:boundary-c}
c(t,x):=\inf\{r\in \cI: v(t,r,x)=(K-x)^+\},
\end{equation}
with the convention that $\inf \emptyset = \overline r$. It is convenient to prove (ii) first.
\vspace{+4pt}

(ii)
Fix $(t,r,x)\in [0,T)\times \cI \times [K,\infty)$. If we show that $\P_{r,x}(X_{\eps}<K)>0$ for some $\eps \in (0, T-t]$, then $v(t,r,x) > 0 = (K-x)^+$. This means that $(t,r,x) \in \cC$ and $c(t,x) = \overline r$. Recall that $\rho\in(-1,1)$ is the correlation coefficient between the Brownian motions $B$ and $W$ driving the SDEs for $X$ and $r$, respectively. Then we can write $B_t=\rho W_t+\sqrt{1-\rho^2}B^0_t$ for some other Brownian motion $B^0$ independent of $W$. Letting $(\cF^W_t)_{t\ge 0}$ be the filtration generated by $W$, using the explicit form of the dynamics of $X$ we have 
\begin{align}\label{eq:XK}
\P_{r,x}(X_{\eps}<K)&=\E_{r,x}\Big[\P_{r,x}(X_{\eps}<K|\cF^W_\eps)\Big]\notag\\
&=\E_{r,x}\Big[\P_{r}\Big(\exp\big(\sigma \sqrt{1-\rho^2}B^0_\eps\big)<(K/x)\exp\Big(-\sigma\rho W_\eps-\int_0^\eps r_t dt+\tfrac{\sigma^2}{2}\eps\Big)\Big|\cF^W_\eps\Big)\Big]\\
&=\E_{r,x}\Big[\Psi_x\Big(\sigma\rho W_\eps+\int_0^\eps r_t dt-\tfrac{\sigma^2}{2}\eps\Big)\Big],\notag
\end{align}
where
\[
\Psi_x(z):=\P\Big(\exp\big(\sigma \sqrt{1-\rho^2}B^0_\eps\big)<(K/x)e^{-z}\Big)
\]
and the final equality above holds by the independence of $B^0_\eps$ from $\cF^W_\eps$ and the fact that $(W_\eps,\int_0^\eps r_t dt)$ is $\cF^W_\eps$-measurable. Since $\rho\in(-1,1)$, then $\Psi_x(z)>0$ for any $z\in \R$ and we conclude that $\P_{r,x}(X_{\eps}<K)>0$.
\vspace{+4pt}

(i)
By the monotonicity of $v$ in $t$, we have $(t_1,r,x)\in \cD\implies(t_2,r,x)\in\cD$ for any $t_2>t_1$, hence $c(t,x)$ is non-increasing in $t$. 

Fix $0\le x_1<x_2<K$ and let $\tau_1:=\tau_*(t,r,x_1)$ be optimal for $v(t,r,x_1)$. Then, using that $X^{r,x_1}\le X^{r,x_2}$ and recalling \eqref{eq:X2}, we obtain
\begin{align*}
v(t,r,x_2)-v(t,r,x_1)\ge&\, \E\left[e^{-\int_0^{\tau_1}r_sds}\left(\big(K-X^{r,x_2}_{\tau_1}\big)^+-\big(K-X^{r,x_1}_{\tau_1}\big)^+\right)\right]\\
\ge&\, \E\left[e^{-\int_0^{\tau_1}r_sds}\left(X^{r,x_1}_{\tau_1}-X^{r,x_2}_{\tau_1}\right)\right]\\
=&\,x_1-x_2=(K-x_2)^+-(K-x_1)^+.
\end{align*}
Therefore, if $(t,r,x_1)\in\cC$ then $(t,r,x_2)\in\cC$, which implies that $c(t,x)$ is non-decreasing in $x$.

Fix arbitrary $(t,x)\in[0,T)\times \R_+$, let $t_{n}\downarrow t_0$ as $n\to\infty$, then $c(t_n,x)\uparrow c(t_0+,x)$ as $n\to\infty$, where the limit exists by the monotonicity of $t \mapsto c(t,x)$. Since $(t_n,c(t_n,x),x)\in\cD$, then also $(t_0,c(t_0+,x),x)\in\cD$ by the closedness of $\cD$, hence $c(t_0+,x)\ge c(t_0,r)$ which implies $c(t_0+,r)= c(t_0,r)$. Taking $x_n\uparrow x_0$, a similar argument yields $c(t,x_0-)=c(t,x_0)$.
\vspace{+4pt}

(iii)
Under the CIR model, the positivity follows by the definition of $c(t,x)$. Only under Assumption \ref{ass:coef} (ii) a proof is required. Assume that there exists $(t_0,\hat{x})\in [0,T)\times(0,K)$ such that $c(t_0,\hat{x})<0$. Let $0>r_2>r_0>r_1>c(t_0,\hat{x})$ and $0<x_0<\hat{x}$. Define a stopping time
\[
\tau_1=\inf\{s \ge 0: (s, r_s, X_s)\notin [0,T-t_0)\times(r_1,r_2) \times (0,\hat{x})\}. 
\]
By the monotonicity of $c(t,x)$, we have $(t_0,r_0,x_0)\in \cD$. Hence, $\tau_1$ is sub-optimal and
\begin{align}\label{eqn:Kx}
K-x_0=v(t_0,r_0,x_0) 
&\ge \E_{r_0,x_0}\left[e^{-\int_0^{\tau_1}r_s ds}\left(K-X_{\tau_1}\right)^+\right]
\ge K\E_{r_0,x_0}\left[e^{-\int_0^{\tau_1}r_s ds}\right]-x_0,
\end{align}
where the last inequality follows from the optional sampling theorem and the fact that $(K - X_{\tau_1})^+ \ge K- X_{\tau_1}$. Since $\P_{r_0,x_0}(\tau_1>0)=1$ and $r_s(\omega) < r_2 < 0$ for $s \in [0, \tau_1(\omega))$, we obtain
\[
K\E_{r_0,x_0}\left[e^{-\int_0^{\tau_1}r_s ds}\right]-x_0 > K - x_0,
\]
which, in conjunction with \eqref{eqn:Kx}, leads to a contradiction.

Finally, we show that $c(t,0+):=\lim_{x\downarrow 0}c(t,x)=\underline{r}$ for any $t\in [0,T)$ if $\underline{r} \ge 0$. Assume $c(t,0+)\ge \delta>\underline{r}$ for some $t \in [0,T)$. By the monotonicity of $c(t,x)$ and the openness of $\cC$ there is $\hat t \in (t, T)$ such that
\[
[0,\hat{t})\times (r_1, r_2) \times (0,\infty) \subset \cC,
\]
where $\underline{r}<r_1<r_2<\delta$. Fix $0\le t_0<\hat{t}$ and $r_0\in (r_1,r_2)$. Take an arbitrary $x_0 > 0$. Let
\[
\tau_2=\inf\{s \ge 0: (s, r_s)\notin [0,\hat{t}-t_0)\times (r_1,r_2)\}.
\] 
By construction $\P_{r_0,x_0}\big((t_0+s, r_s, X_s)\in\cC\,\text{for}\,s\le \tau_2\big)=1$, so $\tau_2\le \tau_*(t_0,r_0,x_0)$ $\P_{r_0, x_0}$-a.s. By the martingale property of the value function we obtain
\begin{equation}
\label{eq:range_c_1}
\begin{aligned}
K-x_0<v(t_0,r_0,x_0)
&=
\E_{r_0,x_0} \left[e^{-\int_0^{\tau_2}r_sds}v\left(t_0+\tau_2,r_{\tau_2},X_{\tau_2}\right)\right]\\
&\le K\E_{r_0, x_0} \left[e^{-r_1 \tau_2}\right]
= K\E_{r_0} \left[e^{-r_1 \tau_2}\right].
\end{aligned}
\end{equation}
A contradiction is obtained by taking the limit $x_0 \downarrow 0$, since $\E_{r_0} \left[e^{-r_1 \tau_2}\right]$ is independent of $X$ and strictly smaller than $1$ since $r_1 > \underline{r} \ge 0$.
\end{proof}

An important consequence of Proposition \ref{prop:D} is that for $\eps \in (0,x)$
\begin{align*}
(t,r,x)\in\cD\implies(t+\eps,r,x), (t,r+\eps,x), (t,r,x-\eps)\in\cD.  
\end{align*}
We immediately see that $\partial\cC$ enjoys the so-called \emph{cone property} \citep[Def.~4.2.18]{karatzas1998brownian}. Indeed, for any $(t_0,r_0,x_0)\in\partial\cC$, there is an orthant $\widehat C_0$ with vertex in $(t_0,r_0,x_0)$ (hence a cone with aperture $\pi/4$) that satisfies $\widehat C_0\cap\cO\subseteq \cD$. This will be used to establish regularity of the boundary $\partial \cC$ in the sense of diffusions, which, has important consequences for the smoothness of our value function $v$, as we shall see below. 

To this end, we introduce the hitting time to $\cD$, denoted $\sigma_\cD$, and the entry time to the interior of $\cD$, denoted $\ss_\cD$. That is, for $(t,r,x)\in\cO$ we set $\P_{r,x}$-a.s.
\begin{equation}\label{eqn:sigmacD}
\begin{aligned}
&\sigma_\cD:=\inf\{s>0\,:\,(t+s,r_s,X_s)\in\cD\},\\
&\ss_\cD:=\inf\{s\ge 0\,:\,(t+s,r_s,X_s)\in\interior(\cD)\}\wedge(T-t).
\end{aligned}
\end{equation}
Both $\sigma_\cD$ and $\ss_\cD$ are stopping times with respect to the filtration $(\cF_t)_{t \ge 0}$. We will often write $\sigma_\cD(t,r,x)$ and $\ss_\cD(t,r,x)$ to indicate the starting point of the process. 

\begin{proposition}[\textbf{Regularity of the boundary}]\label{prop:reg} 
For $(t_0,r_0,x_0)\in\partial\cC$, we have
\begin{align}\label{eq:reg}
\P_{t_0,r_0,x_0}(\sigma_\cD>0)=\P_{t_0,r_0,x_0}(\ss_\cD>0)=0.
\end{align}
\end{proposition}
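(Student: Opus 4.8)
The plan is to prove the two probabilities in \eqref{eq:reg} vanish by exploiting the cone property of $\partial\cC$ established just before the statement: for $(t_0,r_0,x_0)\in\partial\cC$ there is an orthant $\widehat C_0$ with vertex at $(t_0,r_0,x_0)$, of the form $\{(t,r,x): t\ge t_0,\ r\ge r_0,\ x\le x_0\}$, whose intersection with $\cO$ lies in $\cD$. Since $\ss_\cD\le\sigma_\cD$ always (entering the interior of $\cD$ requires first hitting $\cD$), it suffices to show $\P_{t_0,r_0,x_0}(\sigma_\cD>0)=0$; equivalently, the process $(t_0+s,r_s,X_s)_{s\ge0}$ started at the vertex enters $\cD$ at arbitrarily small times, $\P_{t_0,r_0,x_0}$-a.s. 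By the cone inclusion $\widehat C_0\cap\cO\subseteq\cD$, it is in fact enough to show that
\begin{equation}\label{eq:reg-plan}
\P_{t_0,r_0,x_0}\Big(\exists\, s_n\downarrow 0:\ r_{s_n}\ge r_0\ \text{and}\ X_{s_n}\le x_0\Big)=1,
\end{equation}
because for such $s_n$ the point $(t_0+s_n,r_{s_n},X_{s_n})$ lies in $\widehat C_0\cap\cO\subseteq\cD$, forcing $\sigma_\cD=0$. (One must also handle the degenerate case where the vertex is on the boundary of the state space, e.g.\ $x_0$ small or $r_0=\overline r$; by Proposition \ref{prop:boundary-c}(ii)--(iii) such vertices are not in $\partial\cC$, or are treated separately, so we may assume the vertex is interior to the state space.)

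The key step is then a zero-one law / small-time fluctuation argument for the two-dimensional diffusion $(r_s,X_s)$. The natural tool is Blumenthal's zero-one law: the event in \eqref{eq:reg-plan} is in the germ $\sigma$-algebra $\cF_{0+}$, so it has probability $0$ or $1$, and it remains to rule out probability $0$. To do this I would change variables to reduce to Brownian behaviour near time $0$. Write $\log X_s = \log x_0 + \sigma B_s + \int_0^s(r_u-\tfrac{\sigma^2}{2})\,du$; since the drift term is $O(s)$ while $\sigma B_s$ oscillates like $\sqrt{s}$, the sign of $\log X_s-\log x_0$ near $0$ is governed by $B_s$, which changes sign on every right-neighbourhood of $0$ a.s. Thus $\{X_{s_n}\le x_0\ \text{for some}\ s_n\downarrow 0\}$ has probability one. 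Similarly, near $s=0$ the increment $r_s-r_0$ behaves like $\beta(r_0)W_s$ up to lower-order terms (using $\beta(r_0)>0$, which holds at an interior point under Assumption \ref{ass:coef}, and in the CIR case at any $r_0>0$), so $\{r_{s_n}\ge r_0\ \text{for some}\ s_n\downarrow 0\}$ also has probability one. The subtlety is that we need these two events to occur \emph{simultaneously} along a common sequence $s_n\downarrow 0$. This is where the correlation $\rho\in(-1,1)$ matters: decompose $B_s=\rho W_s+\sqrt{1-\rho^2}B^0_s$ with $B^0\perp W$. Conditionally on the path of $W$ (hence on $r$ and on $\int_0^\cdot r_u\,du$), the event $\{X_s\le x_0\}$ reduces to $\{\sqrt{1-\rho^2}B^0_s \le -\rho W_s - \int_0^s(r_u-\tfrac{\sigma^2}{2})du\}$, and since $B^0$ is a genuine Brownian motion, on the (positive-probability, in fact full-probability) set where $r_{s}\ge r_0$ along some sequence $s_n\downarrow0$, we can further select a subsequence along which $B^0_{s_n}$ is sufficiently negative to also force $X_{s_n}\le x_0$. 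Formalising this via a conditional Borel--Cantelli argument, or more cleanly via Blumenthal's $0$-$1$ law applied to the joint event, closes the gap.

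The main obstacle, and the step requiring the most care, is precisely this \emph{joint} small-time behaviour: establishing that $r$ moves up and $X$ moves down along a common sequence of times shrinking to zero, uniformly enough that the resulting points genuinely land inside $\widehat C_0\cap\cO$ (so one must also track that they remain in the open state space $\cI\times\R_+$, which is automatic for small $s$ by continuity of the paths when the vertex is interior). I expect the cleanest route is: (1) invoke Blumenthal's zero-one law so the target event has probability $0$ or $1$; (2) produce a single deterministic time-net $s_k=1/k$ (or a random but $\cF_{0+}$-measurable one) and show $\liminf_k \P(\text{$r$ up and $X$ down at time }s_k\mid\cF_0)>0$ won't directly work, so instead (2$'$) use the conditional independence of $B^0$ from $W$ together with the reflection/oscillation properties of $B^0$ to show the target event contains, up to a null set, the event $\{\limsup_{s\downarrow0}(r_s-r_0)/\sqrt{2s\log\log(1/s)}>0\}\cap\{\text{the conditional }B^0\text{-event occurs i.o.}\}$, each of which has probability one. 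An alternative, possibly shorter, is to quote the regularity criterion for boundary points of sets satisfying a cone condition for nondegenerate diffusions (e.g.\ \cite[Prop.~4.2.19]{karatzas1998brownian} or a Dynkin-type criterion), since the generator $\cL$ is uniformly elliptic in a neighbourhood of an interior vertex (the diffusion matrix $\bigl(\begin{smallmatrix}\sigma^2x^2 & \rho\sigma x\beta(r)\\ \rho\sigma x\beta(r) & \beta^2(r)\end{smallmatrix}\bigr)$ is positive definite when $\sigma x>0$, $\beta(r)>0$, $|\rho|<1$); an elliptic diffusion immediately enters any cone from its vertex, giving \eqref{eq:reg} at once. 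I would present the elliptic-diffusion argument as the main line and keep the explicit Brownian decomposition as a self-contained fallback.
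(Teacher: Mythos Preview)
Your proposal lands on the right argument, but with two slips worth flagging, and your ``alternative'' at the end is in fact the paper's proof.

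First, the inequality is reversed: one always has $\sigma_\cD\le\ss_\cD$ (hitting $\cD$ happens no later than entering $\interior(\cD)$; cf.\ \eqref{eq:ins}), so it is $\ss_\cD=0$ that must be shown, not $\sigma_\cD=0$. This is harmless once you aim for the \emph{open} cone: if the process enters $\interior(\widehat C_0)\cap\cO\subset\interior(\cD)$ at arbitrarily small times, both stopping times vanish.

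Second, your dismissal of route (2) is unwarranted --- that is precisely what works and what the paper does. The paper fixes a compact ball $\cK\subset\cI\times\R_+$ around $(r_0,x_0)$, on which the diffusion matrix is uniformly elliptic (since $|\rho|<1$, $\sigma x_0>0$, $\beta(r_0)>0$). Because the original coefficients are only locally Lipschitz/elliptic, it constructs an auxiliary diffusion $(\widetilde r,\widetilde X)$ agreeing with $(r,X)$ on $\cK$ but with globally Lipschitz, uniformly elliptic coefficients on $\R^2$, and invokes Aronson's Gaussian lower bound on its transition density. Then, for a cone $\cR''$ with vertex $(r_0,x_0)$ contained in $\{r\ge r_0,\,x\le x_0\}$, scale-invariance of the cone under the Brownian scaling $(r,x)\mapsto((r-r_0)/\sqrt t,(x-x_0)/\sqrt t)$ gives $\P_{r_0,x_0}((\widetilde r_t,\widetilde X_t)\in\cR'')\ge q>0$ uniformly in $t>0$. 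Passing back to $(r,X)$ costs only $\P(\tau_\cK\le t)\to0$, so $\P(\ss_\cR=0)\ge q$, and Blumenthal's $0$--$1$ law finishes. Your hands-on Brownian decomposition is a special case of this picture (the joint event is just the 2D Gaussian landing in a fixed sector, which has constant positive mass by scale-invariance); the law-of-iterated-logarithm detour in (2$'$) is unnecessary. The localisation step --- replacing $(r,X)$ by a globally nice $(\widetilde r,\widetilde X)$ to access Aronson's bounds --- is the one technical ingredient your sketch omits.
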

The proof can be found in Appendix \ref{app:reg}. It rests on Gaussian bounds for the transition density of a diffusion and ideas from the proof of well-known analogous results for multi-dimensional Brownian motion, see e.g. \cite[Thm.~4.2.19]{karatzas1998brownian}. It is also worth recalling that $\partial\cC$ is the boundary of $\cC$ in $\cO$, so that it excludes $\{T\}\times\cI\times\R_+$.

\section{Continuous differentiability of the option value}\label{sec:cont}

We start by establishing the following continuity properties of processes $r$ and $X$.
\begin{lemma}\label{lem:runif}
Let $(r_n,x_n)_{n\ge1}$ be a sequence converging to $(r,x)\in\cI\times\R_+$ as $n\to\infty$. Then
\begin{align}
\label{eq:ucr}&\lim_{n\to +\infty}\sup_{0\le t\le T}\left|r^{r_n}_t-r^{r}_t\right|=0, \qquad\text{$\P$-a.s.}\\
\label{eq:ucx}&\lim_{n\to +\infty}\sup_{0\le t\le T}\left|X^{r_n,x_n}_t-X^{r,x}_t\right|=0, \qquad\text{$\P$-a.s.}
\end{align}
\end{lemma}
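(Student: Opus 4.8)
The plan is to treat the two statements in order, since \eqref{eq:ucx} will follow from \eqref{eq:ucr} together with the explicit representation \eqref{eq:X2}. For \eqref{eq:ucr}, I would first dispose of case (i) of Assumption \ref{ass:coef} separately, and then handle case (ii). In case (ii), $\alpha$ and $\beta$ are globally Lipschitz, so the flow $r \mapsto r^r_\cdot$ is well-behaved: writing the SDE in integral form for $r^{r_n}$ and $r^r$, subtracting, applying the Burkholder--Davis--Gundy inequality to the martingale part, and then Gronwall's lemma, one obtains a moment bound of the form $\E\big[\sup_{0\le t\le T}|r^{r_n}_t - r^r_t|^p\big] \le C|r_n - r|^p$ for some $p$ large enough — this is essentially \citep[Thm.~2.5.9]{krylov}, already invoked elsewhere in the paper. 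A moment bound alone gives convergence in probability and hence a.s.\ convergence along a subsequence; to upgrade to a.s.\ convergence of the whole sequence I would use monotonicity of the flow. Since the SDE for $r$ has no $x$-dependence and trajectories are ordered by their initial conditions (uniqueness of solutions, as used in the proof of Proposition \ref{prop:vmonot}(ii)), one can sandwich: pick rationals $q_n^- \le r_n \le q_n^+$ with $q_n^\pm \to r$, note $\sup_t|r^{r_n}_t - r^r_t| \le \sup_t|r^{q_n^+}_t - r^r_t| + \sup_t|r^{q_n^-}_t - r^r_t|$, and each term on the right tends to $0$ a.s.\ because along the countable family $\{r^q : q \in \Q\}$ a.s.\ convergence holds (a single null set suffices via a diagonal/monotone-limit argument). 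In case (i), CIR, the same comparison argument applies using \citep[Prop.~5.2.18]{karatzas1998brownian}; alternatively one uses the known Lipschitz-in-initial-condition estimates for CIR, or the Yamada--Watanabe type pathwise uniqueness which already underlies the well-posedness asserted after Assumption \ref{ass:coef}.

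For \eqref{eq:ucx}, I would start from the explicit formula
\[
X^{r,x}_t = x\exp\Big(\sigma B_t + \int_0^t\big(r^r_s - \tfrac{\sigma^2}{2}\big)ds\Big),
\]
so that
\[
\big|X^{r_n,x_n}_t - X^{r,x}_t\big|
\le \big|x_n - x\big|\,e^{\sigma B_t + \int_0^t(r^{r_n}_s - \sigma^2/2)ds}
+ x\, e^{\sigma B_t - \sigma^2 t/2}\,\Big|e^{\int_0^t r^{r_n}_s ds} - e^{\int_0^t r^r_s ds}\Big|.
\]
On a fixed $\omega$ outside the null set from \eqref{eq:ucr}, $\sup_{0\le t\le T}\big|\int_0^t (r^{r_n}_s - r^r_s)\,ds\big| \le T\sup_{0\le s\le T}|r^{r_n}_s - r^r_s| \to 0$, and the exponentials $e^{\int_0^t r^{r_n}_s ds}$ are uniformly bounded in $n$ and $t\le T$ (again on that $\omega$, since $\sup_{s\le T}|r^{r_n}_s|$ is bounded uniformly in $n$ by \eqref{eq:ucr}). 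Using $|e^a - e^b| \le e^{\max(a,b)}|a-b|$ and the pathwise boundedness of $\sup_{t\le T} e^{\sigma B_t - \sigma^2 t/2}$ on the given $\omega$, both terms go to $0$ uniformly in $t\in[0,T]$. Since $x_n \to x$, the first term vanishes too. This gives \eqref{eq:ucx} a.s.

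I do not anticipate a serious obstacle here; the lemma is a routine stability-of-SDE-flows statement. The one point that needs a little care — and which I would make explicit — is the passage from a.s.\ convergence along subsequences (which is all a moment estimate yields) to a.s.\ convergence of the full deterministic sequence $(r_n, x_n)$: this is where the monotonicity/comparison of the one-dimensional flow $r \mapsto r^r_\cdot$ is genuinely used, and it is the reason the lemma is stated for $r$ rather than for a general multidimensional diffusion. Everything else is Gronwall, BDG, and continuity of $\exp$.
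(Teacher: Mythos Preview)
Your plan is correct and its backbone---monotonicity of the one-dimensional flow $r\mapsto r^r_\cdot$ together with a sandwich argument---is exactly what the paper does. The difference is in how the monotone case is dispatched. The paper is shorter: for a monotone sequence $(r_n)$ it notes that $f^n_t:=r^{r_n}_t-r^r_t$ is continuous in $t$ and converges monotonically to $0$ for each fixed $t$, and then invokes Dini's theorem to upgrade pointwise to uniform convergence on $[0,T]$---no BDG, no Gronwall, and no case split between CIR and the Lipschitz case. For a general sequence it squeezes between the monotone sequences $\bar r_n=\sup_{k\ge n}r_k$ and $\underline r_n=\inf_{k\ge n}r_k$, which is cleaner than your rationals-plus-diagonal device; indeed your phrasing there is a little loose, since a.s.\ statements for countably many rational initial points do not by themselves yield a.s.\ convergence of $\sup_t|r^{q_n^\pm}_t-r^r_t|$ unless the $q_n^\pm$ are themselves chosen monotone, at which point you are back to the paper's argument anyway. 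Your moment-bound route is fine and, arguably, supplies the pointwise convergence $r^{r_n}_t\to r^r_t$ that the Dini step tacitly uses; the paper simply does not spell that out. For \eqref{eq:ucx} you and the paper do the same thing via \eqref{eq:X2}.
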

\begin{proof}[Proof of Lemma \ref{lem:runif}]
Assume first that $(r_n)_{n\ge 1}$ is a monotone sequence. Define $f^n_t:=r^{r_n}_t-r^{r}_t$. Then for a.e.~$\omega\in\Omega$, $t \mapsto f^n_t(\omega)$ is continuous and $f^n_t(\omega)$ converges to $0$ monotonically as $n \to \infty$ for all $t\in[0,T]$. Hence the convergence is uniform on $[0,T]$ thanks to Dini's theorem and \eqref{eq:ucr} holds. 

For an arbitrary sequence $(r_n)_{n\ge 1}$ define monotone sequences $\bar{r}_n = \sup_{k \ge n} r_k$ and $\underline{r}_n = \inf_{k \ge n} r_k$. Since $r^{\underline{r}_n}_t-r^{r}_t \le r^{r_n}_t - r^r_t \le r^{\bar{r}_n}_t-r^{r}_t$, we have
\[
0 \le \sup_{0\le t\le T}\left|r^{r_n}_t-r^{r}_t\right| \le \sup_{0\le t\le T}\left|r^{\underline{r}_n}_t-r^{r}_t\right| + \sup_{0\le t\le T}\left|r^{\bar{r}_n}_t-r^{r}_t\right|.
\]
By virtue of the first part of the proof, the terms on the right-hand side converge to $0$ as $n \to \infty$, which proves \eqref{eq:ucr}. The verification of \eqref{eq:ucx} is easy using the representation formula \eqref{eq:X2} for $X$ and \eqref{eq:ucr}.
\end{proof}

\begin{lemma}\label{cor:conv_tau}
Let $(t_n,r_n,x_n)_{n\ge 1}$ be a sequence in $\cC$ converging to $(t,r,x)\in \overline{\cC} \cap \cO$ as $n\to \infty$. Then
\[
\lim_{n\to\infty}\tau_*(t_n,r_n,x_n) = \tau_*(t,r,x),\qquad\text{$\P$-a.s.}
\]
\end{lemma}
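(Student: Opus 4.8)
The plan is to establish separately the two inequalities $\liminf_{n\to\infty}\tau_*(t_n,r_n,x_n)\ge\tau_*(t,r,x)$ and $\limsup_{n\to\infty}\tau_*(t_n,r_n,x_n)\le\tau_*(t,r,x)$, arguing $\P_{r,x}$-a.s.\ on the event on which the uniform convergences $\sup_{0\le s\le T}|r^{r_n}_s-r^r_s|\to 0$ and $\sup_{0\le s\le T}|X^{r_n,x_n}_s-X^{r,x}_s|\to 0$ of Lemma \ref{lem:runif} hold (intersected with finitely many null sets introduced below; note all these processes live on the same space, driven by the fixed pair $(B,W)$). Write $\tau_n:=\tau_*(t_n,r_n,x_n)$ and $\tau:=\tau_*(t,r,x)$, and recall that $\tau_n\le T-t_n$, $\tau\le T-t$, and that by Proposition \ref{prop:vlip} the set $\cC$ is open and $\cD$ is closed.

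For the lower bound I would take an arbitrary subsequential limit $\tau_{n_k}\to L$; since the paths $s\mapsto(t_{n_k}+s,r^{r_{n_k}}_s,X^{r_{n_k},x_{n_k}}_s)$ are continuous and $\cD$ is closed, the process sits in $\cD$ at the entry time $\tau_{n_k}$, i.e.\ $(t_{n_k}+\tau_{n_k},r^{r_{n_k}}_{\tau_{n_k}},X^{r_{n_k},x_{n_k}}_{\tau_{n_k}})\in\cD$. Combining the uniform convergence of Lemma \ref{lem:runif} with pathwise continuity of $s\mapsto(r^r_s,X^{r,x}_s)$, this point converges to $(t+L,r^r_L,X^{r,x}_L)$, which then also lies in $\cD$; hence $\tau\le L$, and since the subsequence was arbitrary we obtain $\liminf_n\tau_n\ge\tau$. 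This direction uses only continuity of $v$ and Lemma \ref{lem:runif}.

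For the upper bound the case $\tau=T-t$ is immediate from $\tau_n\le T-t_n\to T-t$, so assume $\tau<T-t$. The crucial claim --- and the only point where the regularity of the boundary is needed --- is that, a.s., the limit path enters $\interior(\cD)$ immediately after $\tau$: there exist $s_k\downarrow\tau$ with $(t+s_k,r^r_{s_k},X^{r,x}_{s_k})\in\interior(\cD)$. Granting this, fix $k$: since $\interior(\cD)$ is open, the uniform convergence forces $(t_n+s_k,r^{r_n}_{s_k},X^{r_n,x_n}_{s_k})\in\interior(\cD)\subseteq\cD$ for all $n$ large enough, whence $\tau_n\le s_k$; thus $\limsup_n\tau_n\le s_k$, and letting $k\to\infty$ gives $\limsup_n\tau_n\le\tau$.

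It remains to justify the claim. If $(t,r,x)\in\partial\cC$ then $(t,r,x)\in\cD$, so $\tau=0$, and since $(t,r,x)\notin\interior(\cD)$ Proposition \ref{prop:reg} (which gives $\P_{t,r,x}(\ss_\cD>0)=0$) is exactly the assertion that such $s_k\downarrow 0$ exist. If instead $(t,r,x)\in\cC$, then $\tau=\sigma_\cD(t,r,x)>0$ (first hitting time equals first entry time because $\cC$ is open), and at time $\tau$ the process lies in $\cD\cap\cO$; using the elementary inclusion $\cD\cap\cO\subseteq\interior(\cD)\cup\partial\cC$ (a point of $\cD\cap\cO$ not interior to $\cD$ is a limit of points of $\cC$, hence lies in $\overline{\cC}\cap\cO\setminus\cC=\partial\cC$), we split: on $\{(t+\tau,r^r_\tau,X^{r,x}_\tau)\in\interior(\cD)\}$ path continuity already gives the claim, and on the complementary event the point lies in $\partial\cC$, so applying the strong Markov property at $\tau$ together with Proposition \ref{prop:reg} yields the required $s_k\downarrow\tau$. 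Putting the two bounds together proves $\tau_n\to\tau$ a.s. The main obstacle is precisely this upper semicontinuity: showing that the limiting trajectory enters the \emph{interior} of $\cD$ arbitrarily soon after $\tau$; everything substantive there (Gaussian density bounds, cone property) is already encapsulated in Proposition \ref{prop:reg}, so the residual work is the elementary topological remark above and one use of the strong Markov property.
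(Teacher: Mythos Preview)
Your argument is correct and follows essentially the same route as the paper. The paper packages the two inequalities into general appendix lemmas on hitting/entry times (Lemmas~\ref{lem:sigmaD}--\ref{lem:rsigma} and Proposition~\ref{prop:rsigma}), and then invokes them after identifying $\tau_*$ with $\sigma_\cD$; you simply inline those arguments. The only stylistic difference is in the lower bound: the paper uses a distance-to-$\partial\cC$ argument, whereas you pass to a subsequential limit and use closedness of $\cD$---both are standard and equivalent here. Your upper bound (immediate entry into $\interior(\cD)$ after $\tau$, via the topological inclusion $\cD\cap\cO\subseteq\interior(\cD)\cup\partial\cC$, strong Markov, and Proposition~\ref{prop:reg}) is exactly the content of the paper's Lemma~\ref{lem:sigmaD} combined with Lemma~\ref{lem:usc}.
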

\begin{proof}[Proof of Lemma \ref{cor:conv_tau}]
The proof relies on known facts from the theory of Markov processes, which we summarise in Appendix \ref{app:cont} for the reader's convenience, combined with Proposition \ref{prop:reg}. Proposition \ref{prop:reg} and Lemma \ref{lem:runif} imply that Assumptions \ref{ass:reg} and \ref{ass:cont} are satisfied for $\cDp = \cD \cap \cO$. It is also immediate to see that $\sigma_\cD = \sigma_{\cDp}$ $\P$-a.s. with $\sigma_{\cDp}$ defined in \eqref{eqn:sigma_cDp}.

The continuity of trajectories of $(r, X)$ means that the process cannot jump instantaneously to the stopping set $\cD$ when starting from $\cC$, so $\P_{\hat t,\hat r,\hat x}(\tau_*=\sigma_\cD)=1$ for any $(\hat t,\hat r,\hat x)\in\cC$. When $(\hat t,\hat r,\hat x) \in \partial \cC$, by construction we have $\tau_*(\hat t,\hat r,\hat x) = 0$, $\P$-a.s., and, using Proposition \ref{prop:reg}, $\sigma_\cD(\hat t,\hat r,\hat x) = 0$, $\P$-a.s. Recalling that $\overline{\cC} \cap \cO = \cC \cup \partial \cC$, the claim then follows from Proposition \ref{prop:rsigma}.
\end{proof}

Next we provide gradient estimates based on probabilistic arguments.
\begin{proposition}\label{prop:grad}
Let $\cK\subset\cO$ be a compact set with non-empty interior. There is $L = L(\cK) >0$ such that for any $(t,r,x)\in(\interior(\cK)\setminus\partial\cC)$ we have
\begin{align}
\label{eq:vx} v_x(t,r,x)&=-\E_{t,r,x}\left[\mathds{1}_{\{X_{\tau_*}\le K\}}e^{\sigma B_{\tau_*}-\frac{\sigma^2}{2}\tau_*}\right],\\
\label{eq:vt}0 \ge v_t(t,r,x)&\ge -L\;\E_{t,r,x}\left[e^{-\int_0^{\tau_\cK}r^r_sds}\mathds{1}_{\{\tau_\cK\le\tau_*\}}\right],
\end{align}
where $\tau_\cK:=\inf\{s\ge0\,:\,(t+s,r_s,X_s)\notin\interior(\cK)\}$.
\end{proposition}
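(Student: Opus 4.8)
The plan is to differentiate the representation of $v$ as a function of initial data, exploiting the explicit dependence of $X^{r,x}$ on $x$ through the multiplicative factor in \eqref{eq:X2}, and the Lipschitz continuity of $v$ together with the continuity of $\tau_*$ established in Lemma \ref{cor:conv_tau}. First I would fix $(t,r,x) \in \interior(\cK)\setminus\partial\cC$; since the complement of $\partial\cC$ in $\cO$ is $\cC\cup\interior(\cD)$, I treat the two cases. If $(t,r,x)\in\interior(\cD)$, then $v$ locally coincides with $(K-x)^+$, and since $x<K$ there (points with $x\ge K$ lie in $\cC$ by Proposition \ref{prop:boundary-c}(ii)) we get $v_x=-1$, $v_t=0$, which matches the claimed formulas because $\tau_*=0$ $\P$-a.s.\ at interior stopping points and $\mathds{1}_{\{X_0\le K\}}e^0 = 1$. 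So the substantive case is $(t,r,x)\in\cC$.

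For $(t,r,x)\in\cC$: to get \eqref{eq:vx}, I would use $\tau_* = \tau_*(t,r,x)$ as a (sub)optimal stopping time for the nearby point $(t,r,x\pm\eps)$ and the convexity/monotonicity in $x$ (Proposition \ref{prop:vmonot}(iii)). Using the identity $X^{r,x+\eps}_{\tau_*} = X^{r,x}_{\tau_*} + \eps\, e^{\sigma B_{\tau_*} - \frac{\sigma^2}{2}\tau_*}\exp(\int_0^{\tau_*} r_s\,ds)\cdot e^{-\int_0^{\tau_*}r_s ds}$ — more precisely, from \eqref{eq:X2}, $X^{r,x}_{\tau_*} = x\, Z_{\tau_*}$ with $Z_{\tau_*} := \exp(\sigma B_{\tau_*} + \int_0^{\tau_*}(r_s-\frac{\sigma^2}{2})ds)$, so $\partial_x X^{r,x}_{\tau_*} = Z_{\tau_*}$ and the discounted payoff $e^{-\int_0^{\tau_*}r_s ds}(K-xZ_{\tau_*})^+$ has $x$-derivative $-\mathds{1}_{\{xZ_{\tau_*}<K\}}e^{-\int_0^{\tau_*}r_s ds}Z_{\tau_*} = -\mathds{1}_{\{X_{\tau_*}<K\}}e^{\sigma B_{\tau_*}-\frac{\sigma^2}{2}\tau_*}$. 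A standard squeezing argument (upper bound from using $\tau_*(t,r,x)$ in $v(t,r,x+\eps)$, lower bound from using $\tau_*(t,r,x+\eps)$ in $v(t,r,x)$, then $\eps\downarrow0$ with dominated convergence, the domination coming from $\E[\sup_{s\le T}Z_s]<\infty$ via \eqref{eq:subg}) together with convexity in $x$ giving existence of one-sided derivatives and Lemma \ref{cor:conv_tau} giving $\tau_*(t,r,x\pm\eps)\to\tau_*(t,r,x)$, yields \eqref{eq:vx}; the set $\{X_{\tau_*}=K\}$ has probability zero, so the strict and non-strict inequalities in the indicator agree.

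For \eqref{eq:vt}, the sign $v_t\le0$ is immediate from Proposition \ref{prop:vmonot}(i). For the lower bound I would localise: let $\tau_\cK$ be the exit time from $\interior(\cK)$ as defined, and write, via the (super)martingale property \eqref{eq:supmg}–\eqref{eq:mg} and the Markov property at time $\tau_\cK\wedge\tau_*$,
\[
v(t,r,x) = \E_{t,r,x}\!\left[e^{-\int_0^{\tau_\cK\wedge\tau_*} r_s ds} v\big(t+\tau_\cK\wedge\tau_*,\, r_{\tau_\cK\wedge\tau_*},\, X_{\tau_\cK\wedge\tau_*}\big)\right],
\]
and similarly for $v(t+h,r,x)$ with the exit time from $\interior(\cK)$ of the process started at time $t+h$; on $\{\tau_\cK<\tau_*\}$ the two stopped processes agree in the spatial variables but differ by $h$ in the time argument, and Lipschitz continuity of $v$ on $\cK$ (Proposition \ref{prop:vlip}) bounds that difference by $L_\cK h$, while on $\{\tau_\cK\ge\tau_*\}$ both processes have exercised and the payoff is time-independent so the contribution cancels up to $O(h^2)$ from the change in discount over $[\tau_*,\tau_*+h]$ — actually, more cleanly, one pairs $\tau_*(t+h,r,x)$ as admissible for $v(t,r,x)$ after a time-shift argument identical to the one in the proof of Proposition \ref{prop:vlip} (rescaling to stopping times on $[0,1]$), confining the loss to the event that the rescaled process has left $\interior(\cK)$ before stopping. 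Dividing by $h$, letting $h\downarrow0$, and using that $e^{-\int_0^{\tau_\cK}r_s ds}\le e^{|\underline r|^- \cdot T}$-type bounds are already absorbed into $L$, gives \eqref{eq:vt}. The main obstacle is making the localisation rigorous: one must control the difference between $\tau_\cK$ for the process started at $t$ and the analogous exit time for the process started at $t+h$ (they differ because the time-coordinate threshold $T$ — or rather the $\interior(\cK)$ boundary — shifts), and ensure the error terms are genuinely $O(h)$ uniformly; the time-rescaling device from the proof of Proposition \ref{prop:vlip}, which removes the $h$-dependence from the admissible-stopping-time class, is the right tool and I would reuse it verbatim, with the compact set $\cK$ playing the role of the confinement region so that the coefficients and $v$ itself are Lipschitz with uniform constants.
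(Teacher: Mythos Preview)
Your argument for \eqref{eq:vx} is correct, but more elaborate than needed. The paper avoids invoking Lemma \ref{cor:conv_tau} entirely by using the \emph{same} stopping time $\tau_*=\tau_*(t,r,x)$ for both bounds: for the lower bound on the difference quotient it perturbs to $x+\eps$ (as you do), but for the upper bound it perturbs to $x-\eps$ and uses $X^{r,x-\eps}_{\tau_*} < X^{r,x}_{\tau_*}$ to get $\mathds{1}_{\{X^{r,x-\eps}_{\tau_*}\le K\}}\ge \mathds{1}_{\{X^{r,x}_{\tau_*}\le K\}}$. Both inequalities then involve only the fixed $\tau_*$, so no convergence of stopping times is needed and the $\{X_{\tau_*}=K\}$ issue does not arise at this stage.

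Your treatment of \eqref{eq:vt} has a genuine gap. Your first approach is the right one, but you mis-analyse the event $\{\tau_\cK\ge\tau_*\}$: there is no ``change in discount over $[\tau_*,\tau_*+h]$'' because the stopping time $\eta:=\tau_*\wedge\tau_\cK$ is the \emph{same} random variable when applied in both $v(t,r,x)$ and $v(t+\eps,r,x)$; only the time argument of $v$ shifts. The key observation you are missing is that on $\{\tau_*\le\tau_\cK\}$ the point $(t+\tau_*,r_{\tau_*},X_{\tau_*})$ lies in $\cD$, and since $t\mapsto b(t,r)$ is non-decreasing (Proposition \ref{prop:D}) the shifted point $(t+\eps+\tau_*,r_{\tau_*},X_{\tau_*})$ also lies in $\cD$, whence $v(t+\eps+\tau_*,r_{\tau_*},X_{\tau_*})=v(t+\tau_*,r_{\tau_*},X_{\tau_*})=K-X_{\tau_*}$ \emph{exactly}. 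This gives exact cancellation, not an $O(h^2)$ error. Combined with the Lipschitz bound on $\{\tau_\cK<\tau_*\}$ (applied on a slight time-enlargement $\cK^\delta$ of $\cK$ so that $(t+\eps+\tau_\cK,r_{\tau_\cK},X_{\tau_\cK})$ remains in the Lipschitz region), this yields \eqref{eq:vt} directly. The time-rescaling device from Proposition \ref{prop:vlip} is unnecessary here, and your worry about comparing $\tau_\cK$ for processes ``started at $t$'' versus ``started at $t+h$'' is misplaced: the spatial process $(r_s,X_s)$ does not depend on the initial time, so one simply uses the single $\tau_\cK=\tau_\cK(t,r,x)$ as an admissible stopping time for both problems (it satisfies $\tau_\cK\le T-t-\eps$ for small $\eps$ since $\cK$ is compact in $\cO$).
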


\begin{remark}
Later on we obtain also a bound on the derivative $v_r$ of the value function with respect to the interest rate. We present it separately in \eqref{eq:v-diff-r-4.5} because, due to the square root appearing in the diffusion coefficient of the CIR dynamics, we need to use local approximations of the stochastic dynamics of $(r_t)_{t\ge 0}$. That procedure does not lead to a neat expression as in \eqref{eq:vx} and \eqref{eq:vt}. 
\end{remark}

\begin{proof}[Proof of Proposition \ref{prop:grad}] 

Fix $(t,r,x)\in(\interior(\cK)\setminus\partial\cC)$. Recall that $\cD\subset[0,T]\times\cI\times[0,K]$. If $(t,r,x)\in\text{int}(\cD)$ then \eqref{eq:vx} follows easily from $v(t,r,x)=K-x$ and $v_t(t,r,x) = 0$. Assume $(t,r,x)\in\cC$ and notice that $\tau_*=\sigma_\cD$, $\P_{t,r,x}$-a.s. We split the proof into two parts.

(\emph{Proof of \eqref{eq:vx}}) 
For all sufficiently small $\eps > 0$ we have $(t,r,x+\eps)\in \cC$. From now on, consider such $\eps$. To simplify notation let $\sigma_\cD:=\sigma_\cD(t,r,x)$. Using that $\sigma_\cD$ is admissible and sub-optimal for $v(t,r,x+\eps)$ we get
\begin{align*}
&v(t,r,x+\eps)-v(t,r,x)\\
&\ge \E\left[e^{-\int_0^{\sigma_\cD}r^r_sds}\left(\big(K-(x+\eps)X^{r,1}_{\sigma_\cD}\big)^+-\big(K-xX^{r,1}_{\sigma_\cD}\big)^+\right)\right]\nonumber\\
&\ge \E\left[e^{-\int_0^{\sigma_\cD}r^r_sds}\mathds{1}_{\{X^{r,x}_{\sigma_\cD}\le K\}}\left(xX^{r,1}_{\sigma_\cD}-(x+\eps)X^{r,1}_{\sigma_\cD}\right)\right]
=-\eps\E\Big[\mathds{1}_{\{X^{r,x}_{\sigma_\cD}\le K\}}e^{\sigma B_{\sigma_\cD}-\frac{\sigma^2}{2}\sigma_\cD}\Big]\nonumber.
\end{align*}
Dividing the above expression by $\eps$ and taking limits as $\eps\to0$ we get
\begin{align}\label{eq:vx1}
v_x(t,r,x)=\lim_{\eps\to0}\frac{1}{\eps}\left(v(t,r,x+\eps)-v(t,r,x)\right)\ge - \E\left[\mathds{1}_{\{X^{r,x}_{\sigma_\cD}\le K\}}e^{\sigma B_{\sigma_\cD}-\frac{\sigma^2}{2}\sigma_\cD}\right].
\end{align}

For the reverse inequality we use that $\sigma_\cD$ is admissible and sub-optimal for $v(t,r,x-\eps)$:
\begin{align*}
v(t,r,x)-v(t,r,x-\eps) &\le \E\left[e^{-\int_0^{\sigma_\cD}r^r_sds}\left(\big(K-x X^{r,1}_{\sigma_\cD}\big)^+-\big(K-(x-\eps) X^{r,1}_{\sigma_\cD}\big)^+\right)\right]\\
&\le-\eps\,\E\left[\mathds{1}_{\{X^{r,x-\eps}_{\sigma_\cD}\le K\}}e^{\sigma B_{\sigma_\cD}-\frac{\sigma^2}{2}\sigma_\cD}\right]
\le-\eps\,\E\left[\mathds{1}_{\{X^{r,x}_{\sigma_\cD}\le K\}}e^{\sigma B_{\sigma_\cD}-\frac{\sigma^2}{2}\sigma_\cD}\right],
\end{align*}
where in the last inequality we used that $X^{r,x-\eps}_s < X^{r,x}_s$, $s\ge 0$. Divide the above expression by $\eps$ and take limits as $\eps\to0$:
\begin{align}\label{eq:vx2}
v_x(t,r,x)=\lim_{\eps\to0}\frac{1}{\eps}\left(v(t,r,x)-v(t,r,x-\eps)\right)\le -\E\left[\mathds{1}_{\{X^{r,x}_{\sigma_\cD}\le K\}}e^{\sigma B_{\sigma_\cD}-\frac{\sigma^2}{2}\sigma_\cD}\right].
\end{align}
Now \eqref{eq:vx1} and \eqref{eq:vx2} imply \eqref{eq:vx}.
\vspace{+5pt}

(\emph{Proof of \eqref{eq:vt}}) The upper bound $v_t(t,r,x) \le 0$ follows from the monotonicity of $v$ in $t$ (Proposition \ref{prop:vmonot}). For all sufficiently small $\eps > 0$ we have $(t+\eps,r,x)\in \cK \cap \cC$ and $\tau_\cK := \tau_\cK(t,r,x) \le T-t-\eps$. From now on, consider such $\eps$. Denote $\sigma_\cD:=\sigma_\cD(t,r,x)$. Thanks to the choice of $\eps$, the stopping time $\eta:=\sigma_\cD\wedge\tau_\cK$ is admissible for $v(t+\eps,r,x)$. Using the (super)martingale property of $v$ (see \eqref{eq:supmg}--\eqref{eq:mg}) we get
\begin{align}\label{eqn:vta}
&v(t+\eps,r,x)-v(t,r,x)\nonumber\\
&\ge \E\left[e^{-\int_0^\eta r^r_sds}\left(v(t+\eps+\eta,r^r_\eta,X^{r,x}_\eta)-v(t+\eta,r^r_\eta,X^{r,x}_\eta)\right)\right]\\
&= \E\left[e^{-\int_0^{\tau_\cK} r^r_sds}\left(v(t+\eps+\tau_\cK,r^r_{\tau_\cK},X^{r,x}_{\tau_\cK})-v(t+{\tau_\cK},r^r_{\tau_\cK},X^{r,x}_{\tau_\cK})\right)\mathds{1}_{\{\tau_\cK<\sigma_\cD\}}\right],\nonumber
\end{align} 
where the equality follows from $v(t+\eps+\sigma_\cD,r^r_{\sigma_\cD},X^{r,x}_{\sigma_\cD})=v(t+{\sigma_\cD},r^r_{\sigma_\cD},X^{r,x}_{\sigma_\cD})=K - X^{r,x}_{\sigma_\cD}$ on $\{\tau_\cK\ge\sigma_\cD\}$ since $t\mapsto b(t,r)$ is non-decreasing (Proposition \ref{prop:D}). Let $\cK^\delta = \{ (t+s, r, x)\,:\, (t,r,x) \in \cK \text{ and } s \in [0, \delta]\}$. Fix a sufficiently small $\delta > 0$ so that this set is contained in $\cO$ and set $L$ equal to the Lipschitz constant for $v$ on $\cK^\delta$ (c.f. Proposition \ref{prop:vlip}). Since $(t+{\tau_\cK},r^r_{\tau_\cK},X^{r,x}_{\tau_\cK})\in\partial\cK$, we have $(t+\eps+{\tau_\cK},r^r_{\tau_\cK},X^{r,x}_{\tau_\cK}) \in \cK^\delta$ for any $\eps<\delta$. Using the Lipschitz continuity of $v$, we bound \eqref{eqn:vta} from below by 
\[
-\eps\,L\,\E\left[e^{-\int_0^{\tau_\cK} r^r_sds}\mathds{1}_{\{\tau_\cK<\sigma_\cD\}}\right].
\]
Dividing by $\eps$ and taking the limit $\eps \to 0$ completes the proof of \eqref{eq:vt}.
\end{proof}

We are now ready to prove that the value function is globally continuously differentiable on $\cO$.

\begin{proof}[Proof of Theorem \ref{prop:v-diff}]
It suffices to show that the value function has continuous partial derivatives across the stopping boundary, that is 
\begin{align}
&\lim_{n \to \infty} v_t(t_n,r_n,x_n)= \lim_{n \to \infty} v_r(t_n,r_n,x_n) = 0,\label{eq:v-diff-t}\\
&\lim_{n \to \infty} v_x(t_n,r_n,x_n)=-1\label{eq:v-diff-x}, 
\end{align}
for any sequence $(t_n,r_n,x_n)$ in $\cC$ converging to $(t_0,r_0,x_0)\in \partial\cC$ as $n\to \infty$. Fix such a sequence and denote $\tau_n = \tau_*(t_n,r_n,x_n)$.
\smallskip

\emph{Convergence of $v_x$.} Note that $\P_{t_n,r_n,x_n}(X_{\tau_n}= K,\, \tau_n < T\!-\!t_n) = 0$ (Proposition \ref{prop:boundary-c}) and
$\P_{t_n,r_n,x_n}(X_{\tau_n} =  K,\,\tau_n = T\!-\!t_n) \le  \P_{t_n,r_n,x_n}(X_{T-t_n} =  K) =0$ (the final equality can be shown by arguments as in \eqref{eq:XK}).
From Proposition \ref{prop:grad} we therefore have
\[
v_x(t_n,r_n,x_n)=-\E \left[\mathds{1}_{\{X^{r_n,x_n}_{\tau_n}< K\}}e^{\sigma B_{\tau_n}-\frac{\sigma^2}{2}\tau_n}\right].
\]
From Lemma \ref{cor:conv_tau}, we obtain $\lim_{n \to \infty} \tau_n = 0$ $\P$-a.s. We know from $(t_0, r_0, x_0) \in \partial \cC$ that $x_0 < K$. Lemma \ref{lem:runif} and the continuity of trajectories of $(r, X)$ imply the convergence $\mathds{1}_{\{X^{r_n,x_n}_{\tau_n}< K\}}\rightarrow \mathds{1}_{\{x_0< K\}}=1$ as $n \to \infty$. An application of the dominated convergence theorem completes the proof of \eqref{eq:v-diff-x}.

\emph{Convergence of $v_t$.} Let $\cK$ be a closed ball centered on $(t_0, r_0, x_0)$ and contained in $\cO$. With no loss of generality (by discarding a finite number of initial elements of the sequence) we assume that $(t_n,r_n,x_n)\in int(\cK)$ for all $n\ge 1$. Let 
\[
\tau^n_{\cK}:=\inf\{s\ge 0: (t_n+s,r_s^{r_n},X_{s}^{r_n,x_n})\notin \cK\}, \quad n\ge 0
\]
and notice, in particular, that $\P(\tau^0_{\cK} > 0) = 1$. The boundary $\partial \cK$ is regular for $\cO\setminus\cK$ and $(t, r, X)$ by the same reasoning as in the proof of Proposition \ref{prop:reg}. Repeating arguments from the proof of Lemma \ref{cor:conv_tau} shows that $\tau^n_{\cK}\to \tau^0_{\cK}$, $\P$-a.s. Fix $\eps\in (0,1)$. Since $\P(\tau^0_{\cK}>0)=1$, there exists $\delta>0$ such that $\P(\tau^0_{\cK}> \delta)\ge 1-\eps$. From inequality \eqref{eq:vt}, we get
\begin{equation}
\label{eq:v-diff-t-0}
\begin{aligned}
0 \ge v_t(t_n,r_n,x_n)&\ge -L\, \E \left[e^{-\int_0^{\tau^n_\cK}r^{r_n}_sds}\mathds{1}_{\{\tau^n_\cK\le\tau_n\}}\right]\\
&=-L\, \E \left[e^{-\int_0^{\tau^n_\cK}r^{r_n}_sds}\left(\mathds{1}_{\{\tau^n_\cK\le\tau_n\}\cap\{\tau^n_\cK\ge\delta\}}+\mathds{1}_{\{\tau^n_\cK\le\tau_n\}\cap\{\tau^n_\cK<\delta\}}\right)\right]\\
&\ge -L\, \E \left[e^{-\int_0^{\tau^n_\cK}r^{r_n}_sds}\left(\mathds{1}_{\{\tau_n \ge \delta\}}+\mathds{1}_{\{\tau^n_\cK<\delta\}}\right)\right].
\end{aligned}
\end{equation}
Using that $|r_{t\wedge\tau^n_{\cK}}|$ is bounded by some constant $r_\cK$ for every $n$, we have
\begin{equation}
\label{eq:v-diff-t-1}
0 \ge v_t(t_n,r_n,x_n)\ge -L e^{r_\cK T}\, \left(\P \left(\tau_n \ge \delta\right)+\P\left(\tau^n_\cK<\delta\right)\right).
\end{equation}
Lemma \ref{cor:conv_tau} guarantees that $\tau_n \to 0$ $\P$-a.s., so the first term converges to $0$ as $n \to \infty$ by the dominated convergence theorem. Fatou's lemma gives a bound for the second term:
\[
\limsup_{n \to \infty} \P\left(\tau^n_\cK<\delta\right) \le \E \Big[ \limsup_{n \to \infty} \mathds{1}_{\{\tau^n_\cK < \delta \}} \Big] \le \E \left[\mathds{1}_{\{\tau^0_\cK \le \delta \}} \right] \le \eps,
\]
where we used that $\limsup_n 1_{A_{n}}=1_{\limsup_n A_n}$ and the convergence of the stopping times. We obtain the convergence of $v_t$ in \eqref{eq:v-diff-t} by sending $\eps \to 0$.
\smallskip

\emph{Convergence of $v_r$.} Consider a sequence $(t_n,r_n,x_n)\in \cC$ converging to $(t_0,r_0,x_0) \in\partial \cC$. Since $\partial\cC$ is the boundary of $\cC$ in $\cO$, without loss of generality we can assume that
\[
\{(r_n,x_n)\} \subset \text{int}(\cK)\quad \text{with}\quad \cK:= [r_a,r_b]\times[x_a,x_b] \subset (\underline r,\overline{r})\times \R_+.
\]
Denote $\cK^T:=[t_a,t_b]\times \cK_0$, where $t_a = \inf_n t_n \ge 0$ and $t_b = \sup_n t_n < T$. 

We know that $v_r\le 0$ on $\cC$ (Proposition \ref{prop:vmonot}).  We will now develop a lower bound for $v_r$ on $\cC \cap \cK^T$, which will allow us to show that $v_r(t_n,r_n,x_n) \to 0$ as $n\to \infty$. Let $\tilde \cK \subset (\underline r,\overline{r})\times \R_+$ be compact and such that $\cK \subset \text{int}(\tilde \cK)$. Denote $\tilde \cK^T = [t_a, \tilde t_b] \times \tilde \cK$ for some $\tilde t_b \in (t_b, T)$. For $(t,r,x)\in\cC\cap\cK^T$ we define
\[
\tau_{\tilde \cK}(t,r,x):=\inf\{s\ge 0: (r^r_s, X^{r,x}_s)\notin \tilde \cK\}\wedge (T-t).
\]
By the monotonicity of $r \mapsto r^r_s$ and the explicit expression \eqref{eq:X2} for $X^{r,x}$ we have, for all $(r,x) \in \cK$,
\[
r^{r_a}_s \le r^r_s \le r^{r_b}_s, \quad\text{and}\quad X^{r_a, x}_s \le X^{r,x}_s \le X^{r_b, x}_s,  \quad \text{$\P$-a.s.}
\]
from which it is not hard to verify that $\tau_{\tilde \cK}(t,r,x) \ge \widehat{\tau}_{\cK} > 0$, $\P$-a.s., for all $(t,r,x)\in\cC\cap\cK^T$, where
\[
\widehat{\tau}_{\cK}:=\tau_{\tilde \cK}(t_b,r_a,x_a)\wedge \tau_{\tilde \cK}(t_b,r_a,x_b) \wedge \tau_{\tilde \cK}(t_b,r_b,x_a) \wedge \tau_{\tilde \cK}(t_b,r_b,x_b).
\]

Take $(t,r,x)\in \cC\cap \text{int}(\cK^T)$. There is $\overline \eps > 0$ such that $(t,r+\eps,x)\in \cC\cap \cK^T$ for all $\eps \in (0, \overline\eps]$. Denote by $\tau_*$ the optimal stopping time for $(t,r,x)$. For any $\eps \in (0, \overline\eps]$, we apply the (super)martingale properties of the value function \eqref{eq:supmg}-\eqref{eq:mg} with the stopping time $\tau_*\wedge \widehat{\tau}_{\cK}$:
\begin{equation}
\label{eq:v-diff-r-1}
\begin{aligned}
0&\ge v(t,r+\eps,x)-v(t,r,x)\\
&\ge  \E \Bigl[e^{-\int_0^{\tau_*\wedge \widehat{\tau}_{\cK}}r^{r+\eps}_s ds}v\big(t+(\tau_*\wedge \widehat{\tau}_{\cK}), r^{r+\eps}_{\tau_*\wedge \widehat{\tau}_{\cK}}, X^{r+\eps, x}_{\tau_*\wedge \widehat{\tau}_{\cK}}\big)\\
&\hspace{23pt}-e^{-\int_0^{\tau_*\wedge \widehat{\tau}_{\cK}}r^{r}_s ds} v\big(t+(\tau_*\wedge \widehat{\tau}_{\cK}), r^{r}_{\tau_*\wedge \widehat{\tau}_{\cK}}, X^{r, x}_{\tau_*\wedge \widehat{\tau}_{\cK}}\big)\Bigr]\\
&\ge  \E \left[\mathds{1}_{\{\widehat{\tau}_{\cK}\le \tau_*\}}\left(e^{-\int_0^{\widehat{\tau}_{\cK}}r^{r+\eps}_s ds}v(t+\widehat{\tau}_{\cK}, r^{r+\eps}_{\widehat{\tau}_{\cK}}, X^{r+\eps, x}_{\widehat{\tau}_{\cK}})-e^{-\int_0^{\widehat{\tau}_{\cK}}r^{r}_s ds}v(t+\widehat{\tau}_{\cK}, r^{r}_{\widehat{\tau}_{\cK}}, X^{r, x}_{\widehat{\tau}_{\cK}})\right)\right] \\
&\quad +\E \left[\mathds{1}_{\{\widehat{\tau}_{\cK} > \tau_*\}}\left(e^{-\int_0^{\tau_*}r^{r+\eps}_s ds}(K-X^{r+\eps, x}_{\tau_*})^+-e^{-\int_0^{\tau_*}r^{r}_s ds}(K-X^{r, x}_{\tau_*})^+\right)\right]\\
&=:E_1+E_2,
\end{aligned}
\end{equation}
where for the final inequality we used that $v(t+\tau_*, r^{r}_{\tau_*}, X^{r, x}_{\tau_*})=(K-X^{r,x}_{\tau_*})^+$, $\P$-a.s.
Recalling that $r^{r+\eps}_s \ge r^{r}_s$ and $v$ is non-negative we have
\begin{equation}
\label{eq:v-diff-r-2}
\begin{aligned}
E_1 &=\E \left[\mathds{1}_{\{\widehat{\tau}_{\cK}\le \tau_*\}} e^{-\int_0^{\widehat{\tau}_{\cK}}r^{r+\eps}_s ds}\left(v(t+\widehat{\tau}_{\cK}, r^{r+\eps}_{\widehat{\tau}_{\cK}}, X^{r+\eps, x}_{\widehat{\tau}_{\cK}})-v(t+\widehat{\tau}_{\cK}, r^{r}_{\widehat{\tau}_{\cK}}, X^{r, x}_{\widehat{\tau}_{\cK}})\right)\right]\\
&\hspace{12pt} - \E \left[\mathds{1}_{\{\widehat{\tau}_{\cK}\le \tau_*\}}\Big(e^{-\int_0^{\widehat{\tau}_{\cK}}r^{r}_s ds} - e^{-\int_0^{\widehat{\tau}_{\cK}}r^{r+\eps}_s ds} \Big) v(t+\widehat{\tau}_{\cK}, r^{r}_{\widehat{\tau}_{\cK}}, X^{r, x}_{\widehat{\tau}_{\cK}}) \right]\\
&\ge -L \, \E \left[\mathds{1}_{\{\widehat{\tau}_{\cK}\le \tau_*\}}e^{-\int_0^{\widehat{\tau}_{\cK}}r^{r+\eps}_s ds}\left(|r^{r+\eps}_{\widehat{\tau}_{\cK}}-r^{r}_{\widehat{\tau}_{\cK}}| + |X^{r+\eps,x}_{\widehat{\tau}_{\cK}}-X^{r,x}_{\widehat{\tau}_{\cK}}|\right)\right]\\
&\hspace{12pt} -  C \E \left[\mathds{1}_{\{\widehat{\tau}_{\cK}\le \tau_*\}}\Big(e^{-\int_0^{\widehat{\tau}_{\cK}}r^{r}_s ds} - e^{-\int_0^{\widehat{\tau}_{\cK}}r^{r+\eps}_s ds} \Big) \right],
\end{aligned}
\end{equation}
where the second inequality comes from the local Lipschitz property of the value function ($L>0$ is the constant from Proposition \ref{prop:vlip}), and the function $v$ is bounded from above by a constant $C$ depending on $\cK$, due to Assumption \ref{ass:coef}-(ii).

We shall now use the differentiability of the diffusion flow $(r^r_s)$ with respect to the parameter $r$ in the sense of \cite[Ch.~2, Sec.~8, Thm.~6]{krylov}. Apart from other assumptions, this requires that the coefficients be globally Lipschitz. As we only consider $(r, X)$ in a compact set $\cK$, we construct a two dimensional diffusion $(\widetilde{r}, \widetilde{X})$ whose coefficients coincide with the coefficients of $(r, X)$ on $\cK$, are globally Lipschitz, continuously differentiable and with a polynomial growth. The process $(\widetilde{r}_s, \widetilde{X}_s)$ is indistinguishable from $(r_s,X_s)$ on $\{ s \le \widehat\tau_\cK\}$, i.e., on the set where it is of interest for the estimation of $E_1$ and $E_2$, so for the sake of readability we will write $(r,X)$ in the estimates below (we use an analogous construction in Appendix \ref{app:reg}, where full details are available).

By \cite[Ch.~2, Sec.~8, Thm.~6]{krylov}, there is a measurable in $(s,\omega)$ process $(y^r_s(\omega))_{s\ge 0}$, depending on $r$, such that for any $q \ge 1$
\begin{equation}\label{eqn:diff}
\lim_{\eps \downarrow 0} \bigg\| \sup_{s \in [0, T]} \Big|\frac{r^{r+\eps}_s - r^r_s}{\eps} - y^r_s\Big| \bigg\|_q = 0 \quad\text{and}\quad
\lim_{\eps \downarrow 0} \Big\|\frac{r^{r+\eps}_\cdot - r^r_\cdot}{\eps} - y^r_\cdot\Big\|^*_q = 0,
\end{equation}
where $\|Z\|_{q} = (\E[|Z|^q])^{1/q}$ and $\|Y_\cdot\|^*_q = \big(\E \big[ \int_0^T |Y_s|^q ds \big] \big)^{1/q}$.

Fix $\frac{1}{p} + \frac{1}{q} + \frac{1}{w} = 1$ for some $p \in (1,2]$. Recalling that $r^{r+\eps}_s \ge r^r_s$ and using
H\"{o}lder inequality yields
\begin{equation}
\label{eq:v-diff-r-3}
\begin{aligned}
&\frac{1}{\eps} \E \left[\mathds{1}_{\{\widehat{\tau}_{\cK}\le \tau_*\}}e^{-\int_0^{\widehat{\tau}_{\cK}}r^{r+\eps}_s ds}|r^{r+\eps}_{\widehat{\tau}_{\cK}}-r^{r}_{\widehat{\tau}_{\cK}}| \right]\\
&\le
\E \left[\mathds{1}_{\{\widehat{\tau}_{\cK}\le \tau_*\}}e^{-\int_0^{\widehat{\tau}_{\cK}}r^{r+\eps}_s ds}\left(\Big|\frac{1}{\eps}\big(r^{r+\eps}_{\widehat{\tau}_{\cK}}-r^{r}_{\widehat{\tau}_{\cK}}) - y^r_{\widehat \tau_\cK}\Big| + |y^r_{\widehat \tau_\cK}| \right)\right]
\\
&\le
C_1^{1/p}
\P(\widehat{\tau}_{\cK}\le \tau_*)^\frac{1}{w}
\bigg( \Big\|\frac{1}{\eps}\big(r^{r+\eps}_{\widehat \tau_\cK} - r^{r}_{\widehat \tau_\cK}\big) - y^r_{\widehat \tau_\cK}\Big\|_q + \|y^r_{\widehat \tau_\cK}\|_q\bigg)\xrightarrow[\eps \downarrow 0]{}
C_1^{1/p} \P(\widehat{\tau}_{\cK}\le \tau_*)^\frac{1}{w}\ 
\|y^r_{\widehat \tau_\cK}\|_q,
\end{aligned}
\end{equation}
where we used the estimate \eqref{eq:integr} in the last inequality and \eqref{eqn:diff} to obtain the convergence.

To bound the last term on the right hand side of \eqref{eq:v-diff-r-2}, we observe that
\begin{align*}
\E \left[\mathds{1}_{\{\widehat{\tau}_{\cK}\le \tau_*\}}\Big(e^{-\int_0^{\widehat{\tau}_{\cK}}r^{r}_s ds} - e^{-\int_0^{\widehat{\tau}_{\cK}}r^{r+\eps}_s ds} \Big) \right]
&\le
\E \Big[\mathds{1}_{\{\widehat{\tau}_{\cK}\le \tau_*\}}e^{-\int_0^{\widehat{\tau}_{\cK}}r^{r}_s ds} \int_0^{\widehat{\tau}_{\cK}}(r^{r+\eps}_s - r^{r}_s) ds \Big].
\end{align*}
We then apply H\"{o}lder inequality and the second limit in \eqref{eqn:diff}:
\begin{equation}
\label{eq:v-diff-r-2a}
\begin{aligned}
&\frac{1}{\eps}\E \Big[\mathds{1}_{\{\widehat{\tau}_{\cK}\le \tau_*\}}e^{-\int_0^{\widehat{\tau}_{\cK}}r^{r}_s ds} \int_0^{\widehat{\tau}_{\cK}}(r^{r+\eps}_s - r^{r}_s) ds \Big]\\
&\le
\E \left[\mathds{1}_{\{\widehat{\tau}_{\cK}\le \tau_*\}} e^{-\int_0^{\widehat{\tau}_{\cK}}r^{r}_s ds} \int_0^{\widehat{\tau}_{\cK}} \Big|\frac{1}{\eps}(r^{r+\eps}_s - r^r_s) - y^r_s \Big| + |y^r_s| ds \right]\\
&\le
C_1^{1/p}\ \P(\widehat{\tau}_{\cK}\le \tau_*)^\frac{1}{w}
\bigg( \Big\|\frac{1}{\eps}(r^{r+\eps}_\cdot - r^r_\cdot) - y^r_\cdot \Big\|^*_q + \| y^r_\cdot \|^*_q  \bigg)\xrightarrow[\eps \downarrow 0]{}
C_1^{1/p}\ \P(\widehat{\tau}_{\cK}\le \tau_*)^\frac{1}{w}\ 
\| y^r_\cdot \|^*_q .
\end{aligned}
\end{equation}

By the explicit formula \eqref{eq:X2}, we have $X^{r,x}_t = e^{\int_0^t r^{r}_s ds} \hat X^x_t$, where $\hat X^x_t := x e^{\sigma B_t - \frac12 \sigma^2 t}$, and
\[
0 \le X^{r+\eps, x}_t-X^{r,x}_t \le e^{\int_0^t r^{r+\eps}_s ds} \hat X^x_t \int_0^t (r^{r+\eps}_s - r^r_s) ds.  
\]
We proceed similarly as in \eqref{eq:v-diff-r-2a} to obtain
\begin{equation}
\label{eq:v-diff-r-3a}
\begin{aligned}
&\frac{1}{\eps} \E \left[\mathds{1}_{\{\widehat{\tau}_{\cK}\le \tau_*\}}e^{-\int_0^{\widehat{\tau}_{\cK}}r^{r+\eps}_s ds} |X^{r+\eps,x}_{\widehat{\tau}_{\cK}}-X^{r,x}_{\widehat{\tau}_{\cK}}| \right]\\
&\le
\|\hat X^x_{\widehat{\tau}_{\cK}}\|_p\ \P(\widehat{\tau}_{\cK}\le \tau_*)^\frac{1}{w}
\bigg( \Big\|\frac{1}{\eps}(r^{r+\eps}_\cdot - r^r_\cdot) - y^r_\cdot \Big\|^*_q + \| y^r_\cdot \|^*_q  \bigg)\\
&\xrightarrow[\eps \downarrow 0]{}
\|\hat X^x_{\widehat{\tau}_{\cK}}\|_p\ \P(\widehat{\tau}_{\cK}\le \tau_*)^\frac{1}{w}\ 
\| y^r_\cdot \|^*_q .
\end{aligned}
\end{equation}

Similar arguments as above enable us to derive a lower bound for $E_2$:
\begin{equation}
\label{eq:v-diff-r-4}
\begin{aligned}
\frac{1}{\eps} E_2
&= 
\frac{1}{\eps} \E \left[\mathds{1}_{\{\widehat{\tau}_{\cK}> \tau_*\}}\left(\Big(Ke^{-\int_0^{\tau_*}r^{r+\eps}_s ds} -\hat X^{x}_{\tau_*}\Big)^+-\Big(Ke^{-\int_0^{\tau_*}r^{r}_s ds} -\hat X^{x}_{\tau_*}\Big)^+\right)\right]\\
&\ge -\frac{1}{\eps}\,K\, \E \left[\mathds{1}_{\{\widehat{\tau}_{\cK}> \tau_*\}}
\Big(e^{-\int_0^{\tau_{*}}r^{r}_s ds} - e^{-\int_0^{\tau_{*}}r^{r+\eps}_s ds} \Big)
\right]\\
&\ge
-\frac{1}{\eps}\,K\, \E \left[\mathds{1}_{\{\widehat{\tau}_{\cK}> \tau_*\}} e^{-\int_0^{\tau_{*}}r^{r}_s ds} 
\int_0^{\tau_*}(r^{r+\eps}_s - r^{r}_s) ds \right]\\
&\ge - K C_1^{1/p}\ \P(\widehat{\tau}_{\cK}> \tau_*)^\frac{1}{w}
\bigg( \Big\|\frac{1}{\eps}(r^{r+\eps}_\cdot - r^r_\cdot) - y^r_\cdot \Big\|^*_q + \Big(\E \Big[\int_0^{\tau_*} | y^r_s |^q ds \Big]\Big)^{1/q} \bigg)\\
&\xrightarrow[\eps \downarrow 0]{}
- K C_1^{1/p} \P(\widehat{\tau}_{\cK}> \tau_*)^\frac{1}{w}\ 
\Big(\E \Big[\int_0^{\tau_*} | y^r_s |^q ds \Big]\Big)^{1/q},
\end{aligned}
\end{equation}
where in the first inequality we used the Lipschitz property of $z \mapsto (z - \hat X^{x}_t(\omega))^+$ for any $\omega \in \Omega$.

Combining \eqref{eq:v-diff-r-3}--\eqref{eq:v-diff-r-4} gives a lower bound for $v_r$ on $\cC\cap\cK^T$:
\begin{equation}
\label{eq:v-diff-r-4.5}
\begin{aligned}
0 &\ge v_r (t,r,x)\\
 &\ge -
L\, \P(\widehat{\tau}_{\cK}\le \tau_*)^\frac{1}{w}\ \left(
C_1^{1/p} \|y^r_{\widehat \tau_\cK}\|_q + \|\hat X^x_{\widehat{\tau}_{\cK}}\|_p\ \|y^r_{\cdot}\|^*_q \right)
-K \, \P(\widehat{\tau}_{\cK}\le \tau_*)^\frac{1}{w}\
C_1^{1/p} \|y^r_{\cdot}\|^*_q 
\\
&\hspace{12pt}-
K C_1^{1/p} \P(\widehat{\tau}_{\cK}> \tau_*)^\frac{1}{w}\  \ 
\Big(\E \Big[\int_0^{\tau_*} | y^r_s |^q ds \Big]\Big)^{1/q}.
\end{aligned}
\end{equation}
By \cite[Ch.~2, Sec.~8, Thm.~8]{krylov} and standard diffusion estimates \citep[Ch.~2, Sec.~5, Cor.~10]{krylov} the norms of $y^r$ and $\hat X^x$ above are bounded uniformly for $(t, r,x)\in \cK^T \cap \cC$ (recall that $\tau_* = \tau_*(t,r,x)$). Now take $(t,r,x)=(t_n,r_n,x_n)$ in \eqref{eq:v-diff-r-4.5}. Since $\hat \tau_\cK > 0$ $\P$-a.s.\ and $\lim_{n \to \infty} \tau_*(t_n, r_n, x_n) = 0$ $\P$-a.s.~by Lemma \ref{cor:conv_tau}, the dominated convergence theorem gives that the first two terms of \eqref{eq:v-diff-r-4.5} tend to zero as $n \to \infty$ due to $\P(\widehat{\tau}_{\cK}\le \tau_*(t_n, r_n, x_n)) \to 0$ and the last term converges to zero because 
\[
\lim_{n \to 0} \E \Big[\int_0^{\tau_*(t_n, r_n, x_n)} | y^{r_n}_s |^q ds \Big] = 0
\]
and the mapping $r \mapsto y^r$ is continuous in the norm $\|\cdot\|^*_q$, see \cite[Ch.~2, Sec.~8, Thm.~6]{krylov}. This concludes the proof.
\end{proof}

\section{Continuity of the stopping boundary and the integral equation}\label{sec:integral}

\begin{proof}[Proof of Proposition \ref{prop:conti-c}]
Since $c(t,x)=\overline{r}$ on $[0,T)\times [K,\infty)$, it remains to prove the continuity at $(t_0,x_0) \in (0,T)\times (0,K]$. It is known from Proposition \ref{prop:boundary-c} that $t\mapsto c(t,x_0)$ is non-increasing and right-continuous at $t_0$, and $x\mapsto c(t_0,x)$ is non-decreasing and left-continuous at $x_0$.

We first show that $x\mapsto c(t_0,x)$ is right continuous at $x_0$. It is obvious for $x_0 = K$ since $c(t_0, x) = \overline{r}$ for $x \ge K$. We proceed with an argument for $x_0 < K$. Assume, by contradiction, that $c(t_0,x_0+)>c(t_0,x_0)$, so there exist $r_1,r_2$ such that $c(t_0,x_0+)>r_2>r_1>c(t_0,x_0)$. Let $R:=(r_1,r_2) \times (x_0,x_1)$ for some $x_1 \in (x_0, K)$ and $R_0:=(r_1,r_2) \times \{x_0\}$. From the monotonicity of $c(t,x)$, we have $\{t_0\} \times R \subset \cC$ and $\{t_0\} \times R_0 \subset \cD$. Let $u$ be a function defined on $\overline{R}$ and satisfying 
\begin{equation}
\label{eq:pde1}
\begin{aligned}
(\cL-r) u(r,x)&=-v_t(t_0,r,x),\qquad (r,x) \in R,\\
u(r,x)&=v(t_0,r,x),\qquad (r,x) \in \partial R.
\end{aligned}
\end{equation}
Thanks to \cite[Theorem 10, p.\ 72]{Friedman} we know that $(r,x) \mapsto v_t(t_0,r,x)$ is $C^1$ on $R$ with H\"older continuous derivatives. Since the coefficients of \eqref{eq:LrX} have H\"older continuous first derivatives, there is a unique classical solution $u(r,x)$ of the above PDE (which is of elliptic type) and $u\in C^3(R)\cap C(\overline{R})$ \citep[Theorems 19 and 20, p.\ 87]{Friedman}. From \eqref{eqn:pde}, the function $(r,x) \mapsto v(t_0,r,x)$ satisfies \eqref{eq:pde1}, so, by uniqueness, $u=v$ on $\overline{R}$ and $u\in C^1(\overline{R})$ by Theorem \ref{prop:v-diff}.

We differentiate the PDE in \eqref{eq:pde1} with respect to $r$ and obtain
\begin{align}
\label{eq:c-cts-1}
\frac{1}{2}\sigma^2x^2 u_{rxx}(r,x) &=-\cL_1 u_r(r,x)\!-\!\cL_2 u_x(r,x)\!-\!xu_x(r,x)\!-\!v_{tr}(t_0,r,x)\!+\!u(r,x),\quad (r,x) \in R,
\end{align}
where
\begin{align*}
\cL_1 f:=&\frac{1}{2}\beta^2(r)f_{rr}+\left(\beta(r)\beta'(r)+\alpha(r)\right)f_{r}+\left(\alpha'(r)-r\right)f\\
\cL_2 f:=&\rho\sigma\beta(r)x f_{rr}+\left(\rho\sigma\beta'(r)+rx\right)f_{r}.
\end{align*}
Let $\phi$ be a $C^{\infty}$ function with compact support on $(r_1,r_2)$ such that $\int^{r_2}_{r_1}\phi(r)dr=1$ and for $x\in(x_0,x_1)$ define
\[
F_{\phi}(x)=-\int^{r_2}_{r_1}u_{xx}(r,x)\phi'(r)dr.
\]
Multiply \eqref{eq:c-cts-1} by $\frac{2}{\sigma^2x^2}\phi(r)$ and integrate over $(r_1,r_2)$:
\begin{align*}
\int^{r_2}_{r_1}u_{rxx}(x,r)\phi(r)dr
&=
-\int^{r_2}_{r_1}\frac{2}{\sigma^2x^2}\phi(r)\big[ \cL_1 u_r(r,x) + \cL_2 u_x(r,x)\big] dr
-\int^{r_2}_{r_1}\frac{2}{\sigma^2x}\phi(r)u_x(r,x)dr\\
&\quad-\int^{r_2}_{r_1}\frac{2}{\sigma^2x^2}\phi(r)v_{tr}(t_0,r, x)dr+ \int^{r_2}_{r_1}\frac{2}{\sigma^2x^2}\phi(r)u(r,x)dr.
\end{align*}
Intergration by parts gives
\begin{equation}
\label{eq:c-cts-2}
\begin{aligned}
F_{\phi}(x)
&=
-\int^{r_2}_{r_1}\frac{2}{\sigma^2x^2} \big[ u_r(r,x)\cL^*_1 \phi(r)dr + u_x(r,x)\cL^*_2 \phi(r)\big] dr
-\int^{r_2}_{r_1}\frac{2}{\sigma^2x}\phi(r)u_x(r,x)dr\\
&\quad +\int^{r_2}_{r_1}\frac{2}{\sigma^2x^2}\phi'(r)v_{t}(t_0,r, x)dr
+ \int^{r_2}_{r_1}\frac{2}{\sigma^2x^2}\phi(r)u(r,x)dr,
\end{aligned}
\end{equation}
where $\cL^*_1$ and $\cL^*_2$ are adjoint operators to $\cL_1$ and $\cL_2$, respectively. The expression above involves only $u$ and its first derivatives, which are continuous by Theorem \ref{prop:v-diff}. We take the limit $x \to x_0$ in \eqref{eq:c-cts-2} and notice that
$u_r(r,x_0)=v_r(t_0, r,x_0)=v_t(t_0,r,x_0)=0$, $u_x(r,x_0)=v_x(t_0, r,x_0)=-1$ and $u(r,x_0)=K-x_0$. Thus,
\[
\lim_{x\downarrow x_0} F_{\phi}(x)
=
\int^{r_2}_{r_1}\frac{2}{\sigma^2x_0}\phi(r)dr
+ \int^{r_2}_{r_1}\frac{2}{\sigma^2x_0^2}\phi(r)(K-x_0)dr
=\frac{2K}{\sigma^2x_0^2} 
>0,
\]
where we also use that $\int_{r_1}^{r_2}\cL^*_2\phi(r)dr=0$.
Since $x \mapsto F_{\phi}(x)$ is continuous on $(x_0, x_1)$ and $\lim_{x\downarrow x_0} F_{\phi}(x)>0$, we have $F_{\phi}(x)>0$ on $(x_0,x_0 + \eps)$ for any sufficiently small $\eps > 0$. Using additionally that $u$ is $C^1(\overline{R})$, we perform the following integration
\begin{align*}
0<\int_{x_0}^{x_0+\eps}\int_{x_0}^{y}F_{\phi}(x)dxdy=&-\int^{r_2}_{r_1}\int_{x_0}^{x_0+\eps}\int_{x_0}^{y}u_{xx}(r,x)dxdy\,\phi'(r) dr\\
                                  = &-\int^{r_2}_{r_1}\int_{x_0}^{x_0+\eps}(u_{x}(r,y)+1)dy\,\phi'(r)dr\\
                                  =&-\int^{r_2}_{r_1}(u(r,x_0+\eps)-(K-x_0)+\eps)\phi'(r)dr\\
                                  =&\int^{r_2}_{r_1}u_{r}(r,x_0+\eps)\phi(r)dr,
\end{align*}
where we have used Fubini's theorem in the first equality, $u_x(r, x_0) = -1$ in the second equality, $u(r, x_0) = K - x_0$ in the third equality, and the integration by parts in the last equality. As the above inequality holds for an arbitrary smooth function $\phi$ with a compact support in $(r_1, r_2)$, we must have $u_{r}(r,x_0+\eps)=v_{r}(t_0, r,x_0+\eps)>0$ almost everywhere on $(r_1, r_2)$. This contradicts that $r\mapsto v(t_0,r,x_0+\eps)$ is a non-increasing function (see Proposition \ref{prop:vmonot}), hence $x\mapsto c(t,x)$ is continuous.

We turn our attention to the left-continuity of $t\mapsto c(t,x_0)$ at $t_0$ (the right-continuity has already been established in Proposition \ref{prop:boundary-c}). Assume, by contradiction, that the left-continuity fails at $t_0$. Since $t\mapsto c(t, x_0)$ is non-increasing, there exist $r_1,r_2$ such that 
$c(t_0-,x_0)>r_2>r_1>c(t_0,x_0)$. By the continuity of $x\mapsto c(t_0,x)$ at $x_0$ and the monotonicity of $c(t,x)$, there is $x_1 \in (x_0, K)$ such that $r_1>c(t_0,x_1)\ge c(t_0,x_0)$. Hence, for any sequence $t_n \uparrow t_0$, we have
\[
c(t_n,x_1)\ge c(t_n, x_0) \ge c(t_0-,x_0)>r_2>r_1>c(t_0,x_1)\ge c(t_0,x_0),
\]
so that
\begin{align*}
R:&=(t_1,t_0)\times (r_1,r_2)\times(x_0,x_1)\subset \cC, \\
R_{t_0}:&=\{t_0\}\times (r_1,r_2)\times(x_0,x_1)\subset \cD.
\end{align*}
Consider a PDE
\begin{equation}\label{eq:pde2}
\begin{aligned}
&w_t(t,r,x)+(\cL-r) w(t,r,x)=0,\qquad  (t,r,x) \in R,\\
& w(t,r,x)=v(t,r,x),\qquad (t,r,x) \in \partial_{p} R,
\end{aligned} 
\end{equation}
where $\partial_{p} R$ denotes the parabolic boundary of $R$. By \cite[Theorem 6, p. 65]{Friedman}, Equation \eqref{eq:pde2} admits a unique classical solution $w$, which coincides with $v$ on $\overline{R}$. This also implies that $w \in C^1(\overline{R})$ by Theorem \ref{prop:v-diff}.

Let $\phi_1$ be a $C^\infty$ function with compact support in $(x_0,x_1)$ and $\phi_2$ be a $C^\infty$ function with compact support in $(r_1,r_2)$ such that $\int_{x_0}^{x_1} \phi_1(x) dx = \int_{r_1}^{r_2}\phi_2(r) dr = 1$. Fixing $t = t_n\in(t_1,t_0)$ from the sequence $t_n\uparrow t_0$, we multiply \eqref{eq:pde2} by $\phi_1(x)\phi_2(r)$ and integrate over $(r_1,r_2)\times (x_0,x_1)$:
\begin{equation*}
\int_{r_1}^{r_2}\int_{x_0}^{x_1} \phi_1(x)\phi_2(r)\big\{w_t(t_n,r,x)+(\cL-r) w(t_n,r,x)\big\}dxdr=0.
\end{equation*} 
Integration by parts gives
\begin{equation}\label{eqn:s1}
\int_{r_1}^{r_2}\int_{x_0}^{x_1} \phi_1(x)\phi_2(r)w_t(t_n,r,x)dxdr +\int_{r_1}^{r_2}\int_{x_0}^{x_1} w(t_n,r,x)(\cL^*-r) \phi_1(x)\phi_2(r)dxdr=0,
\end{equation} 
where $\cL^*$ is the adjoint operator for $\cL$. When $n \to \infty$, the first integral vanishes since $w\in C^1(\overline{R_t})$ and $w_t=v_t=0$ on $R_{t_0}$. By the dominated convergence theorem, \eqref{eqn:s1} reads
\begin{align*}
0&=\int_{r_1}^{r_2}\int_{x_0}^{x_1} w(t_0,r,x)(\cL^*-r) \phi_1(x)\phi_2(r)dxdr
=\int_{r_1}^{r_2}\int_{x_0}^{x_1} \phi_1(x)\phi_2(r)(\cL-r) (K-x)dxdr\\
&=\int_{r_1}^{r_2}\int_{x_0}^{x_1} \phi_1(x)\phi_2(r)(-rK)dxdr=\int_{r_1}^{r_2}\phi_2(r)(-rK)dr
\end{align*} 
where we integrate by parts and use that $v(t,r,x) = (K-x)$ on $R_{t_0}$ for the second equality. We obtain a contradiction because the last integral is strictly negative.

Having established the continuity in $t$ and $x$ separately, the monotonicity of $c$ allows us to conclude the continuity of $(t,x)\mapsto c(t,x)$ at $(t_0,x_0)$ (see, e.g., \cite{kruse1969joint}).
\end{proof}

\begin{proof}[Proof of Proposition \ref{prop:int-eq}]
Let $\cK_n$ be an increasing sequence of compact subsets of $\cO$ such that $\cup_{n \in \N} \cK_n = \cO$ and define $\tau_n = \inf\{ t \in [0, T-t]:\ (t+s, r_s, X_s) \notin \cK_n \}\wedge(T-t-\frac{1}{n})$ for $n$ large enough so that $\frac1n \le T-t$. We apply a version of It\^{o} formula from \cite[Theorem 2.1]{cai2021b}, which we state in Appendix \ref{app:Ito} for the reader's convenience. We delay the verification of the assumptions required until the end of the proof. Using that
\begin{equation}\label{eqn:PDEv}
\begin{aligned}
&(\partial_t + \cL - r)v(t,r,x) = 0, & r < c(t, x),\\
&(\partial_t + \cL - r)v(t,r,x) = (\partial_t + \cL - r)(K-x) = -rK, & r > c(t, x),
\end{aligned}
\end{equation}
we obtain that the dynamics of the discounted value function on $[0, \tau_n]$ is given by
\begin{equation}
\begin{aligned}
&e^{-\int_{0}^{s\wedge{\tau_n}}r_v dv}v(t+s\wedge{\tau_n},r_{s\wedge{\tau_n}},X_{s\wedge{\tau_n}})\\
&=v(t,r,x) - \int_0^{s\wedge{\tau_n}}\!\!\!e^{-\int_{0}^{u}r_v dv}Kr_u \mathds{1}_{\{r_u>c(t+u,X_u)\}}du
+ \int_0^{s\wedge{\tau_n}}\!\!\!e^{-\int_{0}^{u}r_v dv}\sigma X_u v_x(t + u,r_u,X_u)dB_u\\
&\quad+\! \int_0^{s\wedge{\tau_n}}\!\!\!e^{-\int_{0}^{u}r_v dv}\beta(r_u)v_r(t + u,r_u,X_u)dW_u.
\end{aligned}
\end{equation}
Taking expectations and applying the optional sampling theorem we arrive at 
\begin{equation}\label{eqn:iq1}
v(t,r,x)
=\E_{r,x}\!\left[\int_{0}^{\tau\wedge\tau_n}\!\!\!e^{-\int_{0}^{u}r_v dv}Kr_u \mathds{1}_{\{r_u>c(t+u,X_u)\}}du\!+\!e^{-\int_{0}^{\tau\wedge \tau_n}r_v dv}v(t\!+\!(\tau\wedge\tau_n),r_{\tau\wedge\tau_n},X_{\tau\wedge\tau_n})\right].
\end{equation}
Using \eqref{eq:integr} and \eqref{eq:subg}, H\"{o}lder inequality implies
\[
\E_{r,x}\left[\int_{0}^{T-t}e^{-\int_{0}^{u}r_v dv} |Kr_u| du \right] < \infty.
\]
The majorant for the second term of \eqref{eqn:iq1} follows from Assumption \ref{ass:coef}
(details can be found in the proof of \eqref{eqn:i1} in Lemma \ref{lem:integr}). The dominated convergence theorem proves \eqref{eq:int-1}, since $\tau_n\uparrow T-t$ upon recalling that the boundary of $\cI\times\R_+$ is assumed non-attainable by the process $(r_t,X_t)$.

It remains to verify assumptions of \cite[Theorem 2.1]{cai2021b} as stated in Appendix \ref{app:Ito}. Identifying $X^1_t = r_t$ and $X^2_t = X_t$, we have,
\[
\beta^{1,1} (t,r,x) = \beta^2(r), \quad \beta^{1,2}(t,r,x) = \beta^{2,1}(t,r,x) = \sigma \rho \beta(r) x, \quad \beta^{2,2}(t,r,x) = \sigma^2 x^2.
\]
By Assumption \ref{ass:coef}, $\beta^{i,j}$ is Lipschitz for $i,j=1,2$ on every compact set in $\cO$. Indeed, it can be directly verified for the CIR process. In case (ii) of Assumption \ref{ass:coef} we use Lipschitz continuity of $\beta$. The marginal distribution of the process $(r_t, X_t)$ has density with respect to the Lebesgue measure (see Remark \ref{rem:lambda}, which makes use of Assumption \ref{ass:rbar}), so $(t, r_t, X_t) \notin \partial\cC$, $\P_{r,x}$-a.s. for any $t > 0$. Setting $\cC=\cE$, this verifies the first assumption in the theorem in Appendix \ref{app:Ito}. For the second assumption, using \eqref{eqn:PDEv}, we have
\[
\frac12 L(t, r,x) = - r x v_x(t,r,x) - \alpha(r) v_r(t,r,x) - v_t (t,r,x) + r v(t,r,x) - 1_{\{(t,r,x) \in \cD\}}\ rK.
\]
Since $v \in C^1(\cO)$ and the function $\alpha(r)$ is continuous (see Assumption \ref{ass:coef}), $L$ is continuous and bounded on $\cK_n \setminus \partial \cC$. We finally have that the third assumption in the theorem holds by Proposition \ref{prop:boundary-c}. 
\end{proof}

\begin{proof}[Proof of Proposition \ref{prop:int-eq-1}]
The proof follows ideas originally developed in \cite{peskir2005american}. Assume there exists another continuous function $\tc$ that satisfies conditions (1) and (2) in the statement of this proposition. Define a function
\begin{equation*}
\begin{aligned}
\tilde v(t,r,x) &=\E_{r,x}\!\left[\int_{0}^{T-t}\!\!e^{-\int_{0}^{u}r_v dv}Kr_u \mathds{1}_{\{r_u>\tc(t+u,X_u)\}}du\!+\!e^{-\int_{0}^{T-t}r_v dv}(K\! -\! X_{T-t})^+\right], \quad (t,r,x) \in \cO,\\
\tilde v(T,r,x) &= (K - x)^+, \quad (r,x) \in \cI \times \R_+.
\end{aligned}
\end{equation*}
It is not difficult to prove that $\tilde v$ is continuous by the continuity of $\tilde c$ and of the flow $(s,r,x)\mapsto (r^r_s,X^{r,x}_s)$.
By the Markov property of $(r,X)$, one can also check that
\begin{align*}
\tilde{V}_{s}:=\int_{0}^{s}e^{-\int_{0}^{u}r_v dv}Kr_u \mathds{1}_{\{r_u>\tc(t+u,X_u)\}}du+e^{-\int_{0}^{s}r_v dv} \tilde v(t+s,r_s,X_s),
\qquad s \in [0, T-t],
\end{align*}
is a continuous $\P_{r, x}$-martingale. Hence, for any $(t,r,x) \in \cO$ and any stopping time $\tau \le T-t$, the optional sampling theorem yields
\begin{equation}\label{eqn:dynkin_u}
\tilde v(t,r,x)=\E_{r,x}\big[\tilde{V}_{\tau}\big]
=\E_{r,x}\left[\int_{0}^{\tau}e^{-\int_{0}^{u}r_v dv}K r_u \mathds{1}_{\{r_u>\tc(t+u,X_u)\}}du+e^{-\int_{0}^{\tau}r_v dv}\tilde v(t+\tau,r_{\tau},X_{\tau})\right],
\end{equation}
which is analogous to the formula for $v$ in \eqref{eq:int-1}.

For an easier exposition of the arguments of proof we proceed in steps. In the first four steps we show the equality $\tilde c(t,x)=c(t,x)$ for all $(t,x)\in[0,T)\times\R_+$ such that $\tilde c(t,x)\in\cI$. Then, in the final step we use monotonicity and continuity of $\tilde c$ and $c$ to extend the equality to all $(t,x)\in[0,T)\times\R_+$.

{\em Step 1}. We first show that $\tilde v(t,r,x)=(K-x)^+$ for any $(t,r,x) \in \cO$ such that $r\ge \tc(t,x)$. Fix $(\hat t,\hat r,\hat x) \in \cO$ such that $\hat r>\tc(\hat t,\hat x)$ (the claim for $\hat r = \tc(\hat t,\hat x)$ follows by the continuity of $\tilde v$). Define a stopping time
\[
\tau_1:=\inf\{s\ge 0: r^{\hat r}_s\le \tc(\hat t+s,X^{\hat r,\hat x}_s)\}\wedge (T-\hat t).
\]
By the continuity of $s \mapsto \tc(\hat t+s, X_s)$ and  $s\mapsto r_s$, and the fact that $\underline r$ and $\overline r$ are unattainable by $(r_s)$, we have $\tc(\hat t + \tau_1, X_{\tau_1}) \in \cI$ on $\{\tau_1 < T-\hat t\}$. By assumption $\tilde v(t,\tc(t,x),x)=(K-x)^+$ and, consequently, $\tilde v\big(\hat t + \tau_1, \tc(\hat t + \tau_1, X_{\tau_1}), X_{\tau_1}\big) = (K-X_{\tau_1})^+$ since $\tilde v(T,r,x)=(K - x)^+$.
In combination with \eqref{eqn:dynkin_u}, this yields
\begin{align}\label{eqn:b1}
\tilde v(\hat t,\hat r,\hat x)
&=\E_{\hat r,\hat x}\left[\int_{0}^{\tau_1}e^{-\int_{0}^{u}r_v dv}Kr_udu+e^{-\int_{0}^{\tau_1}r_v dv}(K-X_{\tau_1})^+\right],
\end{align}
where we use that $r_u > c(\hat t+u, X_u)$ on $\{u < \tau_1\}$. Applying Tanaka's formula to $(r,x) \mapsto (K - x)^+$ and taking expectation, we get
\begin{align*}
(K - \hat{x})^+
&=\E_{\hat{r},\hat{x}}\left[\int_{0}^{\tau_1}e^{-\int_{0}^{u}r_v dv}Kr_u \mathds{1}_{\{X_u<K\}}du+e^{-\int_{0}^{\tau_1}r_v dv}(K - X_{\tau_1})^+ +\frac12 \int_0^{\tau_1}e^{-\int_{0}^{u}r_v dv}dL^{K}_u(X)  \right]\\
&=\E_{\hat{r},\hat{x}}\left[\int_{0}^{\tau_1}e^{-\int_{0}^{u}r_v dv}Kr_udu+e^{-\int_{0}^{\tau_1}r_v dv}(K - X_{\tau_1})^+\right],
\end{align*}
where $L^K(X)$ is the local time of the process $X$ at $K$. The local time $L^K(X)$ is null until $\tau_1$ since $r_u>\tc(t+u,X_u)\implies X_u<K$, recalling that $\tc(t,x)=\bar{r}$ when $x\ge K$. Compare the right-hand side of the above expression to \eqref{eqn:b1} to conclude that $\tilde v(\hat{t},\hat{r},\hat{x}) = (K - \hat{x})^+$.

{\em Step 2.} The next step is to show that $\tilde v\le v$ for $(t,r,x) \in \cO$. Since we have already proved $\tilde v(t,r,x)=(K-x)^+\le v(t,r,x)$ when $r\ge\tc(t,x)$, we take $(\hat{t},\hat{r},\hat{x}) \in \cO$ such that $\hat{r}<\tc(\hat{t},\hat{x})$. Define a stopping time
\[
\tau_2:=\inf\{s\ge0: r^{\hat{r}}_s\ge \tc(\hat{t}+s,X^{\hat{r},\hat{x}}_s)\}\wedge (T-\hat{t}).
\] 
Since $r_u < \tc(\hat t +u, X_u)$ on $\{ u < \tau_2\}$, we obtain from \eqref{eqn:dynkin_u} 
\begin{align*}
\tilde v(\hat{t},\hat{r},\hat{x})
&=
\E_{\hat{r},\hat{x}}\left[e^{-\int_{0}^{\tau_2}r_v dv}(K-X_{\tau_2})^+\right]
\le 
v(\hat{t},\hat{r},\hat{x}),
\end{align*}
where the first equality is by $\tilde v\big(\hat t + \tau_2, \tc(\hat t + \tau_2, X_{\tau_2}), X_{\tau_2}\big) = (K-X_{\tau_2})^+$ and the final inequality holds by the definition of $v$. 

{\em Step 3.} Now we show that $\tc(t,x)\le c(t,x)$ for any $(t,x)\in [0,T)\times(0,K)$ such that $\tilde c(t,x)\in\cI$ (it is immediate for $(t,x) \in [0, T) \times [K, \infty)$ as $\tilde c(t,x)=c(t,x)=\overline{r}$). Arguing by contradiction, assume that there exists $(\hat{t},\hat{x})\in [0,T)\times(0,K)$ such that $\cI\ni\tc(\hat t,\hat x)> c(\hat t,\hat x)$. Let $\hat{r}>
\tc(\hat{t},\hat{x})$, and define
\[
\tau_3:=\inf\{s\ge0: r^{\hat{r}}_s\le c(\hat{t}+s,X^{\hat{r},\hat{x}}_s)\}\wedge (T-\hat{t}).
\]
By \eqref{eq:int-1} and \eqref{eqn:dynkin_u}, we have
\begin{align*}
v(\hat{t},\hat{r},\hat{x})
&=
\E_{\hat{r},\hat{x}}\left[\int_{0}^{\tau_3}e^{-\int_{0}^{u}r_v dv}Kr_u \mathds{1}_{\{r_u>c(\hat t+u,X_u)\}}du+e^{-\int_{0}^{\tau_3}r_v dv}v(\hat t+\tau_3,r_{\tau_3},X_{\tau_3})\right],\\
\tilde v(\hat{t},\hat{r},\hat{x})
&=
\E_{\hat{r},\hat{x}}\left[\int_{0}^{\tau_3}e^{-\int_{0}^{u}r_v dv}Kr_u \mathds{1}_{\{r_u>\tc(\hat t+u,X_u)\}}du+e^{-\int_{0}^{\tau_3}r_v dv}\tilde v(\hat t+\tau_3,r_{\tau_3},X_{\tau_3})\right].
\end{align*}
Since $\tilde v(\hat{t},\hat{r},\hat{x})=(K - \hat{x})^+=v(\hat{t},\hat{r},\hat{x})$, $r_u > c(\hat t+u, X_u)$ on $\{u < \tau_3\}$, and $\tilde {v}\le v$, the above two equations imply that
\[
\E_{\hat{r},\hat{x}}\left[\int_{0}^{\tau_3}e^{-\int_{0}^{u}r_v dv}Kr_u \mathds{1}_{\{r_u>\tc(\hat t+u,X_u)\}}du\right]
\ge
\E_{\hat{r},\hat{x}}\left[\int_{0}^{\tau_3}e^{-\int_{0}^{u}r_v dv}Kr_udu\right].
\]
As the function $c$ is non-negative, $r_u \ge 0$ on $\{u < \tau_3\}$ and we conclude that
\begin{equation}\label{eqn:zer}
\E_{\hat{r},\hat{x}}\left[\int_{0}^{\tau_3}\mathds{1}_{\{r_u\le\tc(\hat t+u,X_u)\}}du\right]=0.
\end{equation}
The dynamics of $(r, X)$ is non-degenerate on $\cI \times \R_+$, so the density of $(r_u,X_u)$ has a full support (on $\cI \times \R_+$) for $u >0$ (this can be inferred by classical Gaussian bounds as those we use in \eqref{eq:Gbound} in Appendix). Hence, by the continuity of $\tc$ and $c$, for a sufficiently small $\eps > 0$,
\[
\P_{\hat{r},\hat{x}}\left(c(\hat t+u, X_u) < r_u < \tc(\hat{t}+u,X_u) \text{ for some } u \in (0, \eps)\right)>0.
\]
Paired with the continuity of trajectories of $(r,X)$, it contradicts \eqref{eqn:zer}.

{\em Step 4.} Next, we prove $\tc=c$ at all points such that $\tc \in\cI$. Arguing by contradiction, assume $\tc(\hat{t},\hat{x})< c(\hat{t},\hat{x})$ for some $(\hat t, \hat x) \in [0, T) \times (0, K)$ such that $\tc (\hat t,\hat x)\in\cI$. Let $\hat{r}\in(\tc(\hat{t},\hat{x}),c(\hat{t},\hat{x}))$ and define
\[
\tau_4:=\inf\{s\ge0: r^{\hat{r}}_s\ge c(\hat{t}+s,X^{\hat{r},\hat{x}}_s)\}\wedge (T-\hat{t}).
\]
By \eqref{eq:int-1} and \eqref{eqn:dynkin_u}, we have
\begin{align*}
v(\hat{t},\hat{r},\hat{x})
&=
\E_{\hat{r},\hat{x}}\left[ e^{-\int_{0}^{\tau_4}r_v dv}v(\hat t+\tau_4,r_{\tau_4},X_{\tau_4})\right],\\
\tilde v(\hat{t},\hat{r},\hat{x})
&=
\E_{\hat{r},\hat{x}}\left[\int_{0}^{\tau_4}e^{-\int_{0}^{u}r_v dv}Kr_u \mathds{1}_{\{r_u>\tc(\hat t+u,X_u)\}}du+e^{-\int_{0}^{\tau_4}r_v dv}\tilde v(\hat t+\tau_4,r_{\tau_4},X_{\tau_4})\right],
\end{align*}
where in the first expression we used that $\mathds{1}_{\{r_u>c(\hat t+u,X_u)\}} = 0$ on $\{u < \tau_4\}$.
Since $\tc(t,x)\le c(t,x)$ for $(t,x) \in [0, T) \times (0, K)$, we have $\tilde v(\hat t+\tau_4,r_{\tau_4},X_{\tau_4})=(K-X_{\tau_4})^+=v(\hat t+\tau_4,r_{\tau_4},X_{\tau_4})$ by step 1. Then recalling that $\tilde v \le v$ and comparing the two equations above give us
\[
\E_{\hat{r},\hat{x}}\left[\int_{0}^{\tau_4}e^{-\int_{0}^{u}r_v dv}Kr_u\mathds{1}_{\{r_u>\tc(\hat t+u,X_u)\}}du\right]\le0.
\]
This is a contradiction since by the continuity of $(r,X)$ and $\tc$ there is a random variable $\eta > 0$ such that
\[
r_u(\omega) >\tc(\hat t+u,X_u(\omega)) \quad \text{for all $u \in [0, \eta(\omega))$}.
\]

{\em Step 5.} Here we show that $\tc = c$ on $[0,T)\times\R_+$. Let $(t_n,x_n)$ be a sequence such that $\tilde c(t_n,x_n)\in\cI$ and 
$(t_n,x_n)\to(t_0,x_0)$ with $\tilde c(t_0,x_0)=\bar r$ (respectively $c(t_0,x_0)=\tilde c(t_0,x_0)=\underline r$, if $\underline r = 0$; note that $\tc \ge 0$, so $\tc > \underline{r}$ if $\underline{r} < 0$). Since $\tilde c(t_n,x_n)=c(t_n,x_n)$ for all $n$'s, by the four steps above, by continuity we also get $c(t_0,x_0)=\tilde c(t_0,x_0)=\bar r$ (respectively $c(t_0,x_0)=\tilde c(t_0,x_0)=\underline r$, if $\underline r = 0$). Then, by the monotonicity of both $c$ and $\tc$ we get $c(t,x)=\tc(t,x)$ for all $(t,x)\in[0,t_0]\times [x_0,\infty)$ (respectively $(t,x)\in[t_0,T]\times (0,x_0]$). This implies, in particular, that
\[
\{(t,x):\tc (t,x)\in\cI\}=\{(t,x):c (t,x)\in\cI\},
\] 
which concludes the proof.
\end{proof}

\begin{proof}[Proof of Corollary \ref{cor:int-eq}]
We can repeat the same arguments as in the proof of Proposition \ref{prop:int-eq-1}, always using $x>b(t,r)\iff r<c(t,x)$ to fall back into the exact set-up of steps 1--4 therein.
\end{proof}

\section{Hedging strategy}\label{sec:hedge}

We start this section with an auxiliary lemma whose assertions are used to show admissibility of the hedging strategy. Estimate \eqref{eqn:i1} is also used in the proof of Proposition \ref{prop:int-eq}.

\begin{lemma}
\label{lem:integr} For any compact set $\cK \subset \cI \times \R_+$, and $p \in [1,2]$, we have
\begin{align}
\label{eqn:i1}
&\sup_{(r,x) \in \cK} \sup_{t \in [0, T]} \E_{r,x} \Big[ \sup_{0 \le s \le T-t} e^{-\int_0^s r_u du} v(t+s, r_s, X_s) \Big]
< \infty,\\
\label{eqn:i3}
&\sup_{(r,x) \in \cK} \sup_{s \in [0, T]} \E_{r,x} \Big[e^{-p\int_0^s r_u du} \big|v_x(t+s, r_s, X_s)\big|^p X_s^p \Big] < \infty,\\
\label{eqn:i2}
&\sup_{(r,x) \in \cK} \sup_{s \in [0, T]} \E_{r,x} \Big[ e^{-2\int_0^s r_u du} \big(v_r(t+s, r_s, X_s)\big)^2\beta^2(r_s) \Big] < \infty.
\end{align}
\end{lemma}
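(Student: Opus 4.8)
The plan is to prove the three bounds separately, in increasing order of difficulty. The first, \eqref{eqn:i1}, follows from the Snell-envelope structure. By the Markov property and \eqref{eq:v}, the discounted value process $Y_s := e^{-\int_0^s r_u\,du}\,v(t+s,r_s,X_s)$ satisfies $Y_s = \esssup_{s\le\sigma\le T-t}\E_{r,x}\big[e^{-\int_0^\sigma r_u\,du}(K-X_\sigma)^+\,\big|\,\cF_s\big] \le \E_{r,x}[Z\,|\,\cF_s]$ with $Z:=K\sup_{0\le u\le T}e^{-\int_0^u r_v\,dv}$, hence $\sup_{0\le s\le T-t}Y_s\le\sup_{0\le s\le T}\E_{r,x}[Z\,|\,\cF_s]$. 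Doob's $L^p$-maximal inequality for the right-continuous martingale $s\mapsto\E_{r,x}[Z\,|\,\cF_s]$ gives $\E_{r,x}\big[\sup_{0\le s\le T-t}Y_s\big]\le\tfrac{p}{p-1}\|Z\|_{L^p}$ for any $p>1$; taking $p\in(1,p']$ (any $p>1$ under Assumption \ref{ass:coef}(i)), \eqref{eq:integr} bounds $\|Z\|_{L^p}\le KC_1^{1/p}$ uniformly over $r$ in the compact projection of $\cK$ and independently of $x$ and $t$, which is \eqref{eqn:i1}.

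For \eqref{eqn:i3} the key observation is that $|v_x|\le 1$ throughout $\cO$: the proof of Proposition \ref{prop:vlip} shows $|v(t,r,x_1)-v(t,r,x_2)|\le|x_1-x_2|$ for all $x_1,x_2\in\R_+$, and $v\in C^1(\cO)$ by Theorem \ref{prop:v-diff} (alternatively, use \eqref{eq:vx} and optional sampling of the exponential martingale $e^{\sigma B_\cdot-\sigma^2\cdot/2}$). By \eqref{eq:X2}, $e^{-p\int_0^s r_u\,du}|v_x|^p X_s^p\le\big(e^{-\int_0^s r_u\,du}X_s\big)^p=x^p e^{p\sigma B_s-p\sigma^2 s/2}$, so $\E_{r,x}\big[e^{-p\int_0^s r_u\,du}|v_x|^p X_s^p\big]\le x^p e^{p(p-1)\sigma^2 s/2}\le x^p e^{\sigma^2 T}$, bounded uniformly over $(r,x)\in\cK$ and $s\in[0,T]$.

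Estimate \eqref{eqn:i2} is the main obstacle, since $(r_s,X_s)$ leaves $\cK$ and the factors $\beta^2(r_s)$, $e^{-2\int_0^s r_u\,du}$ are unbounded. First, $v_r\equiv 0$ on $\interior(\cD)$ and on $\partial\cC$ (the latter by Theorem \ref{prop:v-diff}), so the integrand is supported on $\{(t+s,r_s,X_s)\in\cC\}$. Next, revisiting the proof of Lipschitz continuity in $r$ in Proposition \ref{prop:vlip} with $r_1=r'$, $r_2=r'+\eps$ and the stopping time $\tau_*=\tau_*(t',r',x')\le T$ (which does not depend on $\eps$), dividing by $\eps$ and letting $\eps\downarrow 0$ — using differentiability of the flow $r'\mapsto r^{r'}_\cdot$ (\cite[Thm.~2.8.6]{krylov}) — I would obtain the pointwise bound $|v_r(t',r',x')|\le K\,\E_{r'}\big[e^{-\int_0^{\tau_*}r^{r'}_u\,du}\int_0^{\tau_*}y^{r'}_u\,du\big]$, where $y^{r'}$ is the flow derivative. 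Under Assumption \ref{ass:coef}(i) this simplifies at once: $r\ge 0$ gives $e^{-\int r}\le 1$, and $\E_{r'}[r^{r'+\eps}_u-r^{r'}_u]=\eps e^{-\kappa u}$ gives $|v_r|\le K/\kappa$; then \eqref{eqn:i2} follows from $\beta^2(r)=\gamma^2 r$ and $\sup_{(r,x)\in\cK}\sup_{s\le T}\E_{r,x}[r_s]<\infty$ (see \eqref{eq:subg}).

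Under Assumption \ref{ass:coef}(ii) I would use the Markov property and conditional Jensen to reduce to bounding $\E_{r,x}\big[\beta^2(r_s)\big(\sup_{v\le 2T}e^{-\int_0^v r_u\,du}\big)^2\big(\int_s^{s+T}\bar y_u\,du\big)^2\big]$, with $\bar y$ the flow derivative of the process restarted at $s$, and then apply H\"older's inequality with three exponents: $\|\beta^2(r_s)\|_{L^a}$ is controlled by the linear growth of $\beta$ and the moment bound \eqref{eq:subg}; $\|\sup_{v\le 2T}e^{-\int_0^v r_u\,du}\|_{L^{2b}}$ by \eqref{eq:integr} on horizon $2T$, which forces $2b\le p'$ — here the strict inequality $p'>2$ provides the room; and $\|\int_s^{s+T}\bar y_u\,du\|_{L^{2e}}$ by the standard moment estimate for the stochastic exponential $\bar y$, whose coefficients $\alpha'(r_\cdot),\beta'(r_\cdot)$ are bounded by the global Lipschitz constant, giving a bound independent of the starting point. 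A self-contained alternative, sufficient for the hedging argument of Section \ref{sec:hedge} which only needs $\E_{r,x}\big[\int_0^T e^{-2\int_0^u r_v\,dv}\beta^2(r_u)v_r^2\,du\big]<\infty$, is to rearrange the It\^o decomposition \eqref{eq:ito-v}: the sum $M=M^B+M^W$ of its two stochastic integrals equals $Y_\cdot-v(t,r,x)+\int_0^\cdot e^{-\int_0^u r_v\,dv}Kr_u\mathds{1}_{\{r_u>c\}}\,du$, which is $L^2$-bounded by the $L^2$ version of \eqref{eqn:i1} together with \eqref{eq:integr}; since $M^B$ is an $L^2$-martingale by \eqref{eqn:i3} with $p=2$, the Kunita--Watanabe inequality gives $\E[\langle M^W\rangle_T]\le 2\E[\langle M^B\rangle_T]+2\E[\langle M\rangle_T]<\infty$. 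The crux of the whole lemma is precisely this absorption, in case (ii), of the polynomial-in-$r_s$ factors against the exponential discount factor, which is exactly what \eqref{eq:integr} with $p'>2$ makes possible.
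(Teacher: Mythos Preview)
Your arguments for \eqref{eqn:i1}, \eqref{eqn:i3}, and the CIR case of \eqref{eqn:i2} coincide with the paper's: Doob's maximal inequality applied to the martingale $s\mapsto\E_{r,x}[Z\mid\cF_s]$ with $Z=K\sup_{u\le T} e^{-\int_0^u r_v\,dv}$; the global bound $|v_x|\le 1$ together with $e^{-\int_0^s r_u\,du}X_s = x\,e^{\sigma B_s-\sigma^2 s/2}$; and, under CIR, a uniform constant bound on $|v_r|$ followed by $\beta^2(r_s)=\gamma^2 r_s$ and \eqref{eq:subg}.

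For \eqref{eqn:i2} under Assumption~\ref{ass:coef}(ii) your route via the flow derivative $\bar y$ and a three-factor H\"older split is valid but more involved than the paper's. The paper observes that the Lipschitz-in-$r$ estimates \eqref{eqn:v_r1}--\eqref{eqn:v_r2} already yield, after absorbing the flow-derivative second moment \eqref{eqn:v_r2} into a universal constant, the pointwise bound
\[
\big(v_r(t',r',x')\big)^2 \;\le\; c_1\,\E_{r'}\!\Big[\sup_{0\le u\le T-t'} e^{-2\int_0^u r_v\,dv}\Big]
\]
for all $(t',r',x')\in\cO$, with $c_1$ depending only on $T$ and the Lipschitz constant of $(\alpha,\beta)$. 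Evaluating at $(t+s,r_s,X_s)$ and using the Markov property merges the inner and outer discount factors into a single conditional expectation of $\sup_{u\le T}e^{-2\int_0^u r_v\,dv}$; one H\"older split with exponent $q=p'/2$ against $\|\beta^2(r_s)\|_{L^{q'}}$ (controlled by the linear growth of $\beta$ and \eqref{eq:subg}) then suffices. This sidesteps tracking $\bar y$ at the random restart point $r_s$ and the doubled horizon $2T$. Your It\^o-decomposition alternative is a genuinely different, self-contained route to the integrated version actually needed in Section~\ref{sec:hedge}; it is correct once one notes that the localised identity established in the proof of Proposition~\ref{prop:int-eq} (which uses only \eqref{eqn:i1}) already gives uniform-in-$n$ $L^2$ bounds on $M_{\tau_n}$, so monotone convergence on $\E[\langle M^W\rangle_{\tau_n}]$ closes the argument.
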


\begin{proof}
From \eqref{eq:v} we obtain an upper bound for the function $v$:
\begin{equation}\label{eqn:ip1}
v(t,r,x) \le K \E_r \Big[\sup_{0 \le s \le T-t} e^{-\int_0^s r_u du} \Big].
\end{equation}
Using this bound, we have $P_{r,x}$-a.s.
\begin{align*}
e^{-\int_0^s r_u du} v(t+s, r_s, X_s) 
&\le e^{-\int_0^s r_u du} K \E_{r_s}\Big[\sup_{0 \le u \le T-t-s} e^{-\int_0^u r_v dv} \Big]\\
&= e^{-\int_0^s r_u du} K \E_{r}\Big[\sup_{s \le u \le T-t} e^{-\int_s^u r_v dv} \Big| \cF_s \Big]\\
&\le K \E_{r}\Big[\sup_{0 \le u \le T-t} e^{-\int_0^u r_v dv} \Big| \cF_s \Big]
\le K \E_{r}\Big[\sup_{0 \le u \le T} e^{-\int_0^u r_v dv} \Big| \cF_s \Big],
\end{align*}
where in the second equality we employ the Markov property of $r$. By Doob's maximal inequality applied to the martingale $Y_s = \E_{r}\Big[\sup_{0 \le u \le T} e^{-\int_0^u r_v dv} \Big| \cF_s \Big]$, we conclude
\begin{align*}
&\sup_{(r,x) \in \cK} \sup_{t \in [0, T]} \E_{r,x} \Big[ \sup_{0 \le s \le T-t} e^{-\int_0^s r_u du} v(t+s, r_s, X_s) \Big]\\
&\le
\sup_{(r,x) \in \cK} \sup_{t \in [0, T]} K\, \E_{r} \Big[ \sup_{0 \le s \le T-t} Y_s \Big]
\le
\sup_{(r,x) \in \cK} K\, \E_{r} \Big[ \sup_{0 \le s \le T} Y_s \Big]\\
&\le
\sup_{(r,x) \in \cK} 2K\, \big(\E_{r} [ Y_T^2 ] \big)^{1/2}
=
\sup_{(r,x) \in \cK} 2K \, \Big(\E_r \Big[\sup_{0 \le u \le T} e^{-2\int_0^u r_v dv}\Big]\Big)^{1/2} \le 2 K (C_1)^{1/2},
\end{align*}
where $C_1$ is the constant from \eqref{eq:integr}. This proves (i).

We now address \eqref{eqn:i3}. We have
\begin{align*}
e^{-p\int_{0}^{s}r_udu}\big|v_x(t+s,r_s,X_s)\big|^p X_s^p
= \big|v_x(t+s,r_s,X_s)\big|^p x^p e^{p\sigma B_s - \frac{p}{2}\sigma^2 s}
\le x^p e^{p\sigma B_s - \frac{p}{2}\sigma^2 s},
\end{align*}
where we use $-1 \le v_x \le 0$ in the last inequality, which follows from \eqref{eq:vx}. From here, \eqref{eqn:i3} is immediate.

It remains to prove \eqref{eqn:i2}. First we consider the case of Assumption \ref{ass:coef}(ii). From \eqref{eqn:v_r1}, \eqref{eqn:r_bnd} and \eqref{eqn:v_r2}, we deduce
\begin{equation}\label{eqn:vra}
\big(v_r(t,r,x)\big)^2 \le c_1\, \E_r \Big[ \sup_{0\le s \le T-t} e^{-2\int_{0}^{s}r_u du} \Big]
\end{equation}
for some constant $c_1> 0$ depending only on $T$ and the coefficients of \eqref{eq:r} (notice in particular that the expected value in the right-hand side above comes from the constant $C_1$ in \eqref{eqn:r_bnd}). Hence
\begin{align*}
e^{-2\int_0^s r_u du} \big(v_r(t+s, r_s, X_s)\big)^2 \beta^2(r_s)
&\le
e^{-2\int_0^s r_u du} c_1\, \E_{r_s} \Big[ \sup_{0\le u \le T-t-s} e^{-2\int_{0}^{u}r_v dv} \Big] \beta^2(r_s)\\
&\le
c_1\, \E_{r} \Big[ \sup_{0\le u \le T-t} e^{-2\int_{0}^{u}r_v dv} \Big| \cF_s \Big] \beta^2(r_s),
\end{align*}
where the last inequality is by the same argument as in the proof of \eqref{eqn:i1}. We take expectation of both sides and apply H\"{o}lder inequality with $q = p'/2$ ($p'>2$ is defined in Assumption \ref{ass:coef}) and $q' = q/(q-1)$
\begin{align*}
\E_r\Big[ e^{-2\int_0^s r_u du} \big(v_r(t+s, r_s, X_s)\big)^2 \beta^2(r_s) \Big]
&\le
c_1\, \Big( \E_r\Big[ \E_{r} \Big[ \sup_{0\le u \le T-t} e^{-2\int_{0}^{u}r_v dv} \Big| \cF_s \Big]^q \Big] \Big)^{1/q} 
\big( \E_r [\beta^{2q'}(r_s)] \big)^{1/q'}\\
&\le
c_1\, \Big( \E_r\Big[ \sup_{0\le u \le T-t} e^{-p'\int_{0}^{u}r_v dv}  \Big] \Big)^{1/q} 
\big( \E_r [\beta^{2q'}(r_s)] \big)^{1/q'}\\
&\le
c_1\, C_1^{1/q} \big( \E_r [\beta^{2q'}(r_s)] \big)^{1/q'},
\end{align*}
where the second inequality follows from Jensen's inequality and $C_1$ is the constant from \eqref{eq:integr}.
Let $L$ be the Lipschitz constant for $\beta$. Then, using triangle inequality for norms,
\begin{align*}
\Big(\E_r [(\beta(r_s))^{2q'}] \Big)^{1/q'} 
&\le 
\Big( \E_r \big[ \big|\beta(0) + L |r_s|\big|^{2q'} \big] \Big)^{1/q'}
=
\Big(\Big( \E_r \big[ \big|\beta(0) + L |r_s|\big|^{2q'} \big] \Big)^{1/2q'} \Big)^{2}\\
&\le
\Big(\beta(0) + L \big( \E_r \big[ |r_s|^{2q'} \big] \big)^{1/2q'} \Big)^{2}
\le
\Big(\beta(0) + L \big( C_2(1+|r|^{2q'} \big)^{1/2q'} \Big)^{2},
\end{align*}
where the last inequality follows from \eqref{eq:subg} and $2q'=p' \ge 2$. Combining the above estimates proves \eqref{eqn:i2}.

We address the case when $r$ follows the CIR dynamics. From the non-negativity of the process $r$ and from \eqref{eqn:vra} we obtain that $\big(v_r(t,r,x)\big)^2 \le c_1$ for any $(t,r,x) \in \cO$. Hence, we write
\[
\E_{r,x} \Big[ e^{-2\int_0^s r_u du} \big(v_r(t+s, r_s, X_s)\big)^2 \beta^2(r_s) \Big]
\le
c_1 \gamma^2 \E_{r} [|r_s|],
\]
where we used the explicit form of $\beta$. It remains to recall \eqref{eq:subg} to conclude \eqref{eqn:i2}.
\end{proof}

\begin{proof}[Proof of Proposition \ref{prop:hedge}]
The admissibility condition can be equivalently written as
\begin{equation}\label{eqn:h_3}
\int_0^{T} e^{-2\int_{0}^{s}r_udu}\big(\phi^{(1)}_s\sigma X_s\big)^2 ds +
\int_0^{T} e^{-2\int_{0}^{s}r_udu}\big(\phi^{(2)}_s\beta(r_s)P_r(s,r_s) \big)^2 ds < \infty, \qquad
\text{$\P_{r,x}$-a.s.}
\end{equation}
Estimates in Lemma \ref{lem:integr} imply
\[
\E_{r,x} \Big[\int_0^{T} e^{-2\int_{0}^{s}r_udu}\big(\phi^{(1)}_s\sigma X_s\big)^2 ds +
\int_0^{T} e^{-2\int_{0}^{s}r_udu}\big(\phi^{(2)}_s\beta(r_s)P_r(s,r_s) \big)^2 ds \Big] < \infty,
\]
which is a stronger condition than \eqref{eqn:h_3}. The fact that the portfolio replicates the option follows from the construction and \eqref{eqn:hedge}.
\end{proof}

\vspace{+15pt}
\noindent{\bf Acknowledgment}: C.~Cai gratefully acknowledges support by China Scholarship Council. T.~De Angelis gratefully acknowledges support by EPSRC grant EP/R021201/1.

\newpage

\begin{center}
\Large \sc Appendices
\end{center}

%%%%%%%%%%%%%%%%%%%%%%%%%%%%%
\appendix

\section{Convergence of stopping times}\label{app:cont}
In this appendix we provide a self-contained exposition around the continuity of hitting/entry times to certain domains in $\cO$. Similar results can be found in many textbooks, e.g., \cite[p. 32-40]{Dynkin2}, \cite[Chapter 1]{Blumenthal}.

Throughout this section, $\cDp$ denotes a subset of $\cO$ that is closed in $\cO$, i.e., $\overline \cDp \cap \cO = \cDp$. Introduce the {\em hitting} time to $\cDp$, denoted $\sigma_\cDp$, and the {\em entry} time to the interior of $\cDp$, denoted $\ss_\cDp$. That is, for $(t,r,x)\in [0, T] \times \cI \times \R_+$, 
we set $\P_{r,x}$-a.s.
\begin{equation}\label{eqn:sigma_cDp}
\begin{aligned}
&\sigma_\cDp:=\inf\{s>0\,:\,(t+s,r_s,X_s)\in\cDp\}\wedge(T-t),\\
&\ss_\cDp:=\inf\{s\ge 0\,:\,(t+s,r_s,X_s)\in\interior(\cDp)\}\wedge(T-t).
\end{aligned}
\end{equation}
The above definition differs from \eqref{eqn:sigmacD} since $\{T\} \times \cI \times \R_+ \subset \cD$ guaranteeing that $\sigma_\cD \le T-t$.
Both $\sigma_\cDp$ and $\ss_\cDp$ are stopping times with respect to the augmentation of the filtration generated by the Brownian motions $(B,W)$.
It is immediate to see that
\begin{align}\label{eq:ins}
\sigma_\cDp\le \ss_\cDp,\qquad\text{$\P$-a.s.}
\end{align}
We will often write $\sigma_\cDp(t,r,x)$ and $\ss_\cDp(t,r,x)$ to indicate the starting point of the process. 

Denote by $\cCp$ the complement of $\cDp$ in $\cO$: $\cCp := \cDp^c \cap \cO$ (which is an open set). When we write $\partial \cCp$ we mean the boundary of $\cCp$ in $[0, T] \times \cI \times \R_+$. We have two assumptions
\begin{assumption}[Regularity]
\label{ass:reg} 
For $(t_0,r_0,x_0)\in\partial\cCp$, we have
\[
\P_{t_0,r_0,x_0}(\sigma_\cDp>0)=\P_{t_0,r_0,x_0}(\ss_\cDp>0)=0.
\]
\end{assumption}
\begin{assumption}
\label{ass:cont}
For any sequence $(r_n,x_n)_{n\ge1}$ converging to $(r,x)\in\cI\times\R_+$ as $n\to\infty$, it holds that
\begin{equation}\label{eqn:cont}
\lim_{n\to +\infty}\sup_{0\le t\le T}\big( \left|r^{r_n}_t-r^{r}_t\right| + \left|X^{r_n,x_n}_t-X^{r,x}_t\right| \big)=0, \qquad\text{$\P$-a.s.}
\end{equation}
\end{assumption}
Note that Assumption \ref{ass:reg} necessitates that $\cDp = \overline{\interior(\cDp)} \cap \cO$. It also enables establishing a connection between $\sigma_\cDp$ and $\ss_\cDp$.

\begin{lemma}\label{lem:sigmaD} Under Assumption \ref{ass:reg},
$\P_{t,r,x}(\sigma_\cDp= \ss_\cDp)=1$ for all $(t,r,x)\in\cO$.
\end{lemma}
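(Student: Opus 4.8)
The plan is to prove $\sigma_\cDp = \ss_\cDp$ $\P_{t,r,x}$-a.s.\ by exploiting the regularity Assumption \ref{ass:reg} together with the strong Markov property. We always have $\sigma_\cDp \le \ss_\cDp$ by \eqref{eq:ins}, so it suffices to show $\ss_\cDp \le \sigma_\cDp$ $\P_{t,r,x}$-a.s. The idea is that when the process first hits $\cDp$ at time $\sigma_\cDp$, it lands at a point $(t+\sigma_\cDp, r_{\sigma_\cDp}, X_{\sigma_\cDp})$ which is either in $\interior(\cDp)$ --- in which case $\ss_\cDp \le \sigma_\cDp$ trivially --- or on $\partial\cCp$, in which case the regularity assumption forces the process to enter $\interior(\cDp)$ immediately, so that $\ss_\cDp = \sigma_\cDp$ as well. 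The case $\sigma_\cDp = T-t$ is also trivial since then $\ss_\cDp \le T-t = \sigma_\cDp$.

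First I would fix $(t,r,x) \in \cO$ and work on the event $\{\sigma_\cDp < T-t\}$ (on its complement the equality is immediate). By the continuity of trajectories of $(r,X)$ and the closedness of $\cDp$ in $\cO$, the point $Z_{\sigma_\cDp} := (t+\sigma_\cDp, r_{\sigma_\cDp}, X_{\sigma_\cDp})$ belongs to $\cDp$ on this event. Decompose $\{\sigma_\cDp < T-t\}$ according to whether $Z_{\sigma_\cDp} \in \interior(\cDp)$ or $Z_{\sigma_\cDp} \in \partial\cCp$ (these exhaust $\cDp$ because $\partial\cCp = \cDp \setminus \interior(\cDp)$ within $[0,T]\times\cI\times\R_+$, and points of $\cDp$ on the boundary of $\cO$ do not occur here since $\sigma_\cDp < T-t$). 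On $\{Z_{\sigma_\cDp} \in \interior(\cDp)\}$ we have $\ss_\cDp \le \sigma_\cDp$ directly from the definition \eqref{eqn:sigma_cDp}.

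The main step is the event $A := \{\sigma_\cDp < T-t\} \cap \{Z_{\sigma_\cDp} \in \partial\cCp\}$. Here I would apply the strong Markov property of $(t+s, r_s, X_s)$ at the stopping time $\sigma_\cDp$: conditionally on $\cF_{\sigma_\cDp}$, the shifted process starts afresh from $Z_{\sigma_\cDp}$. Writing $\theta$ for the shift operator, one has $\ss_\cDp = \sigma_\cDp + \ss_\cDp \circ \theta_{\sigma_\cDp}$ on $A$ (the entry time of the shifted process to $\interior(\cDp)$, noting the clock $t$ has advanced by $\sigma_\cDp$), and by the strong Markov property
\[
\P_{t,r,x}\big(\ss_\cDp \circ \theta_{\sigma_\cDp} > 0 \,\big|\, \cF_{\sigma_\cDp}\big) = \P_{Z_{\sigma_\cDp}}\big(\ss_\cDp > 0\big) \quad \text{on } A.
\]
Since $Z_{\sigma_\cDp} \in \partial\cCp$ on $A$, Assumption \ref{ass:reg} gives $\P_{Z_{\sigma_\cDp}}(\ss_\cDp > 0) = 0$, hence $\ss_\cDp \circ \theta_{\sigma_\cDp} = 0$ $\P_{t,r,x}$-a.s.\ on $A$, so $\ss_\cDp = \sigma_\cDp$ on $A$. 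Combining the three cases yields $\ss_\cDp \le \sigma_\cDp$ $\P_{t,r,x}$-a.s., which with \eqref{eq:ins} completes the proof.

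The one point requiring a little care --- and the likely main obstacle --- is the measurability bookkeeping around applying the strong Markov property at $\sigma_\cDp$ to the entry-time functional $\ss_\cDp$: one needs the map $(t,r,x) \mapsto \P_{t,r,x}(\ss_\cDp > 0)$ to be measurable so that $\P_{Z_{\sigma_\cDp}}(\ss_\cDp > 0)$ makes sense as a random variable, and one needs the identity $\ss_\cDp = \sigma_\cDp + \ss_\cDp \circ \theta_{\sigma_\cDp}$ on $\{Z_{\sigma_\cDp} \notin \interior(\cDp)\}$, which holds because before time $\sigma_\cDp$ the process has not even touched $\cDp$, let alone its interior. These are standard facts about hitting and entry times of closed/open sets for continuous Markov processes (see e.g.\ \cite{Blumenthal, Dynkin2}); given the regularity in Assumption \ref{ass:reg}, no probabilistic subtlety beyond this remains.
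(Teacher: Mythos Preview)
Your proof is correct and follows essentially the same approach as the paper's: both invoke the strong Markov property at $\sigma_\cDp$ together with Assumption \ref{ass:reg} to conclude that the shifted entry time $\ss_\cDp\circ\theta_{\sigma_\cDp}$ vanishes. The paper organises this as a short contradiction argument (assuming $\P_{t,r,x}(\ss_\cDp\ge\sigma_\cDp+\delta)>0$ for some $\delta>0$) and first splits according to whether the starting point lies in $\interior(\cDp)$ or in $\overline{\cCp}\cap\cO$ rather than according to where $Z_{\sigma_\cDp}$ lands, but these are cosmetic differences.
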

\begin{proof}
The equality is trivial for $(t,r,x) \in \interior(\cDp)$. Take $(t,r,x)$ in its complement, i.e., in $\overline \cCp \cap \cO$. Since \eqref{eq:ins} holds, we only need to show that $\P_{t,r,x}(\sigma_\cDp< \ss_\cDp)=0$. Let us argue by contradiction and assume that $\P_{t,r,x}(\sigma_\cDp< \ss_\cDp)>0$. There exists $\delta>0$ such that $\P_{t,r,x}(\ss_\cDp\ge \sigma_\cDp+\delta)>0$. Denoting by $\theta_t$ the shift operator, by the strong Markov property we get
\begin{align*}
\P_{t,r,x}(\ss_\cDp\ge \sigma_\cDp+\delta)
=&\,\E_{t,r,x}\left[\E_{t,r,x}\left(\mathds{1}_{\{\sigma_\cDp+\ss_\cDp\circ\theta_{\sigma_\cDp}\ge \sigma_\cDp+\delta\}}\Big|\cF_{\sigma_\cDp}\right)
\right]\\
=&\,\E_{t,r,x}\left[\P_{t+\sigma_\cDp,r_{\sigma_\cDp},X_{\sigma_\cDp}}\left(\ss_\cDp\ge\delta\right)\right]=0,
\end{align*}
where the last equality follows by observing that $(t+\sigma_\cDp,r_{\sigma_\cDp},X_{\sigma_\cDp})\in\partial\cCp$, $\P_{t,r,x}$-a.s.~and Assumption \ref{ass:reg}.
\end{proof}

\begin{lemma}\label{lem:usc}
Let $\cO\ni(t_n,r_n,x_n)\to(t,r,x)\in\cO$ as $n\to\infty$. Under Assumption \ref{ass:cont}, it holds $\P$-a.s.
\[
\limsup_{n\to\infty}\ss_\cDp(t_n,r_n,x_n)\le \ss_\cDp(t,r,x).
\]
\end{lemma}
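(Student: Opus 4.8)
The plan is to argue pathwise on the $\P$-full event on which the uniform convergence in Assumption \ref{ass:cont} holds. Fix such an $\omega$, set $s_0 := \ss_\cDp(t,r,x)(\omega)$, and split into the trivial boundary case $s_0 = T-t$ and the main case $s_0 < T-t$.

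In the boundary case $s_0 = T-t$ there is nothing to prove: from the definition \eqref{eqn:sigma_cDp} one has $\ss_\cDp(t_n,r_n,x_n) \le T - t_n$ for every $n$, and since $t_n \to t$ this gives $\limsup_n \ss_\cDp(t_n,r_n,x_n)(\omega) \le \lim_n (T-t_n) = T-t = s_0$.

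For the main case $s_0 < T-t$, I would fix an arbitrary $\eps \in (0, T-t-s_0)$. Here the truncation at $T-t$ in \eqref{eqn:sigma_cDp} is inactive, so $s_0 = \inf\{s\ge 0: (t+s, r^r_s(\omega), X^{r,x}_s(\omega)) \in \interior(\cDp)\}$, and by the definition of infimum I can pick a time $s_1 = s_1(\omega) \in [s_0, s_0+\eps)$ with $(t+s_1, r^r_{s_1}(\omega), X^{r,x}_{s_1}(\omega)) \in \interior(\cDp)$; in particular $s_1 < T-t$. Since $\interior(\cDp)$ is open and, by Assumption \ref{ass:cont} together with $t_n\to t$,
\[
\big(t_n + s_1,\, r^{r_n}_{s_1}(\omega),\, X^{r_n,x_n}_{s_1}(\omega)\big) \longrightarrow \big(t+s_1,\, r^r_{s_1}(\omega),\, X^{r,x}_{s_1}(\omega)\big)\in\interior(\cDp)\quad\text{as } n\to\infty,
\]
there is $N=N(\omega,\eps)$ such that for all $n\ge N$ the left-hand triple lies in $\interior(\cDp)$ and, moreover, $t_n + s_1 < T$ (the latter because $s_1 < T-t$ and $t_n \to t$). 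Hence $s_1$ is admissible in the infimum defining $\ss_\cDp(t_n,r_n,x_n)$, so $\ss_\cDp(t_n,r_n,x_n)(\omega) \le s_1 < s_0 + \eps$ for all $n\ge N$. Taking $\limsup$ in $n$ and then letting $\eps\downarrow 0$ yields $\limsup_n \ss_\cDp(t_n,r_n,x_n)(\omega) \le s_0 = \ss_\cDp(t,r,x)(\omega)$, which is the claim.

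The step requiring the most care — rather than a genuine obstacle — is keeping track of the truncation at $T-t$: one must choose $\eps$ small enough that $s_1 < T-t$ and then verify $t_n + s_1 < T$ for large $n$, so that $s_1$ stays an admissible time after the initial time is perturbed from $t$ to $t_n$. I note that Assumption \ref{ass:reg} is not used in this lemma; only the pathwise continuity of the flow (Assumption \ref{ass:cont}, which in our setting is Lemma \ref{lem:runif}) enters. It is essential that the \emph{entry} time $\ss_\cDp$ to the \emph{open} set $\interior(\cDp)$ appears here: the analogous upper semicontinuity for the hitting time $\sigma_\cDp$ of the closed set $\cDp$ fails in general, which is precisely why Lemma \ref{lem:sigmaD} (hence Assumption \ref{ass:reg}) is needed to identify $\sigma_\cDp$ with $\ss_\cDp$ elsewhere.
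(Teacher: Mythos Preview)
Your proof is correct and follows essentially the same approach as the paper's: a pathwise argument splitting into the trivial case $\ss_\cDp(t,r,x)(\omega)=T-t$ and the main case, where one picks a time just above the infimum at which the limiting trajectory lies in $\interior(\cDp)$ and then uses openness together with the uniform convergence from Assumption~\ref{ass:cont} to conclude. Your treatment of the truncation at $T-t_n$ (ensuring $s_1<T-t$ and $t_n+s_1<T$ for large $n$) is slightly more explicit than the paper's, and your closing remarks on why Assumption~\ref{ass:reg} is not needed here are accurate.
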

\begin{proof}
For simplicity, we denote $\ss_n:=\ss_\cDp(t_n,r_n,x_n)$ and $\ss_\cDp:=\ss_\cDp(t,r,x)$. For $\P$-a.e.~$\omega \in \Omega$ we have by \eqref{eqn:cont}
\begin{equation}\label{eqn:usc1}
(t_n+s,r^{r_n}_{s},X^{r_n,x_n}_{s})(\omega) \to(t+s,r^{r}_{s},X^{r,x}_{s})(\omega), \qquad s \in [0, T-t].
\end{equation}
Fix $\omega\in\Omega$ in the set of $\P$-full measure for which the above holds. If $\ss_\cDp(\omega)=T-t$ then the result is obvious because $\ss_n(\omega)\le T-t_n$. Assume $\ss_\cDp(\omega)<T-t$. Take any $\delta < T-t$ such that $\ss_\cDp(\omega) < \delta$. By the continuity of paths and the openness of $\text{int}(\cDp)$, there is $\delta' \in (\ss_\cDp(\omega), \delta)$ such that $(t+\delta',r^{r}_{\delta'},X^{r,x}_{\delta'})(\omega)\in\text{int}(\cDp)$. From \eqref{eqn:usc1} and the openness of $\text{int}(\cDp)$, $(t_n+\delta',r^{r_n}_{\delta'},X^{r_n,x_n}_{\delta'})(\omega)\in\text{int}(\cDp)$ for all sufficiently large $n$, so $\limsup_{n\to\infty}\ss_n(\omega)\le \delta'$. As the above argument holds for any $\delta > \ss_{\cDp}(\omega)$ and for a.e.~$\omega \in \Omega$, we obtain the claim.
\end{proof}

\begin{lemma}\label{lem:rsigma} 
Let $(t_n,r_n,x_n)_{n\ge 1}$ be a sequence converging to $(t,r,x)\in\cO$ as $n\to \infty$. Then, under Assumptions \ref{ass:reg} and \ref{ass:cont},
\begin{align}\label{eq:lsc}
\liminf_{n\to\infty}\sigma_{\cDp}(t_n,r_n,x_n)\ge \sigma_\cDp(t,r,x),\qquad\text{$\P$-a.s.}
\end{align}
\end{lemma}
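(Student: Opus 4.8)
The plan is to fix an $\omega$ in the set of full measure where the uniform convergence \eqref{eqn:cont} holds, and argue pathwise. Write $\sigma_n := \sigma_\cDp(t_n,r_n,x_n)$ and $\sigma := \sigma_\cDp(t,r,x)$, and set $\ell := \liminf_{n\to\infty}\sigma_n(\omega)$. If $\ell = T-t$ there is nothing to prove (since always $\sigma \le T-t$), so assume $\ell < T-t$ and suppose, for contradiction, that $\ell < \sigma$. Pass to a subsequence (not relabelled) along which $\sigma_n(\omega) \to \ell$. First I would dispose of the possibility that the constraint $T-t_n$ is binding: since $t_n \to t$ and $\ell < T-t$, we have $\sigma_n(\omega) < T - t_n$ for all large $n$, so along this subsequence $\sigma_n(\omega)$ is a genuine hitting time of $\cDp$, i.e.\ $(t_n+\sigma_n, r^{r_n}_{\sigma_n}, X^{r_n,x_n}_{\sigma_n})(\omega) \in \cDp$ (using that $\cDp$ is closed in $\cO$ and the path is continuous, so the infimum is attained).

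Next I would pass to the limit in this membership. By the uniform convergence \eqref{eqn:cont} and continuity of the limiting paths $s \mapsto (r^r_s, X^{r,x}_s)(\omega)$, together with $t_n \to t$ and $\sigma_n(\omega) \to \ell$, one gets
\[
(t_n+\sigma_n, r^{r_n}_{\sigma_n}, X^{r_n,x_n}_{\sigma_n})(\omega) \longrightarrow (t+\ell, r^{r}_{\ell}, X^{r,x}_{\ell})(\omega).
\]
Since $\cDp$ is closed in $\cO$ and the limit point lies in $\cO$ (because $\ell < T-t$), the limit point belongs to $\cDp$. Now there are two cases. If $(t+\ell, r^r_\ell, X^{r,x}_\ell)(\omega) \in \interior(\cDp)$, then $\ss_\cDp(t,r,x)(\omega) \le \ell < \sigma \le \ss_\cDp(t,r,x)(\omega)$ using Lemma~\ref{lem:sigmaD} (which gives $\sigma = \ss_\cDp$ a.s.), a contradiction. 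If instead the limit point lies on the boundary $\partial\cCp$, I would invoke the strong Markov property and Assumption~\ref{ass:reg}: restarting the diffusion from that boundary point, $\sigma_\cDp = 0$ a.s., and more precisely the event that the shifted process immediately re-enters $\interior(\cDp)$ has probability one. This is exactly the mechanism used in Lemma~\ref{lem:sigmaD}, so I would argue that on a set of full measure the path started at the limit point enters $\interior(\cDp)$ at times arbitrarily close to $\ell$; combined with Lemma~\ref{lem:usc} applied along the sequence $(t_n+\sigma_n, r^{r_n}_{\sigma_n}, X^{r_n,x_n}_{\sigma_n}) \to (t+\ell, r^r_\ell, X^{r,x}_\ell)$, this forces $\ss_\cDp(t,r,x)(\omega) \le \ell$, again contradicting $\ell < \sigma = \ss_\cDp(t,r,x)(\omega)$.

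The main obstacle is handling the boundary case cleanly: the naive pathwise argument only yields that the limit point is in $\cDp$, not in $\interior(\cDp)$, and one genuinely needs the regularity Assumption~\ref{ass:reg} (equivalently, that $\cDp = \overline{\interior(\cDp)}\cap\cO$ and that boundary points are regular for $\interior(\cDp)$) to push past it. The careful way to organise this is to phrase everything in terms of $\ss_\cDp$ rather than $\sigma_\cDp$ from the start: by Lemma~\ref{lem:sigmaD} they coincide a.s., $\ss_\cDp$ is the quantity controlled from above by Lemma~\ref{lem:usc}, and the matching lower bound here is what makes $\sigma_\cDp(t_n,r_n,x_n) \to \sigma_\cDp(t,r,x)$ a.s.\ (which is presumably the next statement, Proposition~\ref{prop:rsigma}, and the version actually used in Lemma~\ref{cor:conv_tau}). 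I would also double-check the measurability/a.s.\ bookkeeping: the exceptional null set comes only from \eqref{eqn:cont} and from the a.s.\ statement in Assumption~\ref{ass:reg} applied (via a countable dense set of boundary points, or via the strong Markov property at the stopping time $\sigma_\cDp$ of the limiting process) — no new null sets are introduced by the subsequence extraction since the $\liminf$ is a fixed random variable.
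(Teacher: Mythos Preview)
Your contradiction approach --- passing to the limit of the hitting points $(t_n+\sigma_n, r^{r_n}_{\sigma_n}, X^{r_n,x_n}_{\sigma_n})$ --- is a legitimate alternative to the paper's argument, but you have overcomplicated it and in doing so left a genuine gap. The problem is your boundary case: when the limit point $(t+\ell, r^r_\ell, X^{r,x}_\ell)(\omega)$ lands on $\partial\cCp$, you appeal to the strong Markov property and Assumption~\ref{ass:reg} to force the path into $\interior(\cDp)$ just after time $\ell$. But $\ell=\liminf_n\sigma_n(\omega)$ is not a stopping time, and the null set in Assumption~\ref{ass:reg} depends on the (now $\omega$-dependent) boundary point at which you would restart; a pathwise argument cannot absorb uncountably many such null sets.

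In fact the case split is unnecessary. You have already established that the limit point lies in $\cDp$ (closed in $\cO$). If $\ell>0$, then the path $s\mapsto(t+s,r^r_s,X^{r,x}_s)(\omega)$ is in $\cDp$ at the positive time $s=\ell$, so by the very definition of $\sigma_\cDp$ as the hitting time of $\cDp$ (not of $\interior(\cDp)$) you get $\sigma(\omega)\le\ell$ immediately --- no regularity, no detour through $\ss_\cDp$, no Lemma~\ref{lem:sigmaD}. Only when $\ell=0$ do you need Assumption~\ref{ass:reg}, and then it is applied at the \emph{fixed} starting point $(t,r,x)\in\cDp$, yielding a single null set. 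With this correction your argument is complete.

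The paper's proof takes the opposite tack and is shorter: rather than locating where the approximating paths first hit $\cDp$, it shows directly that they cannot hit $\cDp$ too early. Pathwise, if $\sigma_\cDp(\omega)>0$ then $(t,r,x)\in\cCp$ (by regularity at this fixed point), so the limiting trajectory $\zeta_s=(t+s,r^r_s,X^{r,x}_s)(\omega)$ stays at distance $>\eps$ from $\partial\cCp$ on $[0,\delta_\eps]$, where $\delta_\eps:=\inf\{s:\d(\zeta_s,\partial\cCp)\le\eps\}$. By the uniform convergence \eqref{eqn:cont} and the triangle inequality, the approximating trajectories stay at distance $\ge\eps/2$ on the same interval for all large $n$, hence remain in $\cCp$, giving $\sigma_n(\omega)\ge\delta_\eps$; finally $\delta_\eps\uparrow\sigma_\cDp(\omega)$ as $\eps\downarrow0$. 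This uses Assumption~\ref{ass:reg} only once and avoids the subsequence extraction and limit-point analysis entirely.
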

\begin{proof}
For $y,z\in\cO$ we denote by $\d(y,z)$ their Euclidean distance and by $\d(y,\partial\cCp)=\inf\{\d(y,z)\,,\,z\in\partial\cCp\}$. Denote $\sigma_n:=\sigma_\cDp(t_n,r_n,x_n)$ and $\sigma_\cDp:=\sigma_\cDp(t,r,x)$. To simplify notation we also set
\[
\zeta_s:=(t+s,r^{r}_s,X^{r,x}_s)\quad\text{and}\quad\zeta^n_s:=(t_n+s,r^{r_n}_s,X^{r_n,x_n}_s).
\]

Fix $\omega \in \Omega$ in a set of full $\P$-measure on which trajectories of $\zeta$ are continuous and the limit \eqref{eqn:cont} holds. If $\sigma_{\cDp}(\omega)=0$ then \eqref{eq:lsc} holds trivially. Otherwise, we must have $(t, r, x) \in \cCp$ by Assumption \eqref{ass:reg}, so $\d(\zeta_0(\omega), \partial \cCp) > 0$ since $\cCp$ is open. For $0 < \eps < \d(\zeta_0(\omega), \partial \cCp)$, define
\[
\delta_\eps = \inf\{ s \in [0, T-t]\,:\, \d(\zeta_s(\omega), \partial\cCp) \le \eps \}.
\]
Using the triangle inequality we get $\d(\zeta^n_s(\omega),\partial\cCp))+\d(\zeta^n_s(\omega),\zeta_s(\omega))> \eps$ for all $s\in[0,\delta_\eps]$. Thanks to Assumption \ref{ass:cont}, $\d(\zeta^n_s(\omega),\zeta_s(\omega))<\eps/2$ for all $s\in[0,\delta_\eps]$ and all sufficiently large $n$, so $\d(\zeta^n_s(\omega),\partial\cCp))\ge \eps/2$ for all $s\in[0,\delta_\eps]$ and all sufficiently large $n$. This gives 
\[
\liminf_{n\to\infty}\sigma_{n}(\omega)\ge \delta_\eps. 
\]
Using the continuity of $t \mapsto \zeta_t(\omega)$ and the fact that $\zeta_{\sigma_\cDp}(\omega) \in \partial \cCp$, we have $\lim_{\eps \to 0} \delta_\eps = \sigma_\cDp (\omega)$. As this holds for a.e. $\omega \in \Omega$, the proof of \eqref{eq:lsc} is complete.
\end{proof}

Lemma \ref{lem:sigmaD}, \ref{lem:usc} and \ref{lem:rsigma} imply
\begin{proposition}\label{prop:rsigma}
Let $(t_n,r_n,x_n)_{n\ge 1}$ be a sequence converging to $(t,r,x)\in\cO$ as $n\to \infty$. Then
\begin{equation}
\label{eq:limss}
\begin{aligned}
\lim_{n\to\infty}\ss_{\cDp}(t_n,r_n,x_n) &= \ss_\cDp(t,r,x),\qquad\text{$\P$-a.s.}\\
\lim_{n\to\infty}\sigma_{\cDp}(t_n,r_n,x_n) &= \sigma_\cDp(t,r,x),\qquad\text{$\P$-a.s.}
\end{aligned}
\end{equation}
\end{proposition}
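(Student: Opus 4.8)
The plan is to deduce \eqref{eq:limss} by a sandwiching argument that combines the one-sided estimates of Lemmas \ref{lem:usc} and \ref{lem:rsigma} with the $\P$-a.s.\ identification of the hitting and entry times provided by Lemma \ref{lem:sigmaD}. First I would apply Lemma \ref{lem:sigmaD} to the limit point $(t,r,x)$ and, separately, to each $(t_n,r_n,x_n)$, obtaining $\sigma_\cDp(t,r,x)=\ss_\cDp(t,r,x)$ and $\sigma_\cDp(t_n,r_n,x_n)=\ss_\cDp(t_n,r_n,x_n)$ for every $n\ge 1$. Since this is a countable collection of almost-sure statements, there is a single $\P$-null set outside of which all of them hold simultaneously; enlarging it if necessary, I may also assume that the a.s.\ inequality \eqref{eq:ins} holds for every member of the sequence on the complementary full-measure event $\Omega_0$.

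Next, on $\Omega_0$ I would chain the estimates. Lemma \ref{lem:usc} yields
\[
\limsup_{n\to\infty}\sigma_\cDp(t_n,r_n,x_n)
=\limsup_{n\to\infty}\ss_\cDp(t_n,r_n,x_n)
\le \ss_\cDp(t,r,x)
=\sigma_\cDp(t,r,x),
\]
where the two equalities are the identifications just recorded. Symmetrically, Lemma \ref{lem:rsigma} gives
\[
\liminf_{n\to\infty}\ss_\cDp(t_n,r_n,x_n)
=\liminf_{n\to\infty}\sigma_\cDp(t_n,r_n,x_n)
\ge \sigma_\cDp(t,r,x)
=\ss_\cDp(t,r,x).
\]
Reading the two displays together shows that both sequences $\big(\sigma_\cDp(t_n,r_n,x_n)\big)_n$ and $\big(\ss_\cDp(t_n,r_n,x_n)\big)_n$ have $\limsup$ and $\liminf$ equal to the common value $\sigma_\cDp(t,r,x)=\ss_\cDp(t,r,x)$, hence both converge to it. Since $\Omega_0$ has full measure, this is exactly \eqref{eq:limss}.

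I do not expect a real obstacle: the mathematical substance is entirely contained in Lemmas \ref{lem:sigmaD}--\ref{lem:rsigma}, and the proposition is their routine combination. The only point deserving a word of care is that Lemma \ref{lem:sigmaD} is formulated for a fixed initial condition, so one must invoke it along the countable sequence and intersect the resulting null sets before passing to the $\limsup$/$\liminf$ inequalities --- a harmless step. It is also worth recalling that all three lemmas are in force here only because Assumptions \ref{ass:reg} and \ref{ass:cont} are assumed throughout this appendix; in the application to $\cDp=\cD\cap\cO$ made in the body of the paper these are verified through Proposition \ref{prop:reg} and Lemma \ref{lem:runif}, respectively.
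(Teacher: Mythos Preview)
Your proposal is correct and matches the paper's approach exactly: the paper simply states that Lemmas \ref{lem:sigmaD}, \ref{lem:usc} and \ref{lem:rsigma} imply the result, and you have spelled out precisely the sandwiching argument this entails. Your care about intersecting the countably many null sets from Lemma \ref{lem:sigmaD} is a nice touch that the paper leaves implicit.
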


\section{Proof of Proposition \ref{prop:reg}}\label{app:reg}

Fix $(t_0,r_0,x_0)\in\partial\cC$ and define $\cR:=[r_0,\overline r]\times[0,x_0]$, where we also recall that $\cI=(\underline r,\overline r)$. Since $t\mapsto c(t,x)$ is non-increasing, it is immediate to see that $[t_0,T]\times\cR\subseteq \cD$. Recalling the notation introduced in \eqref{eqn:sigma_cDp}, set
\[
\ss_\cR(r_0,x_0):=\inf\{s\ge0\,:\,(r^{r_0}_s,X^{r_0,x_0}_s)\in\text{int}(\cR)\}.
\]
We have $\ss_\cR(r_0,x_0)\ge\ss_\cD(t_0,r_0,x_0)$, $\P$-a.s., and $\ss_\cR(r_0,x_0)\ge\sigma_\cD(t_0,r_0,x_0)$ by the continuity of the process $(r, X)$. From now on we omit in the notation the dependence on $(t_0,r_0,x_0)$ since the initial point is fixed throughout the proof.

Take a compact ball $\cK\subset\cI \times \R_+$ centred at $(r_0,x_0)$. Let $\Sigma(r,x)$ denote the matrix of the diffusion coefficient for \eqref{eq:X}--\eqref{eq:r}, i.e.
\begin{align*}
\Sigma(r,x):=\frac{1}{2}
\left(
\begin{array}{cc}
\sigma^2x^2 & \rho\sigma x\beta(r)\\
\rho\sigma x\beta(r) & \beta^2(r)
\end{array}
\right).
\end{align*} 
Since the correlation coefficient $\rho\in(-1,1)$, there is $\gamma = \gamma_{\cK}>0$ such that
\begin{align}\label{eq:ue}
\frac{1}{\gamma}\|z\|^2\le\langle \Sigma(r,x)z,z\rangle \le\gamma\|z\|^2, \qquad z\in\R^2 \setminus\{0\},\ (r,x) \in \cK,
\end{align}
where $\langle\cdot,\cdot\rangle$ denotes the scalar product in $\R^2$ and $\|\cdot\|$ the corresponding norm.

Define a new process $(\widetilde r,\widetilde X)$ with the dynamics defined on $\R^2$
\begin{align}
\label{eq:tX} &d\, \widetilde X_t=\mu_\cK(\widetilde r_t, \widetilde X_t) dt+\sigma_\cK( \widetilde X_t ) dB_t,\qquad \widetilde X_0=x_0\\[+4pt]
\label{eq:tr} &d\, \widetilde r_t=\alpha_\cK(\widetilde r_t)dt+\beta_\cK(\widetilde r_t)dW_t,\qquad\qquad \widetilde r_0=r_0
\end{align}
such that the coefficients coincide with the coefficients of \eqref{eq:X}-\eqref{eq:r} on $\cK$, are Lipschitz continuous on $\R^2$ and satisfy the uniform ellipticity condition \eqref{eq:ue} with $\gamma$ on $\R^2$. Denoting $\tau_\cK:=\inf\{t\ge0\,:\,(r_t,X_t)\notin\text{int}(\cK)\}$ and $\widetilde \tau_\cK:=\inf\{t\ge0\,:\,(\widetilde r_t,\widetilde X_t)\notin\text{int}(\cK)\}$, by the uniqueness of solutions for SDEs we get indistinguishable stopped paths:
\[
(r_{t\wedge\tau_\cK},X_{t\wedge\tau_\cK})_{t\ge0}=(\widetilde r_{t\wedge\widetilde\tau_\cK},\widetilde X_{t\wedge\widetilde\tau_\cK})_{t\ge0}\quad\text{$\P_{r_0,x_0}$-a.s.}
\]

The uniform ellipticity condition \eqref{eq:ue} on $\R^2$ implies that the process $(\widetilde r,\widetilde X)$ admits a transition density $\widetilde p(t,(r,x),(r',x'))$ which satisfies the following Gaussian bound (see, e.g., \cite{aronson1967bounds,fabes1989new}): there exists $m>0$ and $\Lambda>0$ such that
\begin{equation}\label{eq:Gbound}
\widetilde p(t,(r,x),(r',x')) \ge m\,t^{-1}\exp\left(-\Lambda\frac{(r'-r)^2+(x'-x)^2}{t}\right).
\end{equation}

Let $\cR''$ be a closed cone with vertex $(r_0, x_0)$ and non-empty interior contained in $(r_0, \infty) \times (-\infty, x_0) \cup \{(r_0, x_0)\}$. Put $\cR' = \cR'' \cap (\cI \times \R_+)$. Denote by $\ss_{\cR}'$ the entry time of $(r,X)$ to $\interior(\cR')$ and by $\ss_{\cR}''$ the entry time of $(\widetilde r,\widetilde X)$ to $\interior(\cR'')$. The next estimate relies on analogous results for multi-dimensional Brownian motion \citep[Thm.~4.2.9]{karatzas1998brownian}; in particular the second inequality below follows from \eqref{eq:Gbound}:
\begin{align*}
\P_{r_0,x_0}(\ss_\cR''\le t)& \ge \P_{r_0, x_0} ((\widetilde r_t, \widetilde X_t) \in \cR'') 
\ge \frac{m}{t}\int_{\cR''}\exp\left(-\Lambda\frac{(r-r_0)^2+(x-x_0)^2}{t}\right)dr\,dx.
\end{align*} 
We change variables to $y:=(r-r_0)/\sqrt{t}$ and $z:=(x-x_0)/\sqrt{t}$ and use that this transformation maps $\cR''$ into $\cR''_0 = \cR'' - (r_0, x_0)$ to obtain 
\begin{align}\label{eq:reg2}
\P_{r_0,x_0}(\ss_\cR''\le t)\ge 2\pi m \int_{\cR''_0}\frac{1}{2\pi}e^{-\Lambda(y^2+z^2)}dy\,dz=:q>0.
\end{align} 
For any $t>0$, since $\cR'\subset \cR$ we have
\begin{align*}
\P_{r_0,x_0}(\ss_\cR\le t)\ge&\, \P_{r_0,x_0}(\ss'_\cR\le t)\ge\P_{r_0,x_0}(\ss_\cR'\le t,\tau_\cK>t)\nonumber\\
=&\,\P_{r_0,x_0}(\ss_\cR''\le t,\widetilde\tau_\cK>t)\ge\P_{r_0,x_0}(\ss_\cR''\le t)-\P_{r_0,x_0}(\widetilde\tau_\cK\le t)
\ge q - \P_{r_0,x_0}(\widetilde\tau_\cK\le t),
\end{align*}
where the last inequality is by \eqref{eq:reg2}. As $t\downarrow 0$, we have $\P_{r_0,x_0}(\widetilde\tau_\cK\le t)\to0$ and $\P_{r_0,x_0}(\ss_\cR\le t)\to \P_{r_0,x_0}(\ss_\cR=0)$, which implies that $\P_{r_0,x_0}(\ss_\cR=0)\ge q>0$. By the Blumenthal $0-1$ law \citep[Thm.~2.7.17]{karatzas1998brownian} we obtain $\P_{r_0,x_0}(\ss_\cR=0)=1$. Recalling that $\P_{t_0, r_0, x_0} (\ss_\cR \ge\ss_\cD) = 1$, and $\P_{t_0, r_0, x_0} (\ss_\cD\ge\sigma_\cD) = 1$, we conclude $\P_{t_0, r_0, x_0} (\ss_\cD = 0) = \P_{t_0, r_0, x_0} (\sigma_\cD=0) = 1$.\hfill$\square$\\

\begin{remark}\label{rem:lambda}
It is worth noticing that the arguments above show the existence of the transition density of the process $(\widetilde r,\widetilde X)$ for any compact set $\cK \subset \cI \times \R_+$ such that $\cK = \overline{\interior(\cK)}$. This implies that for each $t\in[0,T]$ also the law of $(r_t,X_t)$ is absolutely continuous with respect to the Lebesgue measure on $\cI\times\R_+$, when the boundary of $\cI\times\R_+$ is unattainable by $(r_t,X_t)$. Indeed, let $N\subset\cI\times\R_+$ be such that $\lambda(N)=0$, with $\lambda$ denoting the Lebesgue measure on $\R^2$. Let $\cK\subset\cI\times\R_+$ be a compact set such that $\cK = \overline{\interior(\cK)}$. Then by the same construction as above
\begin{equation}\label{eq:absc}
\begin{aligned}
\P_{r_0,x_0}\big((r_t,X_t)\in N\big)
&=
\P_{r_0,x_0}\big((r_t,X_t)\in N,t\le \tau_\cK\big)+\P_{r_0,x_0}\big((r_t,X_t)\in N,t> \tau_\cK\big)\\
&=
\P_{r_0,x_0}\big((\widetilde r_t,\widetilde X_t)\in N,t\le \widetilde\tau_\cK\big)+\P_{r_0,x_0}\big((r_t,X_t)\in N,t> \tau_\cK\big)\\
&\le
\P_{r_0,x_0}\big((\widetilde r_t,\widetilde X_t)\in N\big)+\P_{r_0,x_0}\big(\tau_\cK< t\big)
=\P_{r_0,x_0}\big(\tau_\cK< t\big),
\end{aligned}
\end{equation}
where the final equality uses that the transition law of $(\widetilde r,\widetilde X)$ is absolutely continuous with respect to $\lambda$. Now, letting $\cK\uparrow \cI\times\R_+$, using that $0$ and $+\infty$ are not attainable by $X$ and $\underline r$ and $\overline r$ are not attainable by $r$, we can make $\P_{r_0,x_0}\big(\tau_\cK < t\big)$ arbitrarily small, which proves the claim.
\end{remark}

\section{Generalisation of It\^o's formula}\label{app:Ito}
Here we state a particular case of \cite[Theorem 2.1]{cai2021b} using notations appropriate for our set-up. We set $\bm x=(x_1,x_2)\in \R^2$. Let $B=(B^j)_{j=1,2}$ be a two-dimensional standard Brownian motion and denote $\bm{X}:=(X^{1},X^{2})$ the unique strong solution of
\begin{equation*}
\d X^{i}_t=\alpha^{i}(\bm{X}_{t})\d t+\sum_{j=1}^{2}\sigma^{ij}(\bm{X}_{t})\d B^{j}_t, \quad X^{i}_0=x_i,\quad i=1,2.
\end{equation*}
Let
$\beta^{ij}(\bm{x}):=\sum_{k=1}^{2}\sigma^{ik}(\bm{x})\sigma^{kj}(\bm{x})$
and denote $\cE:=\{(t,\bm x)\in[0,T]\times\R^2: x_1< g(t,x_2)\}$
for some function $g:[0,T]\times \R$. Here $\cE^c:=([0,T]\times\R^2)\setminus \cE$. Then the theorem reads as follows:
\begin{theorem}
Assume the following:
\begin{enumerate}
\item
The coefficients $\beta^{ij}$ are locally Lipschitz and $\P((t,\bm{X}_{t})\in\partial \cE)=0$ for a.e.\ $t\ge 0$; 
\item
A function $U:\R_+\times\R^2\to \R$ is such that $U\in C^1([0,T]\times\R^2)$ with $U\in C^{1,2}(\cE)\cap C^{1,2}(\cE^c)$. Moreover, for any compact subset $K\subset [0,T]\times\R^2$ the function 
\begin{align*}
L(t,\bm x):=\sum_{i,j=1}^2 \beta^{ij}(\bm{x})U_{x_i x_j}(t,\bm{x})
\end{align*} 
is bounded for $(t,\bm{x})\in K\setminus \partial\cE$. That is, there exists $c_K$ such that 
\item
The mappings $x_2\mapsto g(t,x_2)$ and $t\mapsto g(t,x_2)$ are monotonic. 
\end{enumerate}

Then, we have the change of variable formula: 
\begin{align*}
&U(t,\bm{X}_{t})=U(0,\bm{x})\\
&\quad+\int_0^{t} \Big[\Big(U_t+\sum_{i=1}^m \alpha^i U_{x_i}\Big)(u,\bm{X}_{u})+\tfrac{1}{2}\sum_{i,j=1}^m\mathds{1}_{\{(u,\bm X_{u})\notin\partial\cE\}}\big(\beta^{ij}U_{x_i x_j}\big)(u,\bm{X}_{u})\Big]\d u \\
&\quad+\sum_{i,j=1}^{m}\int_{0}^{t}U_{x_i}(u,\bm{X}_{u})\sigma^{ij}(\bm{X}_{u}) \d B^{j}_u,\qquad\text{for $t\in [0,\infty)$, $\P$-a.s.}
\end{align*}
\end{theorem}

\section{Derivation of \eqref{eq:numer-2}-\eqref{eq:numer-3}}\label{app:numerics}

We start by introducing the following notation (we suppress dependence on $r$ and $x$ for the sake of simplicity):
\allowdisplaybreaks
\begin{align}
\label{eqn:explicit_pricing}
d_1(t,T):&=\frac{\log\left(K\frac{P(t,T)}{x}\right)+\frac{\gamma_1(t,T)}{2}}{\sqrt{\gamma_1(t,T)}},\qquad d_2(t,T):=d_1(t,T)-\sqrt{\gamma_1(t,T)},\\
q(t,u):&=e^{-(u-t)\kappa}r+\theta(1-e^{-(u-t)\kappa})-\frac{\beta^2}{2}g(\kappa,u-t)^2,\notag\\
\phi(t,u,y; b):&=\frac{\log\left(\frac{P(t,u)}{x}b\big(u,q(t,u)+y \sqrt{\gamma_2(t,u)}\big)\right)+\frac{\gamma_1(t,u)}{2}-\sqrt{\gamma_1(t,u)}\tilde{\rho}(t,u)y}{\sqrt{(1-\tilde{\rho}(t,u)^2)\gamma_1(t,u)}},\notag\\
\mu(t,u):&=rg(\kappa,u-t)+\theta\big(u-t-g(\kappa,u-t)\big),\notag\\
\gamma_1(t,u):&=(u-t)\sigma^2+\frac{2\rho\sigma\beta}{\kappa}\big(u-t-g(\kappa,u-t)\big)+\frac{\beta^2}{\kappa}\big(u-t-2g(\kappa,u-t)+g(2\kappa,u-t)\big),\notag\\
\gamma_2(t,u):&=\beta^2g(2\kappa,u-t),\notag\\
\tilde{\rho}(t,u):&=\frac{\rho\sigma\beta g(\kappa,u-t)+\frac{\beta^2}{2}g(\kappa,u-t)^2}{\sqrt{\gamma_1(t,u)\gamma_2(t,u)}},\notag\\
g(a,u):&=\frac{1-e^{-au}}{a}.\notag
\end{align}

\begin{lemma}\label{lem:aux}
For a measurable bounded function $\varphi:(0, \infty) \times \R \to \R$ and $s\ge t$, the function
\[
u(t, s, r, x) = \E_{r,x} \left[e^{-\int_0^{s-t}r_u du}\varphi(X_{s-t},r_{s-t})\right]
\]
has an explicit representation
\begin{multline}
\label{eq:numercom-6}
u(t, s, r, x)
=e^{-\mu(t,s)+\frac{1}{2}\beta^2\int_{t}^{s}g(\kappa,s-u)^2du}\int_{\R^2}\varphi\Big(x e^{L(t,s)+\sqrt{\gamma_1(t,s))}z},\ q(t,s)+y \sqrt{\gamma_2(t,s)}\Big)\\
e^{-\frac{1}{2(1-\tilde{\rho}(t,s)^2)}\left(z^2+ y^2-2\tilde{\rho}(t,s)zy\right)}\frac{1}{2\pi\sqrt{1-\tilde{\rho}(t,s)^2}}dzdy,
\end{multline}
where
\[
L(t,s)=\mu(t,s)-\frac{\sigma^2}{2}(s-t)-\int_t^{s}\left(\beta^2g(\kappa,s-u)^2+\rho\sigma
g(\kappa,s-u)\right)du.
\]
\end{lemma}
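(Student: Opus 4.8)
The plan is to reduce the computation to a finite-dimensional Gaussian integral, the only genuine nuisance being the random discount factor $e^{-\int_0^{s-t}r_u\,du}$, which will be removed by an exponential change of measure. Write $\tau:=s-t$. By \eqref{eq:X2} we have $\log X_\tau=\log x+\sigma B_\tau-\tfrac{\sigma^2}{2}\tau+\int_0^\tau r_u\,du$, and from the explicit Vasicek solution \eqref{eq:numer-R-ex} (with the clock reset to $0$), $r_u=re^{-\kappa u}+\theta(1-e^{-\kappa u})+\beta\int_0^u e^{-\kappa(u-v)}dW_v$. An application of the stochastic Fubini theorem then gives $\int_0^\tau r_u\,du=\mu(t,s)+\beta\int_0^\tau g(\kappa,\tau-v)\,dW_v$ and $r_\tau=\bigl(q(t,s)+\tfrac{\beta^2}{2}g(\kappa,\tau)^2\bigr)+\beta\int_0^\tau e^{-\kappa(\tau-v)}dW_v$, so the triple $\bigl(\int_0^\tau r_u\,du,\ \log X_\tau,\ r_\tau\bigr)$ is jointly Gaussian, being a linear image of $(B,W)$.

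Next I would strip off the discount factor. Since $\int_0^\tau r_u\,du=\mu(t,s)+\beta\int_0^\tau g(\kappa,\tau-v)\,dW_v$, the process
\[
Z:=\exp\!\Big(-\beta\!\int_0^\tau\! g(\kappa,\tau-v)\,dW_v-\tfrac{\beta^2}{2}\!\int_0^\tau\! g(\kappa,\tau-v)^2\,dv\Big)
\]
is an exponential martingale and $e^{-\int_0^\tau r_u\,du}=P(t,s)\,Z$ with $P(t,s)=\exp\bigl(-\mu(t,s)+\tfrac{\beta^2}{2}\int_t^s g(\kappa,s-u)^2\,du\bigr)$; this is precisely the scalar prefactor appearing in \eqref{eq:numercom-6}, and it is consistent with \eqref{eqn:bond} because $\log P(t,s)=-\E[\int_0^\tau r_u\,du]+\tfrac12\mathrm{Var}(\int_0^\tau r_u\,du)$. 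Under $d\tilde\P:=Z\,d\P$, Girsanov's theorem turns $\widetilde W_u:=W_u+\beta\int_0^u g(\kappa,\tau-v)\,dv$ and $\widetilde B_u:=B_u+\rho\beta\int_0^u g(\kappa,\tau-v)\,dv$ into correlated Brownian motions (correlation $\rho$), so that $u(t,s,r,x)=P(t,s)\,\tilde\E\bigl[\varphi(X_\tau,r_\tau)\bigr]$ (equivalently, this is passing to the $s$-forward measure), and $(\log X_\tau,r_\tau)$ is again jointly Gaussian under $\tilde\P$ with the same covariance but shifted means.

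It then remains to evaluate four deterministic quantities: the $\tilde\P$-means and variances of $\log X_\tau$ and $r_\tau$ and their covariance. Using $\tilde\E[B_\tau]=-\rho\beta\int_0^\tau g(\kappa,\tau-v)\,dv$, $\tilde\E[\int_0^\tau r_u\,du]=\mu(t,s)-\beta^2\int_0^\tau g(\kappa,\tau-v)^2\,dv$, and the elementary identity $\int_0^\tau g(\kappa,w)e^{-\kappa w}\,dw=\tfrac12 g(\kappa,\tau)^2$ (which produces the $-\tfrac{\beta^2}{2}g(\kappa,\tau)^2$ correction in $q$), together with $\int_0^\tau e^{-2\kappa w}\,dw=g(2\kappa,\tau)$ and $\int_0^\tau g(\kappa,w)\,dw=\kappa^{-1}(\tau-g(\kappa,\tau))$, one identifies $\tilde\E[\log X_\tau]=\log x+L(t,s)$, $\tilde\E[r_\tau]=q(t,s)$, and the entries $\gamma_1(t,s)$, $\gamma_2(t,s)$, $\tilde\rho(t,s)$ of the covariance structure in \eqref{eqn:explicit_pricing}. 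Writing $\tilde\E[\varphi(X_\tau,r_\tau)]$ as the integral of $\varphi$ against the bivariate normal density with these parameters and standardising via $z=(\log X_\tau-\log x-L(t,s))/\sqrt{\gamma_1(t,s)}$ and $y=(r_\tau-q(t,s))/\sqrt{\gamma_2(t,s)}$ then yields \eqref{eq:numercom-6}. No step is conceptually hard; the main obstacle is simply the careful bookkeeping of the Cameron--Martin shifts and of the deterministic integrals of $g(\kappa,\cdot)$ and $e^{-\kappa\cdot}$, the one subtlety being that the kernel $g(\kappa,\tau-\cdot)$ depends on the fixed horizon $\tau$, so the change of measure must be performed on $[0,\tau]$ with this deterministic integrand.
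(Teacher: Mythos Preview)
Your proposal is correct and follows essentially the same route as the paper: stochastic Fubini to express $\int_0^{\tau}r_u\,du$ as $\mu(t,s)$ plus a Wiener integral, a Girsanov change of measure (the $s$-forward measure) to absorb the discount factor into the prefactor $P(t,s)$, and then identification of $(\log X_\tau,r_\tau)$ as bivariate Gaussian under $\tilde\P$ with the stated moments. The only cosmetic difference is that the paper first decomposes $B=\rho W+\sqrt{1-\rho^2}\,Z$ with $Z\perp W$ before applying Girsanov to $W$ alone, whereas you apply Girsanov directly to the correlated pair $(B,W)$ and read off the induced drift $\rho\beta g(\kappa,\tau-\cdot)$ on $B$; the two are equivalent.
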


We defer the proof until the end of this section and apply the above lemma to derive formulae for $P(t, T)$, $v_e$ and $v_p$. Taking $\varphi\equiv 1$ in \eqref{eq:numercom-6}, we have
\begin{equation}
\label{eq:numercom-7}
P(t, s) 
= \E_r\left[e^{-\int_0^{s-t}r_sds}\right]
= e^{-\mu(t,s)+\frac{1}{2}\beta^2\int_{t}^{s}g(\kappa,s-u)^2du}
= e^{-\mu(t,s)+\frac{\beta^2}{2\kappa^2}\left(s-t-2g(\kappa,s-t)+g(2\kappa,s-t)\right)}.
\end{equation}
This also implies that
\begin{equation*}
\begin{aligned}
e^{L(t,s)}
&=e^{\mu(t,s)-\frac{1}{2}\int_t^{s}\beta^2g(\kappa,s-u)^2du}\ e^{-\frac{\sigma^2}{2}(s-t)-\frac{1}{2}\int_t^{s}\beta^2g(\kappa,s-u)^2du-\int_t^{s}\rho\sigma g(\kappa,s-u)du}
=\frac{1}{P(t,s)} e^{-\frac{\gamma_1(t,s)}{2}}.
\end{aligned}
\end{equation*}
Insert this and \eqref{eq:numercom-7} into \eqref{eq:numercom-6} and let
\[
z=\tilde{\rho}(t,s)y+\sqrt{1-\tilde{\rho}(t,s)^2}\hat{z}.
\]
This transforms the integral in \eqref{eq:numercom-6} into an integral of two independent Gaussian variables
\begin{multline}
\label{eq:numercom-8}
u(t,s,r,x)
=P(t,s)\int_{\R^2}\varphi\left(\frac{x}{P(t,s)} e^{-\frac{\gamma_1(t,s)}{2}+\sqrt{\gamma_1(t,s)}\left(\tilde{\rho}(t,s)y+\sqrt{1-\tilde{\rho}(t,s)^2}\hat{z}\right)},\ q(t,s)+\sqrt{\gamma_2(t,s)}y\right)\\
e^{-\frac{1}{2}({\hat z}^2+ y^2)}\frac{1}{2\pi}d\hat{z}dy.
\end{multline}
Now, letting $\varphi(x, r)=(K-x)^+$ and $\varphi(x,r)=Kr\mathds{1}_{\{x<b(s,r)\}}$, we obtain \eqref{eq:numer-2}-\eqref{eq:numer-3}.

\begin{proof}[Proof of Lemma \ref{lem:aux}]
The proof follows the lines of similar computations in the literature, see \cite{baxter1996financial} and \cite{detemple2002valuation}. Using the explicit expression of $r$ in \eqref{eq:numer-R-ex} and stochastic Fubini's theorem \citep[Theorem IV.64]{Protter}, we compute
\begin{align}\label{eq:numercom-1}
\int_{t}^s r_udu&=g(\kappa,s-t)r_t+\theta\left(s-t-g(\kappa,s-t)\right)+\beta\int_{t}^{s}g(\kappa,s-u)dW_u\\ 
&=\mu(t,s)+\beta\int_{t}^{s}g(\kappa,s-u)dW_u.\nonumber 
\end{align}
Define $Z_t$ implicitly by
\[
dB_t=\rho dW_t+\sqrt{1-\rho^2}dZ_t.
\]
Then $Z$ is a Brownian motion that is independent of $W$. Using the explicit expression of $X$ and \eqref{eq:numercom-1}, we write $u$ as
\begin{equation}
\label{eq:numercom-2}
\begin{aligned}
u(t,s,r,x)
&=\E_{r}\bigg[e^{-\mu(t,s)-\beta\int_{0}^{s-t}g(\kappa,s-t-u)dW_u}\varphi\Big(x \exp\Big\{\mu(t,s)-\frac{\sigma^2}{2}(s-t)\\
&\hspace{50pt}+\beta\int_0^{s-t}g(\kappa,s-t-u)dW_u+\int_{0}^{s-t}\sigma(\rho dW_u+\sqrt{1-\rho^2}dZ_u)\Big\}, r_{s-t}\Big)\bigg].
\end{aligned}
\end{equation}
Define a new measure $\widetilde{\P}$ by the Radon-Nikodym density
\[
\frac{d\widetilde{\P}}{d\P}:=e^{-\beta\int_0^{s-t}g(\kappa,s-t-u)dW_u-\frac{1}{2}\int_0^{s-t}\beta^2g(\kappa,s-t-u)^2du}.
\]
The process $\widetilde{W}$ given by
\[
d\widetilde{W}_t=d W_t+\beta g(\kappa,s-t)dt
\]
is a Brownian motion under $\widetilde{\P}$. We write the explicit formula \eqref{eq:numercom-1} for $r$ in terms of $\widetilde{W}$:
\[
r_s=r_te^{-(s-t)\kappa}+\theta(1-e^{-(s-t)\kappa})-\frac{\beta^2}{2}g(\kappa,s-t)^2+\beta\int_{t}^{s}e^{-(s-u)\kappa}d\widetilde{W}_u.
\]
Denoting by $\widetilde{\E}$ the expectation under $\widetilde{P}$, we obtain from \eqref{eq:numercom-2}
\begin{equation}
\label{eq:numercom-4}
\begin{aligned}
u(t,s,r,x)
&=e^{-\mu(t,s)+\frac{1}{2}\beta^2\int_{t}^{s}g(\kappa,s-u)^2du}\widetilde{\E}\Big[\varphi\Big(x e^{L(t,s)+A(t,s)},\ q(t,s)+Y(t,s)\Big)\Big],
\end{aligned}
\end{equation}
where
\begin{align*}
A(t,s):&=\int_{0}^{s-t}(\rho\sigma+\beta g(\kappa,s-t-u))d\widetilde{W}_u+\int_0^{s-t}\sigma\sqrt{1-\rho^2}dZ_u,\\
Y(t,s):&=\beta\int_{0}^{s-t}e^{-(s-t-u)\kappa}d\widetilde{W}_u.
\end{align*}
For each fixed $t < s$, the random vector $(A,Y)$ is multivariate Gaussian under $\widetilde \P$ with zero mean, and variance and covariance given by
\begin{align*}
&Var_{\widetilde \P}(A(t,s))=\gamma_1(t,s),\qquad Var_{\widetilde \P}(Y(t,s))=\gamma_2(t,s),\\
&Cov_{\widetilde \P}(A(t,s),Y(t,T))=\tilde{\rho}(t,s)\sqrt{\gamma_1(t,s)\gamma_2(t,s)}.
\end{align*}
Hence, we have an explicit integral representation of \eqref{eq:numercom-4}
\begin{equation}
\label{eq:numercom-5}
\begin{aligned}
u(t,s,r,x)=
&=e^{-\mu(t,s)+\frac{1}{2}\beta^2\int_{t}^{s}g(\kappa,s-u)^2du}\int_{\R^2}\varphi\Big(x e^{L(t,s)+z},\quad q(t,s)+y\Big)\\
&\hspace{30pt} \times e^{-\frac{1}{2(1-\tilde{\rho}(t,s)^2)}\left(\frac{z^2}{\gamma_1(t,s)}+ \frac{y^2}{\gamma_2(t,s)}-\frac{2\tilde{\rho}(t,s)zy}{\sqrt{\gamma_1(t,s)\gamma_2(t,s)}}\right)}\frac{1}{2\pi\sqrt{\gamma_1(t,s)\gamma_2(t,s)}\sqrt{1-\tilde{\rho}(t,s)^2}}dzdy.
\end{aligned}
\end{equation}
A change of variable yields \eqref{eq:numercom-6}.
\end{proof}

\section*{Acknowledgements}
C. Cai gratefully acknowledges support by China Scholarship Council. T. De Angelis gratefully acknowledges support by EPSRC Grant EP/R021201/1.
All authors would like to thank Sheng Wang for drawing our attention to gaps in the proofs of Propositions 3.3 and 3.11.

\bibliographystyle{apalike} 
\bibliography{reference}

\end{document}